\newtheorem{theorem}{Theorem}
\newtheorem{lemma}{Lemma}
\newtheorem{fact}{Fact}
\newtheorem{definition}{Definition}
\newtheorem{corollary}{Corollary}
\newtheorem{proposition}{Proposition}
\newtheorem{remark}{Remark}
\newcommand{\defeq}{:=}
\newcommand{\norm}[1]{\left\lVert#1\right\rVert}
\newcommand{\norms}[1]{\lVert#1\rVert}
\newcommand{\inprod}[2]{\left\langle#1, #2\right\rangle}
\newcommand{\inprods}[2]{\langle#1, #2\rangle}
\newcommand{\eps}{\epsilon}
\newcommand{\lam}{\lambda}
\newcommand{\sig}{\sigma}
\newcommand{\argmin}{\textup{argmin}} 
\newcommand{\R}{\mathbb{R}}
\newcommand{\N}{\mathbb{N}}
\newcommand{\half}{\frac{1}{2}}
\newcommand{\E}{\mathbb{E}}
\newcommand{\proj}{\boldsymbol{\Pi}}
\newcommand{\Nor}{\mathcal{N}}
\newcommand{\set}{\mathcal{K}}
\newcommand{\ball}{\mathbb{B}}
\newcommand{\id}{\mathbf{I}}
\newcommand{\dd}{\textup{d}}
\definecolor{burntorange}{rgb}{0.8, 0.33, 0.0}
\newcommand{\kjtian}[1]{\textcolor{burntorange}{\textbf{kjtian:} #1}}
\newcommand{\tO}{\widetilde{O}}
\newcommand{\Par}[1]{\left(#1\right)}
\newcommand{\Brack}[1]{\left[#1\right]}
\newcommand{\Brace}[1]{\left\{#1\right\}}
\newcommand{\bas}[1]{\begin{align*}#1\end{align*}}
\newcommand{\ba}[1]{\begin{align}#1\end{align}}
\newcommand{\bbb}[1]{\left[#1\right]}
\newcommand{\bb}[1]{\left(#1\right)}
\newcommand{\Abs}[1]{\left|#1\right|}
\newcommand{\alg}{\mathcal{A}}
\newcommand{\ind}{\mathbb{I}}
\newcommand{\1}{\mathbf{1}}
\newcommand{\0}{\mathbf{0}}
\newcommand{\vg}{\mathbf{g}}
\newcommand{\vu}{\mathbf{u}}
\newcommand{\vv}{\mathbf{v}}
\newcommand{\vx}{\mathbf{x}}
\newcommand{\vy}{\mathbf{y}}
\newcommand{\vz}{\mathbf{z}}
\newcommand{\vxi}{\boldsymbol{\xi}}
\newcommand{\vmu}{\boldsymbol{\mu}}
\newcommand{\calA}{\mathcal{A}}
\newcommand{\calD}{\mathcal{D}}
\newcommand{\calE}{\mathcal{E}}
\newcommand{\calI}{\mathcal{I}}
\newcommand{\calN}{\mathcal{N}}
\newcommand{\calX}{\mathcal{X}}
\newcommand{\qcon}{s^{\mathsf{conc}}}
\newcommand{\qcont}{\tilde{s}^{\mathsf{conc}}}
\newcommand{\AboTh}{\mathsf{AboveThreshold}} 
\newcommand{\reals}{\mathbb{R}} 
\newcommand{\Unif}{\mathsf{Unif}} 
\newcommand{\Prob}{\mathbb{P}} 
\newcommand{\Lap}{\mathsf{Laplace}} 
\newcommand{\thresh}{\mathsf{thresh}}
\newcommand{\vxs}{\vx_\star}
\newcommand{\fcd}{f_{\calD}}
\newcommand{\event}{\calE}
\newcommand{\xset}{\calX}
\newcommand{\bvx}{\bar{\vx}}
\newcommand{\Bern}{\mathsf{Bern}}
\newcommand{\hmu}{\hat{\mu}}
\newcommand{\FRF}{\mathsf{FastRadius}}
\newcommand{\htau}{\hat{\tau}}
\newcommand{\hr}{\hat{r}}
\newcommand{\FC}{\mathsf{FastCenter}}
\newcommand{\BL}{\mathsf{BoundedLaplace}}
\newcommand{\vtheta}{\pmb{\theta}}
\newcommand{\hvx}{\hat{\vx}}
\newcommand{\balg}{\overline{\mathcal{A}}}
\newcommand{\StableDPSGD}{\mathsf{StableDPSGD}}
\newcommand{\DPGD}{\mathsf{DPGD}}
\newcommand{\FixedOrderDPSGD}{\mathsf{FixedOrderDPSGD}}
\newcommand{\tvg}{\tilde{\vg}}
\newcommand{\gaussiancluster}{\mathsf{GaussianCluster}}
\newcommand{\heavytailed}{\mathsf{HeavyTailed}}
\title{Private Geometric Median in Nearly-Linear Time}
\author{
Syamantak Kumar\thanks{University of Texas at Austin, \texttt{syamantak@utexas.edu}} \and 
Daogao Liu\thanks{Google Research, \texttt{liudaogao@gmail.com}} \and 
Kevin Tian\thanks{University of Texas at Austin, \texttt{kjtian@cs.utexas.edu}} \and 
Chutong Yang\thanks{University of Texas at Austin, \texttt{cyang98@utexas.edu}}
}
\date{}
\begin{document}

\maketitle

\begin{abstract}
Estimating the geometric median of a dataset is a robust counterpart to mean estimation, and is a fundamental problem in computational geometry. Recently, \cite{haghifam2024private} gave an $(\epsilon, \delta)$-differentially private algorithm obtaining an $\alpha$-multiplicative approximation to the geometric median objective, $\frac 1 n \sum_{i \in [n]} \|\cdot - \mathbf{x}_i\|$, given a dataset $\calD \defeq \{\mathbf{x}_i\}_{i \in [n]} \subset \mathbb{R}^d$. Their algorithm requires $n \gtrsim \sqrt d \cdot \frac 1 {\alpha\epsilon}$ samples, which they prove is information-theoretically optimal. This result is surprising because its error scales with the \emph{effective radius} of $\calD$ (i.e., of a ball capturing most points), rather than the worst-case radius. We give an improved algorithm that obtains the same approximation quality, also using $n \gtrsim \sqrt d \cdot \frac 1 {\alpha\epsilon}$ samples, but in time $\widetilde{O}(nd + \frac d {\alpha^2})$. Our runtime is nearly-linear, plus the cost of the cheapest non-private first-order method due to \cite{cohen2016geometric}. To achieve our results, we use subsampling and geometric aggregation tools inspired by FriendlyCore \cite{tsfadia2022friendlycore} to speed up the ``warm start'' component of the \cite{haghifam2024private} algorithm, combined with a careful custom analysis of DP-SGD's sensitivity for the geometric median objective.

\end{abstract}

\section{Introduction}\label{sec:introduction}

The \emph{geometric median} problem, also known as the Fermat-Weber problem, is one of the oldest problems in computational geometry. In this problem, we are given a dataset $\calD=\{\vx_i\}_{i\in [n]}\subset \R^d$, and our goal is to 
find a point $\vxs \in \R^d$ that minimizes the average Euclidean distance to points in the dataset:
\begin{equation}\label{eq:gm_intro}
    \vxs \in \arg\min_{\vx \in\R^d} f_\calD(\vx), \text{ where } f_\calD(\vx) \defeq \frac 1 n \sum_{i\in[n]}\|\vx -\vx_i\|.
\end{equation}
This problem has received widespread interest due to its applications in high-dimensional statistics. In particular, the geometric median of a dataset $\calD$ enjoys robustness properties that the mean (i.e., $\frac 1 n \sum_{i \in [n]} \vx_i$, the minimizer of $\frac 1 n \sum_{i \in [n]} \norm{\vx - \vx_i}^2$) does not. For example, it is known (cf.\ Lemma~\ref{lm:property_of_gm}) that if greater than half of $\calD$ lies within a distance $r$ of some $\bvx \in \R^d$, then the geometric median lies within $O(r)$ of $\bvx$. Thus, the geometric median provides strong estimation guarantees even when $\calD$ contains outliers. This is in contrast to simpler estimators such as the mean, which can be arbitrarily corrupted by a single outlier. As a result, studying the properties and computational aspects of the geometric median has a long history, see e.g., \cite{weber1929theory, LopuhaaR91} for some famous examples. 

In this paper, we provide improved algorithms for estimating \eqref{eq:gm_intro} subject to $(\eps, \delta)$-differential privacy (DP, Definition~\ref{def:dp}), the de facto notion of provable privacy in modern machine learning. Privately computing the geometric median naturally fits into a recent line of work on designing DP algorithms in the presence of outliers. To explain the challenge of such problems, the definition of DP implies that the privacy-preserving guarantee must hold for \emph{worst-case} datasets. This stringent definition affords DP a variety of desirable properties, most notably \emph{composition} of private mechanisms (cf.\ \cite{dr14}, Section 3.5). However, it also begets challenges: for example, estimating the empirical mean of $\calD$ subject to $(\eps,\delta)$-DP necessarily results in error scaling $\propto R$, the diameter of the dataset (cf.\ Section 5, \cite{BassilyST14}). Moreover, the worst-case nature of DP is at odds with typical \emph{average-case} machine learning settings, where most (or all) of $\calD$ is drawn from a distribution that we wish to learn about. From an algorithm design standpoint, the question follows: how do we design methods that provide privacy guarantees for worst-case data, but also yield improved utility guarantees for (mostly) average-case data? 

Such questions have been successfully addressed for various statistical tasks in recent work, including parameter estimation \cite{barber2014privacy, karwa2017finite, bun2019private, du2020differentially, BiswasDKU20, brown2021covariance, ashtiani2022private, LiuKJO22, kuditipudi2023pretty, BrownHS23}, clustering \cite{NissimRS07, NissimSV16, CohenKMST21, tsfadia2022friendlycore}, and more. However, existing approaches for estimating \eqref{eq:gm_intro} (even non-privately) are based on iterative optimization methods, as the geometric median does not admit a simple, closed-form solution. Much of the DP optimization toolkit is exactly plagued by the aforementioned ``worst-case sensitivity'' issues, e.g., lower bounds for general stochastic optimization problems again scale with the domain size. This is troubling in the context of \eqref{eq:gm_intro}, because a major appeal of the geometric median is its robustness: its error should not be significantly affected by any small subset of the data. Privately estimating the geometric median thus poses an interesting technical challenge, beyond its potential appeal as a subroutine in downstream robust algorithms.

To explain the distinction between worst-case and average-case error rates in the context of \eqref{eq:gm_intro}, we introduce the following helpful notation: for all quantiles $\tau \in [0, 1]$, we let
\begin{equation}\label{eq:quantile_intro}r^{(\tau)} \defeq \arg\min_{r \ge 0}\Brace{\sum_{i \in [n]} \ind_{\norm{\vx_i - \vxs} \le r} \ge \tau n}, \text{ where } \vxs \defeq \arg\min_{\vx \in \R^d} \frac 1 n \sum_{i \in [n]} \norm{\vx - \vx_i}, \end{equation}
when $\calD = \{\vx_i\}_{i \in [n]} \subset \R^d$ is clear from context. In other words, $r^{(\tau)}$ is the smallest radius describing a ball around the geometric median $\vxs$ containing at least $\tau n$ points in $\calD$. We also use $R$ to denote an a priori overall domain size bound, where we are guaranteed that $\calD \subset \ball^d(R)$. Note that in general, it is possible for, e.g., $r^{(0.9)} \ll R$ if $\approx 10\%$ of $\calD$ consists of outliers with atypical norms. Due to the robust nature of the geometric median (i.e., the aforementioned Lemma~\ref{lm:property_of_gm}), a natural target is estimation error scaling with the ``effective radius'' $r^{(\tau)}$ for some quantile $\tau \in (0.5, 1)$. This is a much stronger guarantee than the error rates $\propto R$ that typical DP optimization methods  give.

Because a simple argument (Lemma~\ref{lem:opt_lb}) shows that $r^{(\tau)} = O(f_{\calD}(\vxs))$ for all $\tau < 1$, in this introduction our goal will be to approximate the minimizer of \eqref{eq:gm_intro} to additive error $\alpha f_{\calD}(\vxs)$ for some $\alpha \in (0, 1)$, i.e., to give $\alpha$-\emph{multiplicative error} guarantees on optimizing $f_{\calD}$.\footnote{Our results, as well as those of \cite{haghifam2024private}, in fact give stronger additive error bounds of $\alpha r^{(\tau)}$ for any fixed $\tau \in (0.5, 1)$.} Again, datasets with outliers may have $f_{\calD}(\vxs) \ll R$, so this goal is beyond the reach of na\"ively applying DP optimization methods.

In a recent exciting work, \cite{haghifam2024private} bypassed this obstacle and obtained such private multiplicative approximations to the geometric median, and with near-optimal sample complexity. Assuming that $\calD$ has size $n \gtrsim \sqrt{d} \cdot \frac 1 {\alpha\eps}$,\footnote{In this introduction only, we use $\tO, \lesssim, \gtrsim$ to hide polylogarithmic factors in problem parameters, i.e., $d$, $\frac 1 \alpha$, $\frac 1 \eps$, $\frac 1 \delta$, and $\frac R r$, where $\calD \subseteq \ball^d(R)$ and $r \le r^{(0.9)}$. Our formal theorem statements explicitly state our dependences on all parameters.} \cite{haghifam2024private} gave two algorithms for estimating \eqref{eq:gm_intro} to $\alpha$-multiplicative error (cf.\ Appendix~\ref{app:hsu_discuss}). They also proved a matching lower bound, showing that this many samples is information-theoretically necessary.\footnote{Intuitively, we require $\alpha \approx d^{-1/2}$ to obtain nontrivial mean estimation when $\calD$ consists of i.i.d.\ Gaussian data (as a typical radius is $\approx \sqrt d$), matching known sample complexity lower bounds of $\approx \frac d \eps$ for Gaussian mean estimation \cite{KamathLSU19}.}
From both a theoretical and practical perspective, the main outstanding question left by \cite{haghifam2024private} is that of computational efficiency: in particular, the \cite{haghifam2024private} algorithms ran in time $\tO(n^2 d + n^3 \eps^2)$ or $\tO(n^2 d + nd^2 + d^{4.372})$. This leaves a significant gap between algorithms for privately solving \eqref{eq:gm_intro}, and their counterparts in the non-private setting, where \cite{cohen2016geometric} showed that \eqref{eq:gm_intro} could be approximated to $\alpha$-multiplicative error in nearly-linear time $\tO(\min(nd, \frac d {\alpha^2}))$.

\subsection{Our results}\label{ssec:results}

Our main contribution is a faster algorithm for privately approximating \eqref{eq:gm_intro} to $\alpha$-multiplicative error.

\begin{theorem}[informal, see Theorem~\ref{thm:boost}]\label{thm:intro_boost}
Let $\calD = \{\vx_i\}_{i \in [n]} \subset \ball^d(R)$ for $R > 0$, $0 < r \le r^{(0.9)}$, and $(\alpha, \eps, \delta) \in [0, 1]^3$. There is an $(\eps, \delta)$-DP algorithm that returns $\hvx$ such that with probability $\ge 1 - \delta$, $f_{\calD}(\hvx) \le (1 + \alpha)f_{\calD}(\vxs)$,
assuming $n \gtrsim \frac{\sqrt d}{\alpha\eps}$. The algorithm runs in time $\tO(nd + \frac d {\alpha^2})$.
\end{theorem}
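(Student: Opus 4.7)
My plan is to follow the two-phase ``warm start then refine'' template introduced by \cite{haghifam2024private}: first produce $\hvx_0$ with $\norm{\hvx_0 - \vxs} = O(r^{(0.9)})$ in nearly-linear time, then run a private refinement whose error depends on $r^{(0.9)}$ rather than the worst-case radius $R$. The speedup over \cite{haghifam2024private} comes from replacing their $\tO(n^2 d)$ warm start with a FriendlyCore-inspired aggregation over random subsamples, and the refinement is a projected DP-SGD whose per-iteration cost is $O(d)$.

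\textbf{Warm start.} I would randomly partition $\calD$ into $k$ batches (with $k$ chosen so that each batch is a uniformly random sub-sample of $\calD$ large enough to concentrate), and on each batch run the non-private algorithm of \cite{cohen2016geometric} to only coarse (constant) accuracy, producing candidates $\{\vy_j\}_{j \in [k]}$. Because of the robustness property of the geometric median (Lemma~\ref{lm:property_of_gm}) and the concentration of $r^{(0.9)}$ under uniform sub-sampling, a large majority of the $\vy_j$ will lie in a ball of radius $O(r^{(0.9)})$ around $\vxs$. Feeding $\{\vy_j\}$ into a FriendlyCore-style private aggregator (\cite{tsfadia2022friendlycore}) then outputs $\hvx_0$ within $O(r^{(0.9)})$ of this majority cluster, paying only a mild privacy cost. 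The batch optimizations total $\tO(nd)$, and the aggregator sees only $k$ points in $\R^d$, so its runtime is absorbed into $\tO(nd)$.

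\textbf{Refinement by DP-SGD.} Starting from $\hvx_0$, I would run projected DP-SGD on $f_{\calD}$ restricted to the Euclidean ball $\ball^d(\hvx_0, O(r^{(0.9)}))$. At each step, sample a random index $i \in [n]$, form the unit-norm gradient $\vg_i(\vx) = (\vx - \vx_i)/\norm{\vx - \vx_i}$, add Gaussian noise of an appropriate scale, take a gradient step, and project back. Because $\norm{\vg_i} = 1$ identically, the per-step $\ell_2$-sensitivity is $O(1/n)$ \emph{independent of $R$}, which is what enables the dependence on $r^{(0.9)}$ instead. Combining moments-accountant privacy analysis with a standard constrained stochastic convex optimization bound, the excess error scales as $r^{(0.9)} \cdot \sqrt{d}/(n\eps)$ after $T = \tO(1/\alpha^2)$ iterations; invoking the lower bound $f_{\calD}(\vxs) = \Omega(r^{(0.9)})$ from Lemma~\ref{lem:opt_lb} and the sample condition $n \gtrsim \sqrt{d}/(\alpha\eps)$ yields $(1+\alpha)$-multiplicative error. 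Each iteration is $O(d)$, contributing $\tO(d/\alpha^2)$ to the runtime.

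\textbf{Main obstacle.} The most delicate step is the custom sensitivity/utility analysis for DP-SGD. Generic DP-SGD bounds for constrained convex optimization are often stated in a form that would pay either an $R$-dependent clip level or an extra $\sqrt d$ factor from mismatched noise scaling, either of which would break either the runtime or the $n \gtrsim \sqrt d/(\alpha\eps)$ sample bound. The crux is to exploit two structural features simultaneously: the gradients of $f_{\calD}$ are already unit-norm, so no clipping bias is introduced, and the warm start plus projection confines every iterate to a ball of radius $O(r^{(0.9)})$, so the effective Lipschitz/radius product controlling excess risk is $r^{(0.9)}$ rather than $R$. Carefully composing privacy across $T = \tO(1/\alpha^2)$ (potentially Poisson-sub-sampled) iterations, while simultaneously ensuring that the iterates never leave the warm-start ball except with privacy-negligible probability, is where I expect the technical heart of the argument to lie, and is what ultimately matches the information-theoretic lower bound of \cite{haghifam2024private}.
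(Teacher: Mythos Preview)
Your two-phase outline matches the paper's, but the refinement step has a genuine gap that is precisely the technical heart of the result.

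\textbf{The sensitivity claim is incorrect.} You write that with single-sample gradients the ``per-step $\ell_2$-sensitivity is $O(1/n)$.'' It is not: if you sample index $i$ and the neighboring dataset differs exactly at $i$, the sampled gradient can change by $2$ in norm. What you may be thinking of is privacy \emph{amplification} by subsampling, which multiplies the privacy cost (not the sensitivity) by roughly $1/n$. But composing that amplified per-step guarantee over $T=\tO(1/\alpha^2)$ iterations does not, for a generic non-smooth Lipschitz objective, yield the optimal ERM error $\hr\sqrt d/(n\eps)$ with only $\tO(1/\alpha^2)$ gradient queries. The paper explicitly flags this: nearly-linear time DP-ERM solvers are known only for sufficiently smooth objectives \cite{FeldmanKT20} or when $n\gtrsim d^2$ \cite{CarmonJJLLST23}, and $f_{\calD}$ is non-smooth.

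\textbf{What the paper actually does.} The paper does \emph{not} add noise at every SGD step. Instead it uses the phased localization scheme of \cite{FeldmanKT20}: run $T^{(k)}$ steps of (noiseless) single-sample SGD, then add Gaussian noise once to the averaged iterate. For this to be private, one must bound the sensitivity of the averaged iterate across the whole phase. The key structural lemma (Lemma~\ref{lem:same_sgd_step}) shows that a gradient step on the \emph{same} sample function $\norm{\cdot-\vx_i}$ is contractive on coupled iterates $\vz,\vz'$ unless both were already within $O(\eta)$ of $\vx_i$; inductively (Corollary~\ref{cor:phase_stable}) this caps the iterate sensitivity at $(2m+1)\eta$, where $m$ is the number of times the differing index is sampled. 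This contractivity is special to the geometric median objective and is what lets the FKT framework apply despite non-smoothness. Your ``Main obstacle'' paragraph identifies that the DP-SGD analysis is delicate, but the two features you list (unit-norm gradients, bounded domain) hold for any $1$-Lipschitz problem after a warm start and do not by themselves yield the runtime.

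\textbf{Minor point on the warm start.} Your batch-then-FriendlyCore idea is plausible but differs from the paper, which first privately estimates a radius $\hr$ via subsampled $\AboTh$ (Section~\ref{ssec:radius}) and then runs a subsampled FriendlyCore-style weighted average at that radius (Section~\ref{ssec:center}). FriendlyCore requires a radius parameter, which your sketch does not supply; this is not fatal, but it is a gap.
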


To briefly explain Theorem~\ref{thm:intro_boost}'s statement, it uses a priori knowledge of $0 < r < R$ such that $R$ upper bounds the domain size of $\calD$, and $r$ lower bounds the ``effective radius'' $r^{(0.9)}$. However, its runtime only depends polylogarithmically on the aspect ratio $\frac R r$, rather than polynomially (as na\"ive DP optimization methods would); we also remark that our sample complexity is independent of $\frac R r$. 

The runtime of Theorem~\ref{thm:intro_boost} is nearly-linear in the regime $n \gtrsim \frac 1 {\alpha^2}$ (e.g., if $\sqrt d \cdot \frac 1 \eps \gtrsim \frac 1 \alpha$), but more generally it does incur an additive overhead of $\frac d {\alpha^2}$. This overhead matches the fastest non-private first-order method for approximating \eqref{eq:gm_intro} to $\alpha$-multiplicative error, due to \cite{cohen2016geometric}. We note that \cite{cohen2016geometric} also gave a custom second-order interior-point method, that non-privately solves \eqref{eq:gm_intro} in time $\tO(nd)$, i.e., with polylogarithmic dependence on $\frac 1 \alpha$. We leave removing this additive runtime term in the DP setting, or proving this is impossible in concrete query models, as a challenging question for future work.

Our algorithm follows a roadmap given by \cite{haghifam2024private}, who split their algorithm into two phases: an initial ``warm start'' phase that computes an $O(1)$-multiplicative approximation of the geometric median, and a secondary ``boosting'' phase that uses iterative optimization methods to improve the warm start to an $\alpha$-multiplicative approximation. The role of the warm start is to improve the domain size of the boosting phase to scale with the effective radius. However, both the warm start and the boosting phases of \cite{haghifam2024private} required superlinear $\approx n^2 d$ time. Our improvement to the warm start phase of the \cite{haghifam2024private} is quite simple, and may be of independent interest, so we provide a self-contained statement here.

\begin{theorem}[informal, see Theorem~\ref{thm:constant_factor}]
Let $\calD = \{\vx_i\}_{i \in [n]} \subset \ball^d(R)$ for $R > 0$, $0 < r \le r^{(0.9)}$, and $(\eps, \delta) \in [0, 1]^2$. There is an $(\eps, \delta)$-DP algorithm that returns $\hvx$ such that with probability $\ge 1 - \delta$,
$f_{\calD}(\hvx)  = O(f_{\calD}(\vxs))$, 
assuming $n \gtrsim \frac{\sqrt d}{\eps}$. The algorithm runs in time $\tO(nd)$.
\end{theorem}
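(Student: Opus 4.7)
The plan is to replace the $\tO(n^2 d)$ warm-start of \cite{haghifam2024private} with a two-stage scheme: partition the dataset into disjoint batches, compute a non-private constant-factor geometric median on each batch, and then privately aggregate the resulting batch estimates using a FriendlyCore-style mechanism \cite{tsfadia2022friendlycore}. Batching both reduces the number of points fed into the expensive private aggregation step and ensures that each batch estimate is already a robust proxy for $\vxs$ thanks to the majority-robustness property of the geometric median.

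Concretely, I would randomly partition $\calD$ into $T \asymp \sqrt{d}/\eps$ disjoint batches of size $m = n/T$. Since at least $90\%$ of $\calD$ lies within distance $r^{(0.9)}$ of $\vxs$, a Chernoff bound implies that every batch, with high probability, has a strict majority of its points within this radius, provided $m \gtrsim \log(T/\delta)$. Invoking Lemma~\ref{lm:property_of_gm} per batch, the geometric median $\vy_t$ of each batch satisfies $\norm{\vy_t - \vxs} = O(r^{(0.9)})$. Computing each $\vy_t$ to constant multiplicative accuracy via \cite{cohen2016geometric} costs $\tO(md)$, so all batch medians cost $\tO(nd)$ in total. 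The FriendlyCore-style mechanism then declares two estimates to be ``friends'' if they lie within distance $O(r)$ (using the a priori bound $r \le r^{(0.9)}$) and privately outputs a noised centroid of the resulting friendly subset. Since disjoint batching means changing one point of $\calD$ alters at most one $\vy_t$, the batch-level sensitivity is $1$; Gaussian noise of standard deviation $O(r^{(0.9)} \sqrt{d}/(T\eps))$ is thus sufficient for $(\eps,\delta)$-DP, which is $O(r^{(0.9)})$ precisely when $T \gtrsim \sqrt{d}/\eps$. Combined with $r^{(0.9)} = O(f_\calD(\vxs))$ from Lemma~\ref{lem:opt_lb}, this yields the claimed $f_\calD(\hvx) = O(f_\calD(\vxs))$.

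The main obstacle is calibrating $T$ against three competing constraints: $T$ must be small enough that each batch size $m = n/T$ supports Chernoff concentration of close points; large enough that the isotropic Gaussian noise in the private aggregation shrinks below $r^{(0.9)}$; and the private aggregation itself must run in $\tO(nd)$ time rather than the naive $\tO(T^2 d)$ cost of all pairwise friendship tests. The third constraint likely requires avoiding explicit pairwise distance computations, for example by using a preliminary noisy centroid to serve as a fixed reference point so that friendship is tested against the reference in $\tO(Td)$ total time, or by privately quantizing the batch estimates via a coarse-to-fine aggregation driven by the known aspect ratio $R/r$, which is what produces the polylogarithmic factors hidden in $\tO$.
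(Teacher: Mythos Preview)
Your batching-then-aggregating scheme is a genuinely different route from the paper's. The paper works directly on the raw dataset in two steps: first an $\AboTh$ search (Algorithm~\ref{alg:frf}, with subsampled concentration scores) privately estimates a radius $\hat r \in [\frac{1}{4}r^{(0.75)}, 4r^{(0.9)}]$; then a weighted FriendlyCore variant (Algorithm~\ref{alg:fc}) with threshold $2\hat r$ is applied to the original $n$ points, again using subsampled neighbor counts to keep the cost $\tO(nd)$. Your idea of first reducing to $T \asymp \sqrt d/\eps$ batch medians is appealing because Lemma~\ref{lm:property_of_gm} guarantees each $\vy_t$ is already within $O(r^{(0.9)})$ of $\vxs$, and disjoint batching makes the batch-level neighbor relation clean (one input point affects exactly one $\vy_t$).

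However, there is a real gap in the choice of friendship threshold. You set it to $O(r)$, citing the a priori bound $r \le r^{(0.9)}$, but the assumption is only an inequality: $r$ may be arbitrarily smaller than $r^{(0.9)}$. Your batch medians are guaranteed (with high probability) to lie within $O(r^{(0.9)})$ of each other, not $O(r)$; if $r \ll r^{(0.9)}$, no pair will be declared friends and FriendlyCore fails on utility. Conversely, the only other scale you know is $R$, but a threshold of $R$ makes the FriendlyCore sensitivity $\Theta(R/T)$ and hence the Gaussian noise norm $\Theta(R)$, destroying the guarantee. So you cannot avoid privately estimating the effective radius first --- this is precisely the role of $\FRF$ in the paper, and your scheme would need an analogous $\AboTh$ search over the geometric grid $r, 2r, \ldots, R$ (applied either to the raw data or, more naturally in your setup, to concentration scores among the batch medians themselves). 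You gesture at ``coarse-to-fine aggregation driven by $R/r$'' in your last paragraph, but frame it only as a runtime fix for the $O(T^2 d)$ friendship cost; it is needed for \emph{correctness}, not just speed. Once a radius-estimation step is inserted, the runtime concern also dissolves: you can subsample friendship tests among the $T$ batch medians exactly as the paper does in $\FC$, giving $\tO(Td) = \tO(d^{3/2}/\eps) \le \tO(nd)$ under the assumed sample complexity.
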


\subsection{Our techniques}

As discussed previously, our algorithm employs a similar framework as \cite{haghifam2024private}. It is convenient to further split the warm start phase of the algorithm into two parts: finding an estimate $\hr$ of the effective radius of $\calD$, and finding an approximate centerpoint at distance $O(\hr)$ from the geometric median $\vxs$.

\paragraph{Radius estimation.} Our radius estimation algorithm is almost identical to that in \cite{haghifam2024private}, Section 2.1, which uses the sparse vector technique (cf.\ Lemma~\ref{thm:Above_Threshold}) to detect the first time an estimate $\hr$ is such that most points have $\ge \frac 3 4$ of $\calD$ at a distance of $\approx \hr$. The estimate $\hr$ is geometrically updated over a grid of size $O(\log(\frac R r))$. Na\"ively implemented, this strategy takes $\gtrsim n^2 d$ time due to the need for pairwise distance comparisons (cf.\ Appendix~\ref{app:hsu_discuss}); even if dimesionality reduction techniques are used, this step appears to require $\Omega(n^2)$ time. We make a simple observation that a random sample of $\approx \log(\frac 1 \delta)$ points from $\calD$ is enough to determine whether a given point has $\gg \beta$ neighbors, or $\ll \gamma$, for appropriate (constant) quantile thresholds $\beta, \gamma$, which is enough to obtain an $\tO(nd)$ runtime.

\paragraph{Centerpoint estimation.} Our centerpoint estimation step departs from \cite{haghifam2024private}, Section 2.2, who analyzed a custom variant of DP gradient descent with geometrically-decaying step sizes. We make the simple observation that directly applying the FriendlyCore algorithm of \cite{haghifam2024private} yields the same result. However, the standard implementation of FriendlyCore again requires $\Omega(n^2)$ time to estimate weights for each data point. We again show that FriendlyCore can be sped up to run in $\tO(nd)$ time (independently of $\frac R r$) via weights estimated through subsampling. Our privacy proof of this subsampled variant is subtle, and based on an argument (Lemma~\ref{lem:dp_close}) that couples our algorithm to an idealized algorithm that never fails to be private. We use this to account for the privacy loss due to the failure of our subsampling, i.e., if the estimates are inaccurate. We note that the \cite{haghifam2024private} algorithm for this step already ran in nearly-linear $\approx nd\log(\frac R r)$ time, so we obtain an asymptotic improvement only if $\frac R r$ is large.

\paragraph{Boosting.} The most technically novel part of our algorithm is in the boosting phase, which takes as input a radius and centerpoint estimate from the previous steps, and outputs an $\alpha$-multiplicative approximation to \eqref{eq:gm_intro}. Like \cite{haghifam2024private}, we use iterative optimization methods to implement this phase. However, a major bottleneck to a faster algorithm is the lack of a nearly-linear time DP solver for non-smooth empirical risk minimization (ERM) problems. Indeed, such $\tO(1)$-pass optimizers are known only when the objective is convex and sufficiently smooth \cite{FeldmanKT20}, or $n \gtrsim d^2$ samples are taken \cite{CarmonJJLLST23}.
This is an issue, because while computing the geometric median \eqref{eq:gm_intro} is a convex ERM problem, it is non-smooth, and nontrivial multiplicative guarantees are possible even with $n \approx \sqrt d$ samples.

We give a custom analysis of DP-SGD, specifically catered to the (non-smooth) ERM objective \eqref{eq:gm_intro}. Our main contribution is a tighter sensitivity analysis of DP-SGD's iterates, leveraging the structure of the geometric median. To motivate this observation, consider coupled algorithms with iterates $\vz$, $\vz'$, both taking gradient steps with respect to the subsampled function $\norm{\cdot - \vx_i}$ for some dataset element $\vx_i \in \calD$. A simple calculation \eqref{eq:sample_grad} shows these gradients are unit vectors $\vu$, $\vu'$, in the directions of $\vz - \vx_i$ and $\vz' - \vx_i$ respectively. It is not hard to formalize (Lemma~\ref{lem:same_sgd_step}) that updating $\vz \gets \vz - \eta \vu$ and $\vz' \gets \vz' - \eta \vu'$ is always contractive, unless $\vz, \vz'$ were both already very close to $\vx_i$ (and hence, each other) to begin with. We use this structural result to inductively control DP-SGD's sensitivity, which lets us leverage a prior reduction from private optimization to stable optimization \cite{FeldmanKT20}.

Our result is the first we are aware of that obtains a nearly-linear runtime for DP-SGD on a structured non-smooth problem. We were inspired by \cite{AsiLT24}, who also gave faster runtimes for (smooth) DP optimization problems with outliers under further assumptions on the objective. We hope that our work motivates future DP optimization methods that harness problem structure for improved rates.

\subsection{Related work}\label{ssec:related}

\paragraph{Differentially private convex optimization.}
Differentially private convex optimization has been studied extensively for over a decade \cite{chaudhuri2008privacy,kifer2012private,BassilyST14,kasiviswanathan2016efficient,bassily2020stability,FeldmanKT20,bassily2021non,gopi2022private,gopi2023private} and inspired the influential DP-SGD algorithm widely adopted in deep learning \cite{abadi2016deep}.
In the classic setting, where functions are assumed to be Lipschitz and defined over a convex domain of diameter $R$, optimal rates have been achieved with linear dependence on $R$ \cite{bassily2019private}. Recent years have seen significant advancements in optimizing the gradient complexity of DP stochastic convex optimization \cite{FeldmanKT20,asi2021private,kulkarni2021private,zhang2022differentially,CarmonJJLLST23,choquette2024optimal}. Despite these efforts, a nearly-linear gradient complexity has only been established for sufficiently smooth functions \cite{FeldmanKT20,zhang2022differentially,choquette2024optimal} and for non-smooth functions \cite{CarmonJJLLST23} when the condition $\sqrt{n}\gtrsim d$ is satisfied.

\paragraph{Differential privacy with average-case data.}
Adapting noise to the inherent properties of data, rather than catering to worst-case scenarios, is critical for making differential privacy practical in real-world applications.
Several important approaches have emerged in this direction: smooth sensitivity frameworks \cite{NissimRS07} that refine local sensitivity to make it private; instance optimality techniques \cite{asi2020instance} that provide tailored guarantees for specific datasets; methods with improved performance under distributional assumptions such as sub-Gaussian or heavy-tailed i.i.d.\ data \cite{cai2021cost,asi2023user,AsiLT24}; and data-dependent sensitivity computations that adapt during algorithm execution \cite{andrew2021differentially}. These approaches collectively represent the frontier in balancing privacy and utility beyond worst-case analyses. We view our work as another contribution towards this broader program.
\section{Preliminaries }\label{sec:prelims}

In this section, we collect preliminary results used throughout the paper. We define our notation in Section~\ref{ssec:notation}. We then formally state helper definitions and known tools from the literature on differential privacy and computing the geometric median in Sections~\ref{ssec:dp} and~\ref{ssec:median} respectively.

\subsection{Notation and probability basics}\label{ssec:notation}

Throughout, vectors are denoted in lowercase boldface, and the all-zeroes and all-ones vectors in dimension $d$ are respectively denoted $\0_d$ and $\1_d$. We use $\norm{\cdot}$ to denote the Euclidean ($\ell_2$) norm of a vector argument. We use $[d]$ to denote $\{i \in \N \mid 1 \le i \le d\}$. We use $\ball^d(\vmu, r) \defeq \{\vx \in \R^d \mid \norm{\vx - \vmu} \le r\}$ to denote the Euclidean ball of radius $r > 0$ around $\vmu \in \R^d$; when $\vmu$ is unspecified, then $\vmu = \0_d$ by default. For a compact set $\set \subseteq \R^d$, we use $\proj_\set(\vx)$ to denote the Euclidean projection $\arg\min_{\vy \in \set}\norm{\vx - \vy}$.

We let $\ind_\event$ denote the $0$-$1$ indicator random variable corresponding to an event $\calE$. For two densities $\mu, \nu$ on the same probability space $\Omega$ and $\alpha > 1$, we define the $\alpha$-R\'enyi divergence by:
\[D_\alpha(\mu\|\nu) \defeq \frac 1 {\alpha - 1}\log\Par{\int \Par{\frac{\mu(\omega)}{\nu(\omega)}}^\alpha\nu(\omega)\dd\omega}.\]

We use $\Nor(\vmu, \sigma^2 \id_d)$ to denote the multivariate normal distribution with mean $\vmu \in \R^d$ and covariance $\sigma^2 \id_d$, where $\id_d$ denotes the $d \times d$ identity matrix. We let $\Lap(\lam)$ be the Laplace distribution with scale parameter $\lam \ge 0$, whose density is $\propto \exp(-\frac{|\cdot|}{\lam})$. We let $\Unif(S)$ denote the uniform distribution over a set $S$, and $\Bern(p)$ denote the Bernoulli distribution taking on values $\{0, 1\}$ with mean $p \in [0, 1]$. We refer to a product distribution consisting of $k$ i.i.d.\ copies of a base distribution $\calD$ by $\calD^{\otimes k}$.
We also use the bounded Laplace distribution with parameters $\lam, \tau \ge 0$, denoted $\BL(\lam, \tau)$, which is the distribution of $X \sim \Lap(\lam)$ conditioned on $|X| \le \tau$. 

Finally, we require the following standard bound on binomial concentration.

\begin{fact}[Chernoff bound]\label{fact:chernoff}
For all $i \in [n]$, let $X_i \sim \Bern(p_i)$ for some $p_i \in [0, 1]$, and let $\mu \defeq \sum_{i \in [n]} p_i$ and $\hmu \defeq \sum_{i \in [n]} X_i$. Then, 
\begin{align*}
\Pr\Brack{\hmu > (1 + \eps)\mu} \le \exp\Par{-\frac{\eps^2 \mu}{2+\eps}} \text{ for all } \eps \ge 0,\\ 
\Pr\Brack{\hmu < (1 - \eps)\mu} \le \exp\Par{-\frac{\eps^2 \mu}{2}} \text{ for all } \eps \in (0, 1).
\end{align*}
\end{fact}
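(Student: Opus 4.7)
The plan is to use the classical Chernoff method: exponentiate the sum, apply Markov's inequality, use independence to factor the moment generating function, and then optimize the free parameter. Both tail bounds follow the same template, so I would do the upper tail in detail and only indicate the sign change for the lower tail.

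For the upper tail, I would start from $\Pr[\hmu > (1+\eps)\mu] = \Pr[e^{t\hmu} > e^{t(1+\eps)\mu}]$ for any $t > 0$ and apply Markov's inequality to get
\begin{equation*}
\Pr[\hmu > (1+\eps)\mu] \le e^{-t(1+\eps)\mu}\, \E[e^{t\hmu}].
\end{equation*}
By independence of the $X_i$, $\E[e^{t\hmu}] = \prod_{i \in [n]} \E[e^{tX_i}] = \prod_{i \in [n]} (1 - p_i + p_i e^t)$. Using the elementary inequality $1 + x \le e^x$ with $x = p_i(e^t - 1)$ and summing in the exponent, this is at most $\exp(\mu(e^t - 1))$. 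Substituting back,
\begin{equation*}
\Pr[\hmu > (1+\eps)\mu] \le \exp\Par{\mu(e^t - 1) - t(1+\eps)\mu}.
\end{equation*}

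The next step is to optimize over $t$. Setting $t = \log(1+\eps) > 0$ gives the cleanest form $\exp(\mu(\eps - (1+\eps)\log(1+\eps)))$. To match the claimed bound, I would then invoke the standard calculus fact that $(1+\eps)\log(1+\eps) - \eps \ge \frac{\eps^2}{2+\eps}$ for all $\eps \ge 0$, which is verified by checking equality at $\eps = 0$ and comparing derivatives of both sides (both are nonnegative and the left derivative dominates on $\eps \ge 0$). This yields the desired upper-tail bound.

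For the lower tail, the same template applies with $t < 0$: writing $s = -t > 0$ and repeating the MGF bound, one arrives at $\exp(\mu(e^{-s} - 1) + s(1-\eps)\mu)$, optimized by $s = -\log(1-\eps)$, yielding $\exp(\mu(-\eps - (1-\eps)\log(1-\eps)))$. Applying the elementary inequality $(1-\eps)\log(1-\eps) + \eps \ge \frac{\eps^2}{2}$ for $\eps \in (0,1)$, again proved by a second-derivative comparison, gives the claimed $\exp(-\eps^2 \mu / 2)$. The only mildly nontrivial part of the whole argument is verifying the two analytic inequalities relating $\eps$ and $(1\pm\eps)\log(1\pm\eps)$; everything else is a mechanical application of Markov plus independence.
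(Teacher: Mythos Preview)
Your argument is the standard and correct proof of the multiplicative Chernoff bound: Markov's inequality on the exponential moment, independence to factor the MGF, the $1+x\le e^x$ step, optimization at $t=\pm\log(1\pm\eps)$, and the two elementary analytic inequalities $\log(1+\eps)\ge \tfrac{2\eps}{2+\eps}$ and $\eps+(1-\eps)\log(1-\eps)\ge \tfrac{\eps^2}{2}$. The paper itself does not prove this statement at all; it is recorded as a \emph{Fact} in the preliminaries and simply cited when used, so there is nothing to compare against beyond noting that your derivation supplies exactly the textbook justification the paper omits.
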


\subsection{Differential privacy}\label{ssec:dp}

Let $\xset$ be some domain, and let $\calD \in \xset^n$ be a dataset consisting of $n$ elements from $\xset$. We say that two datasets $\calD$, $\calD' \in \xset^n$ are \emph{neighboring} if their symmetric difference has size $1$, i.e., they differ in a single element. We use the following definition of differential privacy in this paper.

\begin{definition}[Differential privacy]\label{def:dp}
Let $(\eps, \delta) \in [0, 1]^2$.\footnote{In principle, the privacy parameter $\eps$ can be larger than $1$. However, in this paper, sample complexities are unaffected up to constants for any $\eps \ge 1$ if we simply obtain $(1, \delta)$-DP guarantees rather than $(\eps, \delta)$-DP guarantees, which are only stronger. Thus we state all results for $\eps \in [0, 1]$ for convenience, which simplifies some bounds.}
We say that a randomized algorithm $\alg: \xset^n \to \Omega$ satisfies \emph{$(\eps, \delta)$-differential privacy} (or, is \emph{$(\eps, \delta)$-DP}) if for all events $\event \subseteq \Omega$, and for all neighboring datasets $\calD, \calD' \in \xset^n$, we have
\[\Pr\Brack{\alg(\calD) \in \event} \le \exp(\eps) \Pr\Brack{\alg(\calD') \in \event} + \delta.\]
\end{definition}

DP algorithms satisfy \emph{basic composition} (Theorem B.1, \cite{dr14}), i.e., if $\alg_1: \xset^n \to \Omega_1$ is $(\eps_1, \delta_1)$-DP and $\alg_2: \xset^n \times \Omega_1 \to \Omega_2$ is $(\eps_2, \delta_2)$-DP, then running $\alg_2$ on $\calD$ and the output of $\alg_1(\calD)$ is $(\eps_1 + \eps_2, \delta_1 + \delta_2)$-DP. We next state the Gaussian mechanism. Recall that if $\vv: \xset^n \to \R^k$ is a vector-valued function of a dataset, we say $\vv$ has sensitivity $\Delta$ if for all neighboring $\calD, \calD' \in \xset^n$, we have $\norm{\vv(\calD) - \vv(\calD')} \le \Delta$.

\begin{fact}[Theorem A.1, \cite{dr14}]\label{fact:gaussian_mech}
Let $\vv: \xset^n \to \R^k$ have sensitivity $\Delta$, and let $(\eps, \delta) \in [0, 1]^2$. Then, drawing a sample from $\Nor(\vv(\calD), \sigma^2 \id_k)$ is $(\eps, \delta)$-DP, for any $\sig \ge \frac{2\Delta}{\eps} \cdot \sqrt{\log(\frac 2 \delta)} $.
\end{fact}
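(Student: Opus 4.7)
The plan is to prove this via the privacy loss random variable (PLRV) approach, which is the standard template for Gaussian-mechanism-style arguments. First, I would fix an arbitrary pair of neighboring datasets $\calD, \calD' \in \xset^n$ and set $\vmu \defeq \vv(\calD)$, $\vmu' \defeq \vv(\calD')$, so $\norm{\vmu - \vmu'} \le \Delta$ by the sensitivity hypothesis. Write $p$ and $p'$ for the densities of $\Nor(\vmu, \sigma^2 \id_k)$ and $\Nor(\vmu', \sigma^2 \id_k)$. I would invoke the standard reduction (e.g., Lemma~3.17 of \cite{dr14}): to establish $(\eps,\delta)$-DP, it suffices to prove that the PLRV $L(\vw) \defeq \log(p(\vw)/p'(\vw))$ satisfies $\Pr_{\vw \sim p}[L(\vw) > \eps] \le \delta$.

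Next, I would compute $L$ explicitly. Setting $\vu \defeq \vmu - \vmu'$ and expanding the Gaussian log-densities, the $\norm{\vw}^2$ terms cancel, leaving
\[L(\vw) = \frac{\inprods{\vw - \vmu}{\vu}}{\sigma^2} + \frac{\norm{\vu}^2}{2\sigma^2}.\]
Under $\vw \sim p$, the inner product $\inprods{\vw - \vmu}{\vu}$ is a scalar Gaussian with mean $0$ and variance $\norm{\vu}^2$, so $L$ is distributed as $\Nor\Par{\tfrac{\norm{\vu}^2}{2\sigma^2}, \tfrac{\norm{\vu}^2}{\sigma^2}}$.

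Then I would apply the standard tail bound $\Pr[\Nor(0,1) > t] \le \exp(-t^2/2)$ for $t \ge 0$. Letting $a \defeq \norm{\vu}/\sigma$, the event $\{L > \eps\}$ corresponds to a standard normal exceeding $(\eps - a^2/2)/a$. Using $a \le \Delta/\sigma \le \eps/(2\sqrt{\log(2/\delta)})$ from the hypothesis on $\sigma$, a short computation (using $\eps \le 1$ so that the $a/2$ correction is negligible compared to $2\sqrt{\log(2/\delta)}$) yields $(\eps - a^2/2)/a \ge \sqrt{2\log(2/\delta)}$, whence $\Pr[L > \eps] \le \delta/2 \le \delta$.

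Finally, I would close the loop with the PLRV-to-DP reduction: for any event $\event \subseteq \R^k$, split $\event$ into its intersection with $\{L \le \eps\}$, on which $p(\vw) \le e^{\eps} p'(\vw)$ pointwise (so contributes at most $e^\eps p'(\event)$), and its intersection with $\{L > \eps\}$, whose $p$-mass is at most $\delta$ by the previous step; summing gives $p(\event) \le e^\eps p'(\event) + \delta$. There is no substantive obstacle here---this is a textbook result---and the only mild technical point is verifying that the constants in the paper's bound $\sigma \ge \tfrac{2\Delta}{\eps}\sqrt{\log(2/\delta)}$ are comfortable enough to absorb the Gaussian tail; this is slightly looser than the classical $\sigma = \Delta\sqrt{2\log(1.25/\delta)}/\eps$ threshold and so goes through with room to spare.
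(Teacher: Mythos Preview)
Your proposal is correct and follows the standard privacy-loss-random-variable argument that underlies Theorem~A.1 of \cite{dr14}. The paper itself does not give a proof of this statement---it is recorded as a Fact with a citation---so there is nothing to compare against beyond noting that your argument is precisely the textbook one, and the constant check in the last paragraph goes through as you say.
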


We also require the bounded Laplace mechanism, which is known to give the following guarantee.

\begin{fact}[Lemma 9, \cite{AsiLT24}]\label{fact:bl_mech}
Let $s: \xset^n \to \R$ have sensitivity $\Delta$, and let $(\eps, \delta) \in [0, 1]^2$. Then, drawing $\xi \sim \BL(\frac \Delta \eps, \tau)$ and outputting $s(\calD) + \xi$ is $(\eps, \delta)$-DP for any $\tau \ge \frac \Delta \eps \log(\frac 4 \delta)$.
\end{fact}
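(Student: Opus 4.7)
The plan is to compare the densities of the output distributions on two neighboring datasets $\calD,\calD'$, handling separately the region of overlap (which contributes the multiplicative $e^\eps$ factor) and the non-overlapping region (which contributes the additive $\delta$).

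Write $a \defeq s(\calD)$, $b \defeq s(\calD')$, and let $\lam \defeq \Delta/\eps$. By the sensitivity assumption, $|a - b| \le \Delta$. The density of the output $s(\calD) + \xi$ for $\xi \sim \BL(\lam,\tau)$ is supported on $[a-\tau, a+\tau]$ and equals $p(y) = \frac{1}{2\lam Z}\exp(-|y-a|/\lam)$ on its support, where $Z \defeq \Pr_{X\sim\Lap(\lam)}[|X|\le \tau] = 1 - \exp(-\tau/\lam)$. An analogous density $q$ describes the output on $\calD'$. By the tail bound for $\Lap(\lam)$ and the hypothesis $\tau \ge \lam \log(4/\delta)$, we get $\exp(-\tau/\lam) \le \delta/4$, and hence $Z \ge 1 - \delta/4$.

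Next, for any measurable event $E$, decompose $E = E_{\cap} \sqcup E_{\setminus}$, where $E_{\cap} \defeq E \cap [b-\tau,b+\tau]$ is the part of $E$ on which $q$ is supported, and $E_{\setminus} \defeq E \setminus [b-\tau,b+\tau]$. On $E_{\cap}$, both densities are nonzero and
\[\frac{p(y)}{q(y)} = \exp\!\Par{\tfrac{|y-b|-|y-a|}{\lam}} \le \exp\!\Par{\tfrac{|a-b|}{\lam}} \le \exp(\eps)\]
by the reverse triangle inequality. Thus $\int_{E_{\cap}} p(y)\,\dd y \le \exp(\eps)\int_{E_{\cap}} q(y)\,\dd y \le \exp(\eps)\Pr[\alg(\calD')\in E]$. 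It remains to bound $\int_{E_{\setminus}} p(y)\,\dd y$ by $\delta$.

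For the additive part, observe $E_{\setminus} \subseteq [a-\tau,a+\tau]\setminus [b-\tau,b+\tau]$, which (taking WLOG $b \ge a$) is contained in the interval $[a-\tau, a-\tau+|a-b|)$; a direct computation of the Laplace tail gives
\[\int_{E_{\setminus}} p(y)\,\dd y \;\le\; \frac{1}{2Z}\Par{\exp\!\Par{\tfrac{|a-b|-\tau}{\lam}} - \exp\!\Par{-\tfrac{\tau}{\lam}}} \;\le\; \frac{\exp(\eps)}{2Z}\cdot\exp\!\Par{-\tfrac{\tau}{\lam}} \;\le\; \frac{\exp(\eps)}{2(1-\delta/4)}\cdot\tfrac{\delta}{4} \;\le\; \delta,\]
using $\eps\le 1$, $\delta\le 1$ in the last step. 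Combining with the previous paragraph yields $\Pr[\alg(\calD)\in E]\le \exp(\eps)\Pr[\alg(\calD')\in E] + \delta$, establishing $(\eps,\delta)$-DP. The main subtlety, and the only place the additive $\delta$ is actually needed, is the asymmetric support issue in the previous display; everything else is a routine calculation with Laplace densities.
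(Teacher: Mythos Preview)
Your proof is correct. One tiny wording issue: on $E_{\cap}$ it is not literally true that ``both densities are nonzero,'' since $p$ can vanish on the sliver $[b-\tau,b+\tau]\setminus[a-\tau,a+\tau]$; but of course $p(y)\le e^\eps q(y)$ holds trivially there, so the integral inequality you use is unaffected.

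The paper does not prove this fact itself but cites \cite{AsiLT24} and describes their argument as a coupling: the unbounded Laplace mechanism with scale $\Delta/\eps$ is $(\eps,0)$-DP, and $\BL(\Delta/\eps,\tau)$ coincides with $\Lap(\Delta/\eps)$ except with probability $\exp(-\tau\eps/\Delta)\le\delta/4$, so the bounded mechanism is $(\eps,\delta)$-DP by the TV-closeness principle the paper isolates as Lemma~\ref{lem:dp_close}. Your approach is genuinely different: you compare the truncated densities directly, splitting into the overlap of supports (multiplicative $e^\eps$) and the non-overlap (additive $\delta$). Your route is more elementary and fully self-contained, and it gives a slightly sharper handle on the additive loss (you get roughly $e\delta/6$ rather than $\delta$). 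The coupling route is more modular---Lemma~\ref{lem:dp_close} is reused several times in Section~\ref{sec:constant}---and avoids any explicit integration.
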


Fact~\ref{fact:bl_mech} is proven in \cite{AsiLT24} using a coupling argument, using the fact that $\BL(\lam)$ and $\Lap(\lam)$ result in the same sample except with some probability. We appeal to this privacy proof technique several times in Section~\ref{sec:constant}, so we state it explicitly here for convenience.

\begin{lemma}\label{lem:dp_close}
For $(\eps, \delta) \in [0, 1]^2$, $\alg: \xset^n \to \Omega$ be an $(\eps, \delta)$-DP algorithm, and let $\balg$ be an algorithm such that on any input $\calD \in \xset^n$, we have that the total variation distance between $\alg(\calD)$ and $\balg(\calD)$ is at most $\delta'$. Then, $\balg$ is an $(\eps, \delta + 4\delta')$-DP algorithm.
\end{lemma}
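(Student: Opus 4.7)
The plan is a direct triangle-inequality-style chase, using the total variation bound on each side of the DP inequality. Fix any pair of neighboring datasets $\calD, \calD' \in \xset^n$ and any measurable event $\event \subseteq \Omega$; the goal is to establish
\[
\Pr[\balg(\calD) \in \event] \le \exp(\eps) \Pr[\balg(\calD') \in \event] + \delta + 4\delta'.
\]

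First I would replace $\balg(\calD)$ by $\alg(\calD)$ at a cost of $\delta'$, using the characterization of total variation distance as the supremum of probability differences over measurable events: $\Pr[\balg(\calD) \in \event] \le \Pr[\alg(\calD) \in \event] + \delta'$. Next I would apply the $(\eps, \delta)$-DP guarantee of $\alg$ to bound this by $\exp(\eps)\Pr[\alg(\calD') \in \event] + \delta + \delta'$. Finally I would move back from $\alg(\calD')$ to $\balg(\calD')$, again paying $\delta'$ in total variation, to obtain $\exp(\eps)\Pr[\balg(\calD') \in \event] + \delta + \delta'(1 + \exp(\eps))$.

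To conclude I just need $\delta'(1 + \exp(\eps)) \le 4\delta'$, which follows since $\eps \in [0,1]$ implies $\exp(\eps) \le e < 3$, so $1 + \exp(\eps) \le 4$. This yields the desired $(\eps, \delta + 4\delta')$-DP guarantee.

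There is no real obstacle here; the only subtlety worth flagging is that the second TV step is charged multiplicatively by the factor $\exp(\eps)$ coming from the DP inequality, which is why the final slack is $4\delta'$ rather than $2\delta'$. The restriction $\eps \in [0,1]$ (which the paper adopts as a convention) is exactly what is needed to absorb $1 + \exp(\eps)$ into the constant $4$.
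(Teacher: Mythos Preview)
Your proof is correct and essentially identical to the paper's: both apply the total variation bound to pass from $\balg(\calD)$ to $\alg(\calD)$, invoke the $(\eps,\delta)$-DP of $\alg$, pass back from $\alg(\calD')$ to $\balg(\calD')$ at a cost of $\exp(\eps)\delta'$, and then use $\eps \le 1$ to bound $1 + \exp(\eps) \le 4$.
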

\begin{proof}
For neighboring datasets $\calD, \calD'$, and any event $\event \in \Omega$, we have that
\begin{align*}
\Pr\Brack{\balg(\calD) \in \event} &\le \Pr\Brack{\alg(\calD) \in \event} + \delta' \\
&\le \exp\Par{\eps} \Pr\Brack{\alg(\calD') \in \event} + \delta + \delta' \\
&\le \exp\Par{\eps} \Pr\Brack{\balg(\calD') \in \event} + \delta + 4\delta'.
\end{align*}
The first and last lines used the assumption between $\calA$ and $\balg$, and the second line used that $\alg$ is DP.
\end{proof}

We next recall the following well-known result on detecting the first large element in a stream.

\begin{algorithm}
\caption{$\AboTh(\calD, \{q_t\}_{t \in [T]}, \tau, \eps)$}
\label{alg:mean_est_with_AT}
{\bf Input:} Dataset $\calD \in \xset^n$, sensitivity-$\Delta$ queries $\{q_t: \xset^n \to \R\}_{t \in [T]}$, threshold $\tau \in \reals$, privacy parameter $\epsilon > 0$\;
\begin{algorithmic}[1]
\STATE $\hat{\tau} \gets \tau + \nu_{\thresh}$ for $\nu_{\thresh} \sim \Lap(\frac {2\Delta} \eps)$\;
\FOR{$t \in [T]$}
\STATE $\nu_t \sim \Lap(\frac{4\Delta}{\epsilon})$\;
\IF{$q_t(\calD)+\nu_t \geq \hat{\tau}$}
\STATE {\bf Output:} $a_t \gets \top$\;
\STATE {\bf Halt}\;
\ELSE
\STATE {\bf Output:} $a_t \gets \bot$\;

\ENDIF
\ENDFOR
\end{algorithmic}
\end{algorithm}

\begin{lemma}[Theorems 3.23, 3.24, \cite{dr14}]
\label{thm:Above_Threshold}
    $\AboTh$ is $(\epsilon,0)$-DP.
    Moreover, for $\gamma \in (0, 1)$, let $\alpha=\frac{8\Delta\log(\frac{2T}{\gamma})}{\epsilon}$ and $\calD \in \calX^n$. $\AboTh$ halts at time $k \in [T+1]$ such that with probability $\ge 1-\gamma$:
    \begin{itemize}
        \item $a_t =\bot$ and $q_t(\calD) \le \tau + \alpha$ for all $ t < k$.
        \item $a_k = \top$ and $q_k(\calD) \ge \tau - \alpha$ or $k = T+1$.
    \end{itemize} 
\end{lemma}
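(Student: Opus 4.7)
The plan is to establish the privacy and utility claims separately. The utility bound is a routine Laplace tail calculation; the privacy bound follows a standard coupling argument that balances a shift of the threshold noise against a shift of the final query noise.

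For utility, I would invoke the Laplace tail bound $\Pr[|Z| > \lam\log(1/p)] \le p$ for $Z \sim \Lap(\lam)$. Applying this with $p = \gamma/(2T)$ to the threshold noise $\nu_{\thresh} \sim \Lap(2\Delta/\eps)$ and each of the $T$ per-query noises $\nu_t \sim \Lap(4\Delta/\eps)$, a union bound gives that with probability $\ge 1 - \gamma$, every such noise has magnitude at most $\alpha/2 = 4\Delta\log(2T/\gamma)/\eps$. On this event, a $\bot$ output at time $t < k$ means $q_t(\calD) + \nu_t < \tau + \nu_{\thresh}$, hence $q_t(\calD) \le \tau + |\nu_{\thresh}| + |\nu_t| \le \tau + \alpha$; symmetrically a $\top$ output at time $k \le T$ yields $q_k(\calD) \ge \tau - \alpha$.

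For privacy, fix neighboring datasets $\calD, \calD'$ and note that the observable output of $\AboTh$ is entirely determined by the stopping time $k \in [T+1]$, since the algorithm halts upon the first $\top$. It therefore suffices to show $\Pr[\AboTh(\calD) = k] \le \exp(\eps) \Pr[\AboTh(\calD') = k]$ for each $k$. I would write $\Pr[\AboTh(\calD) = k]$ as an integral over $\nu_{\thresh}$ of a product of three factors: (i) the Laplace density of $\nu_{\thresh}$, (ii) the probability $\Pr[\nu_t < \tau + \nu_{\thresh} - q_t(\calD)]$ for each $t < k$, and (iii) the probability $\Pr[\nu_k \ge \tau + \nu_{\thresh} - q_k(\calD)]$. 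Performing the change of variables $y = \nu_{\thresh} + \Delta$ in the $\calD$ integral, the shift $\Delta$ is chosen so that for every $t < k$, the $\calD'$ bot event $\nu_t < \tau + y - q_t(\calD')$ \emph{contains} the $\calD$ bot event $\nu_t < \tau + y - \Delta - q_t(\calD)$ (using $q_t(\calD') - q_t(\calD) \le \Delta$); hence factor (ii) is pointwise dominated by its $\calD'$ analogue and contributes no privacy loss.

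This leaves two sources of loss: factor (i) shifts by $\Delta$, costing a ratio $\exp(\eps/2)$ since $\nu_{\thresh}$ has scale $2\Delta/\eps$; and factor (iii)'s threshold shifts by at most $2\Delta$ (the sum of the $\Delta$ from the threshold shift and the $\Delta$ bounding $|q_k(\calD) - q_k(\calD')|$), costing another ratio $\exp(\eps/2)$ via the Laplace CDF tail and the fact that $\nu_k$ has scale $4\Delta/\eps$. Multiplying yields the desired $(\eps, 0)$-DP guarantee. The main obstacle is the delicate choice of the coupling shift for factor (ii): a naive per-step coupling would leak $\exp(\eps)$ per bot output and give a useless $\exp(k\eps)$ bound, so the key idea is to pick the $\nu_{\thresh}$ shift of exactly $\Delta$ so that every bot event's inequality tightens in a uniform direction across $t$, consuming zero privacy budget on the bot side and freeing the entire $\eps$ budget for the threshold density and the final top event.
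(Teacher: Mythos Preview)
The paper does not give its own proof of this lemma; it is stated as a cited preliminary from \cite{dr14} (Theorems 3.23 and 3.24). Your argument is correct and is essentially the standard proof from that reference: the utility claim follows from a union bound over Laplace tail events, and the privacy claim follows from the threshold-shift coupling that absorbs all $\bot$ events at zero cost and pays $\exp(\eps/2)$ each for the threshold density shift and the final $\top$ tail ratio.
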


Finally, our developments in Section~\ref{sec:boosting} use the notions of R\'enyi DP (RDP) and central DP (CDP). We provide a self-contained summary of the definitions and properties satisfied by RDP and CDP here, but refer the reader to \cite{BunS16, Mironov17} for a more detailed overview.

\begin{definition}[RDP and CDP]
Let $\alpha \ge 1$, $\rho \ge 0$. We say that a randomized algorithm $\alg: \xset^n \to \Omega$ satisfies \emph{$(\alpha, \rho)$-R\'enyi differential privacy} (or, is \emph{$(\alpha, \rho)$-RDP}) if for all neighboring datasets $\calD, \calD' \in \xset^n$, 
\[D_\alpha(\alg(\calD) \| \alg(\calD')) \le \alpha\rho.\]
If this holds for all $\alpha \ge 1$, we say $\alg$ satisfies \emph{$\rho$-central differential privacy} (or, is \emph{$\rho$-CDP}).
\end{definition}

\begin{fact}[\cite{Mironov17}]\label{fact:rdp}
RDP and CDP satisfy the following properties.
\begin{enumerate}
    \item (Composition): If $\alg_1: \xset^n \to \Omega_1$ is $(\alpha, \rho_1)$-RDP and $\alg_2: \xset^n \times \Omega_1 \to \Omega_2$ is $(\alpha, \rho_2)$ for any fixed choice of input from $\Omega_1$, the composition of $\alg_2$ and $\alg_1$ is $(\alpha, \rho_1 + \rho_2)$-RDP.
    \item (RDP to DP): If $\alg$ is $(\alpha, \rho)$-RDP, it is also $(\alpha\rho + \frac 1 {\alpha - 1}\log \frac 1 \delta, \delta)$-DP for all $\delta \in (0, 1)$.
    \item (Gaussian mechanism): Let $\vv: \xset^n \to \R^k$ have sensitivity $\Delta$. Then for any $\sigma > 0$, drawing a sample from $\Nor(\vv(\calD), \sigma^2 \id_k)$ is $\frac{\Delta^2}{2\sigma^2}$-CDP.
\end{enumerate}
\end{fact}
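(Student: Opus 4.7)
My plan is to prove the three properties in the order stated, since each reduces to a short direct calculation on the defining integral for $D_\alpha$. The only nontrivial ingredient is a chain rule for Rényi divergence, which drives the composition claim; the other two items are routine once the chain rule is in hand.

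For the composition property, I would first establish the chain rule
\[
D_\alpha(P_{X,Y} \| Q_{X,Y}) \le D_\alpha(P_X \| Q_X) + \sup_{x} D_\alpha(P_{Y \mid X = x} \| Q_{Y \mid X = x}),
\]
where the supremum ranges over $x$ in the support of $P_X$. This follows from writing the joint densities as products of marginals and conditionals inside the integral defining $D_\alpha$, and then pulling the conditional Rényi divergence out by applying the worst-case bound pointwise in $x$. Given the chain rule, the composition claim is immediate: take $X = \alg_1(\calD)$ and $Y = \alg_2(\calD, X)$, noting that on a neighboring pair $\calD, \calD'$ the marginal divergence is at most $\alpha \rho_1$ by the RDP guarantee for $\alg_1$, and the conditional divergence for any fixed $x$ is at most $\alpha \rho_2$ by the assumption that $\alg_2$ is $(\alpha, \rho_2)$-RDP for every fixed auxiliary input.

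For the RDP-to-DP conversion, my plan is the standard Markov argument on the privacy loss random variable. Writing $\mu = \alg(\calD)$, $\nu = \alg(\calD')$, and $L(\omega) = \log\frac{\mu(\omega)}{\nu(\omega)}$, for any event $\event \subseteq \Omega$ and any threshold $k \ge 0$ I would split
\[
\Pr_\mu[\event] = \Pr_\mu\bigl[\event \cap \{L \le k\}\bigr] + \Pr_\mu\bigl[\event \cap \{L > k\}\bigr] \le e^k \Pr_\nu[\event] + \Pr_\mu[L > k].
\]
The tail term is handled by observing $\E_\mu[\exp((\alpha-1)L)] = \exp((\alpha-1) D_\alpha(\mu\|\nu)) \le \exp((\alpha-1)\alpha\rho)$ and applying Markov's inequality at level $e^{(\alpha - 1)k}$, which gives $\Pr_\mu[L > k] \le \exp((\alpha - 1)(\alpha\rho - k))$. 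Choosing $k = \alpha\rho + \frac{1}{\alpha-1}\log\frac{1}{\delta}$ drives this to at most $\delta$, yielding the claimed $(\alpha\rho + \frac{1}{\alpha-1}\log\frac{1}{\delta}, \delta)$-DP guarantee.

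For the Gaussian mechanism, my plan is a direct computation with the Gaussian density. For $\vmu_1, \vmu_2 \in \R^k$ with $\norm{\vmu_1 - \vmu_2} \le \Delta$, substituting the densities of $\Nor(\vmu_1, \sigma^2 \id_k)$ and $\Nor(\vmu_2, \sigma^2 \id_k)$ into the definition of $D_\alpha$ yields a Gaussian integral whose cross terms cancel to give exactly $\frac{\alpha}{2\sigma^2}\norm{\vmu_1 - \vmu_2}^2 \le \alpha \cdot \frac{\Delta^2}{2\sigma^2}$. Since this bound holds for every $\alpha \ge 1$, it establishes $\frac{\Delta^2}{2\sigma^2}$-CDP for the mechanism $\calD \mapsto \Nor(\vv(\calD), \sigma^2 \id_k)$.

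The main place where some care is needed is the chain rule underlying composition: the adaptivity of $\alg_2$ means I cannot simply appeal to tensorization of Rényi divergence for product measures, and I must instead argue pointwise over the auxiliary input. The RDP-to-DP conversion and the Gaussian calculation, by contrast, are self-contained exercises I expect to complete without surprises.
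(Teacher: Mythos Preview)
Your proposal is correct and follows the standard arguments from \cite{Mironov17}. Note that the paper does not actually prove this fact: it is stated as a cited result from the literature with no accompanying proof, so there is no ``paper's own proof'' to compare against. Your sketch matches the proofs given in the cited reference.
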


\subsection{Geometric median}\label{ssec:median}

Throughout the rest of the paper, for a parameter $R > 0$, we fix a dataset $\calD \defeq \{\vx_i\}_{i \in [n]} \subset \ball^d(R)$, i.e., with domain $\xset \defeq \ball^d(R)$. Our goal is to approximate the \emph{geometric median} of $\calD$, i.e., 
\begin{equation}\label{eq:gm_def}
\begin{aligned}
\vxs(\calD) \defeq \arg\min_{\vx \in \R^d} \fcd(\vx),
\text{ where }
\fcd(\vx) \defeq \frac 1 n \sum_{i \in [n]} \norm{\vx - \vx_i} 
\end{aligned}
\end{equation}
is the average Euclidean distance to the dataset. Following e.g., \cite{cohen2016geometric, haghifam2024private}, we also define the quantile radii associated with our dataset $\calD$ centered at $\bvx \in \R^d$ by
\begin{equation}\label{eq:quantile_def}r^{(\tau)}(\calD; \bvx) \defeq \arg\min_{r \ge 0}\Brace{\sum_{i \in [n]} \ind_{\norm{\vx_i - \bvx} \le r} \ge \tau n}, \text{ for all } \tau \in [0, 1].\end{equation}
In other words, $r^{(\tau)}(\calD; \bvx)$ is the smallest radius $r \ge 0$ such that $\ball^d(\bvx, r)$ contains at least a $\tau$ fraction of the points in $\calD$. When $\bvx$ is unspecified, we always assume by default that $\bvx = \vxs(\calD)$.

In our utility analysis we will often suppress the dependence on $\calD$ in $\vxs, r^{(\tau)}$, etc., as the dataset of interest will not change. In the privacy analysis, we specify the dependence of these functions on the dataset explicitly when comparing algorithms run on neighboring datasets.

Finally, we include two helper results from prior work that are frequently used throughout.

\begin{lemma}\label{lem:opt_lb}
Let $\calD \defeq \{\vx_i\}_{i \in [n]} \subset \R^d$. Then, $\fcd\Par{\vxs} \ge (1 - \tau) r^{(\tau)}$ for all $\tau \in [0, 1]$.
\end{lemma}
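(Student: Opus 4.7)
The plan is a one-line counting argument using the definition of $r^{(\tau)}$ and that the distance function is nonnegative.

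First, I would unpack the definition of $r^{(\tau)}$. Since $r^{(\tau)} = \arg\min_{r \ge 0}\{\sum_{i \in [n]} \ind_{\norm{\vx_i - \vxs} \le r} \ge \tau n\}$, for every $r < r^{(\tau)}$ we have $|\{i : \norm{\vx_i - \vxs} \le r\}| < \tau n$, equivalently $|\{i : \norm{\vx_i - \vxs} > r\}| > (1-\tau)n$. Since the counting function is integer-valued and right-continuous in $r$, taking $r \uparrow r^{(\tau)}$ yields the clean statement
\[
\Abs{\Brace{i \in [n] : \norm{\vx_i - \vxs} \ge r^{(\tau)}}} \ge (1-\tau)n.
\]

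Second, I would lower bound $\fcd(\vxs)$ by restricting the sum to this ``far'' set and bounding each summand below by $r^{(\tau)}$:
\[
\fcd(\vxs) = \frac{1}{n}\sum_{i \in [n]} \norm{\vx_i - \vxs} \ge \frac{1}{n}\sum_{i : \norm{\vx_i - \vxs} \ge r^{(\tau)}} \norm{\vx_i - \vxs} \ge \frac{1}{n} \cdot (1-\tau)n \cdot r^{(\tau)} = (1-\tau)r^{(\tau)}.
\]
There is no real obstacle; the only small subtlety is justifying the ``$\ge r^{(\tau)}$'' (not just ``$> r$'' for $r < r^{(\tau)}$) count, which follows from the fact that the counting function $r \mapsto |\{i : \norm{\vx_i - \vxs} \le r\}|$ is a step function that jumps from a value $< \tau n$ to $\ge \tau n$ precisely at $r = r^{(\tau)}$.
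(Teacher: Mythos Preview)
Your proof is correct and is essentially the same argument the paper has in mind: the paper's proof simply says ``immediate from the definition of $\fcd$ and nonnegativity of each summand,'' and you have spelled out exactly that counting argument in detail.
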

\begin{proof}
This is immediate from the definition of $\fcd$ and nonnegativity of each summand $\norm{\cdot - \vx_i}$. 
\end{proof}

\begin{lemma}[Lemma 24, \cite{cohen2016geometric}]
\label{lm:property_of_gm}
Let $\calD \defeq \{\vx_i\}_{i \in [n]} \subset \R^d$ and let $S \subseteq [n]$ have $|S| < \frac{n}{2}$. Then, 
    \[
    \norm{\vxs - \vx} \le \left(\frac{2n - 2|S|}{n - 2|S|} \right) \max_{i \notin S} \|\vx_i - \vx\|,\text{ for all } \vx \in \R^d.
    \]
\end{lemma}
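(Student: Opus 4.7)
The plan is to invoke the optimality of $\vxs$, which gives $\fcd(\vxs) \le \fcd(\vx)$, i.e.,
\[\sum_{i \in [n]} \Par{\norm{\vxs - \vx_i} - \norm{\vx - \vx_i}} \le 0.\]
The strategy is to lower bound each summand, using different bounds for $i \in S$ and $i \notin S$. Let $D \defeq \norm{\vxs - \vx}$ and $r \defeq \max_{i \notin S}\norm{\vx_i - \vx}$; then we want to show $D \le \frac{2n - 2|S|}{n - 2|S|} r$.

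For the ``good'' indices $i \notin S$, I will use the triangle inequality $\norm{\vxs - \vx_i} \ge D - \norm{\vx - \vx_i}$ to get
\[\norm{\vxs - \vx_i} - \norm{\vx - \vx_i} \ge D - 2\norm{\vx - \vx_i} \ge D - 2r,\]
exploiting the bound $\norm{\vx - \vx_i} \le r$. For the ``bad'' indices $i \in S$ I only have the generic reverse triangle inequality $\norm{\vxs - \vx_i} - \norm{\vx - \vx_i} \ge -\norm{\vxs - \vx} = -D$, since no a priori bound on $\norm{\vx - \vx_i}$ is available for these points.

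Summing these two types of bounds over the $n - |S|$ good indices and $|S|$ bad indices, and using $\fcd(\vxs) \le \fcd(\vx)$, gives
\[0 \ge (n - |S|)(D - 2r) - |S| D = (n - 2|S|) D - 2(n - |S|) r.\]
Since $|S| < \frac{n}{2}$, the coefficient $n - 2|S|$ is strictly positive, so rearranging yields the claimed bound $D \le \frac{2n - 2|S|}{n - 2|S|} r$.

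There is no real obstacle here beyond correctly splitting the sum; the only thing to be careful about is ensuring the ``good'' bound $D - 2r$ is a genuine lower bound even when $D < 2r$ (in which case the inequality only gets easier, and if $D \le r$ the lemma is trivial anyway), and ensuring that the denominator $n - 2|S|$ is positive so that the final division preserves the inequality.
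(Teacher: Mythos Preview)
Your proof is correct. The paper does not give its own proof of this lemma; it simply cites it from \cite{cohen2016geometric}, and the argument there is essentially the one you wrote: compare $\fcd(\vxs) \le \fcd(\vx)$, lower bound the summands via the triangle inequality separately for $i \in S$ and $i \notin S$, and rearrange using $n - 2|S| > 0$.
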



\section{Constant-Factor Approximation}\label{sec:constant}

In this section, we give our first main result: a fast algorithm for computing a constant-factor approximation to the geometric median. Our approach is to speed up several of the steps in the initial two phases of the \cite{haghifam2024private} algorithm via subsampling and techniques inspired by the FriendlyCore framework of \cite{tsfadia2022friendlycore}. Specifically, in Section~\ref{ssec:radius}, we first show how Algorithm~\ref{alg:mean_est_with_AT} can be sped up using subsampled scores, to estimate quantile radii up to constant factors in nearly-linear time, improving Section 2.1 of \cite{haghifam2024private}. In Section~\ref{ssec:center}, we then adapt a weighted variant of FriendlyCore to give a simple algorithm for approximate centerpoint computation, improving Section 2.2 of \cite{haghifam2024private} for large aspect ratios.

\subsection{Radius estimation}\label{ssec:radius}

In this section, we present and analyze our radius estimation algorithm.
\begin{algorithm}
\caption{$\FRF(\calD, r, R, \eps, \delta)$}
\label{alg:frf}
{\bf Input:} Dataset $\calD \in \ball^d(R)^n$, radius search bounds $0 < r \le R$, privacy bounds $(\eps, \delta) \in [0, 1]^2$\;
\begin{algorithmic}[1]
\STATE $T \gets \lceil\log_2(\frac R r)\rceil$\;
\STATE $k \gets 3\log(\frac {4T} \delta)$\;
\STATE $\tau \gets 0.775n$\;
\STATE $\hat{\tau} \gets \tau + \nu_{\mathsf{thresh}}$ for $\nu_{\thresh} \sim \Lap(\frac 6 \eps)$\;\label{line:noise_rad}
\FOR{$t \in [T]$}
\STATE $r_t \gets r \cdot 2^{t - 1}$\;
\FOR{$i \in [n]$}
\STATE $S^{(i)}_t \gets \Unif([n])^{\otimes k}$\;\label{line:subsample_rad}
\STATE $N^{(i)}_t \gets \frac n k \sum_{j \in S^{(i)}} \ind_{\norm{\vx_i - \vx_j} \le r_t}$\;
\ENDFOR
\STATE $q_t \gets \frac 1 n \sum_{i \in [n]} N^{(i)}_t$\;\label{line:subsample_stat}
\STATE $\nu_t \sim \Lap(\frac{12}{\epsilon})$\;\label{line:noise_rad_2}
\IF{$q_t +\nu_t \geq \hat{\tau}$}
\STATE {\bf Return:} $r_t$\;
\ENDIF
\ENDFOR
\STATE {\bf Return:} $R$\;
\end{algorithmic}
\end{algorithm}

Algorithm~\ref{alg:frf} is clearly an instance of Algorithm~\ref{alg:mean_est_with_AT} with $\Delta = 3$, where the queries are given on Line~\ref{line:subsample_stat}. However, one subtlety is that the queries in Algorithm~\ref{alg:frf} have random sensitivities depending on the subsampled sets on Line~\ref{line:subsample_rad}. Nonetheless, we show that Chernoff bounds control this sensitivity with high probability, which yields privacy upon applying Lemma~\ref{thm:Above_Threshold}.

\begin{lemma}\label{lem:privacy_finder}
Algorithm~\ref{alg:frf} is $(\eps, \delta)$-DP.
\end{lemma}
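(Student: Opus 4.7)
The plan is to view Algorithm~\ref{alg:frf} as an instance of $\AboTh$ (Algorithm~\ref{alg:mean_est_with_AT}) with target sensitivity $\Delta = 3$ on the queries $q_t$ defined in Line~\ref{line:subsample_stat}; note that the Laplace noise scales $\frac{6}{\eps}$ and $\frac{12}{\eps}$ on Lines~\ref{line:noise_rad} and~\ref{line:noise_rad_2} match $\AboTh$'s calibration for $\Delta = 3$. The wrinkle preventing a direct application of Lemma~\ref{thm:Above_Threshold} is that each $q_t$ has a \emph{random} sensitivity, depending on the subsamples $\{S^{(i)}_t\}$ drawn on Line~\ref{line:subsample_rad}. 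The bulk of the proof is therefore to show this sensitivity is at most $3$ except with probability $\le \delta$, and then to invoke $\AboTh$'s privacy guarantee conditionally on the subsamples.

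Fix neighboring datasets $\calD, \calD'$ differing at index $j \in [n]$, and a realization $\mathbf{s}$ of the subsamples. The $i = j$ summand in $q_t$ can change entirely, contributing at most $\frac{1}{n} \cdot n = 1$ to $|q_t(\calD) - q_t(\calD')|$. For each $i \neq j$, only positions $\ell \in [k]$ with $S^{(i)}_t[\ell] = j$ are affected, so setting $c^{(i)}_t \defeq |\{\ell : S^{(i)}_t[\ell] = j\}|$ yields
\[
|q_t(\calD) - q_t(\calD')| \le 1 + \frac{1}{k}\sum_{i \neq j} c^{(i)}_t.
\]
Let $X_t \defeq \sum_{i \neq j} c^{(i)}_t$, a sum of $(n-1)k$ i.i.d.\ $\Bern(\frac{1}{n})$ variables with mean $\mu \le k$. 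I would apply Fact~\ref{fact:chernoff} with $1 + \eps = \frac{2k}{\mu}$; the exponent simplifies to $\frac{(2k - \mu)^2}{\mu + 2k} \ge \frac{k}{3}$ using $\mu \le k$. The algorithm's choice $k = 3\log(\frac{4T}{\delta})$ then gives $\Pr[X_t > 2k] \le \exp(-k/3) = \frac{\delta}{4T}$, and a union bound over $t \in [T]$ yields $\Pr[\mathcal{G}_j^c] \le \frac{\delta}{4}$, where $\mathcal{G}_j \defeq \{X_t \le 2k \text{ for all } t \in [T]\}$. On $\mathcal{G}_j$, every query $q_t$ has sensitivity at most $3$.

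To conclude, I would condition on the subsample randomness $\mathbf{s}$. Since $\AboTh$ with sensitivity $\le 3$ is $(\eps, 0)$-DP (Lemma~\ref{thm:Above_Threshold}), for any event $\event$,
\[
\Pr[\FRF(\calD) \in \event] \le \int_{\mathbf{s} \in \mathcal{G}_j} e^{\eps}\Pr[\FRF(\calD') \in \event \mid \mathbf{s}]\,\Pr[\mathbf{s}]\,\dd\mathbf{s} + \Pr[\mathcal{G}_j^c] \le e^{\eps}\Pr[\FRF(\calD') \in \event] + \delta,
\]
establishing $(\eps, \delta)$-DP. Equivalently, one can couple $\FRF$ with an idealized algorithm that aborts on $\mathcal{G}_j^c$ and invoke Lemma~\ref{lem:dp_close}. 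The main obstacle is this data-dependence of the sensitivity through the random subsamples, which prevents a direct appeal to $\AboTh$'s privacy guarantee; the conditioning argument above (or the coupling via Lemma~\ref{lem:dp_close}) is the standard remedy.
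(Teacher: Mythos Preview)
Your proposal is correct and follows essentially the same approach as the paper: bound the subsampled queries' sensitivity by $3$ on a high-probability event (via a Chernoff bound on the number of times the differing index appears in the subsamples), then apply $\AboTh$'s $(\eps,0)$-DP guarantee conditionally on the subsample randomness. The paper frames the final step through Lemma~\ref{lem:dp_close}, which you also note as an equivalent alternative to your direct integration.
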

\begin{proof}
Fix neighboring datasets $\calD, \calD' \in \ball^d(R)^n$, and assume without loss that they differ in the $n^{\text{th}}$ entry. Observe that $\FRF$ (independently) uses randomness in two places: the random subsets in Line~\ref{line:subsample_rad}, and the Laplace noise added as in the original $\AboTh$ algorithm in Lines~\ref{line:noise_rad} and~\ref{line:noise_rad_2}. 

We next claim that in any iteration $t \in [T]$, as long as the number of copies of the index $n$ occurring in $\bigcup_{i \in [n - 1]} S_t^{(i)}$ is at most $2k$, then the sensitivity of the query $q_t$ is at most $3$. To see this, denoting by $q_t, q_t'$ the random queries when Algorithm~\ref{alg:frf} is run on $\calD$, $\calD'$ respectively, and similarly defining $\{N_t^{(i)}, (N_t^{(i)})'\}_{i \in [n]}$, we 
observe that the sensitivity is controlled as follows:
\begin{align*}
q_t - q'_t &\le \frac 1 n \sum_{i \in [n - 1]} \Par{N_t^{(i)} - \Par{N_t^{(i)}}'} + \frac n n \\
&\le \frac 1 n \cdot \frac n k \cdot \Par{\text{number of copies of } n \text{ occurring in } \bigcup_{i \in [n - 1]} S_t^{(i)}} + 1 \le 3.
\end{align*}
The first line holds because the $n^{\text{th}}$ (neighboring) point has $N_t^{(n)} \le n$ and $(N_t^{(n)})' \ge 0$; the second is because every  $\ind_{\norm{\vx_i - \vx_j} \le r_t}$ used in the computation of $N_t^{(i)}$ is coupled except when $j = n$ is sampled.

Now, let $\balg$ denote Algorithm~\ref{alg:frf}, and let $\alg$ denote a variant that conditions on the randomly-sampled $\bigcup_{i \in [n - 1]} S_t^{(i)}$ containing at most $2k$ copies of the index $n$, in all encountered iterations $t \in [T]$. By using Fact~\ref{fact:chernoff} (with $\mu \gets k \cdot \frac{n - 1}{n}$, $\eps \gets 1$), due to our choice of $k$, $\bigcup_{i \in [n - 1]} S_t^{(i)}$ contains at most $2k$ copies of $n$ except with probability $\frac \delta {4T}$, so by a union bound, the total variation distance between $\balg$ and $\alg$ is at most $\frac \delta 4$. Moreover, $\alg$ is $(\eps, 0)$-DP by using Lemma~\ref{thm:Above_Threshold}. Thus, $\balg$ is $(\eps, \delta)$-DP using Lemma~\ref{lem:dp_close}.
\end{proof}

We are now ready to prove a utility and runtime guarantee on Algorithm~\ref{alg:frf}.

\begin{lemma}\label{lem:frf_utility}
Algorithm~\ref{alg:frf} runs in time $O(nd\log(\frac R r)\log(\log(\frac R r)\frac 1 \delta))$.
Moreover, if $r \le 4r^{(0.9)}$ and $n \ge \frac{2400}{\eps}\log(\frac{4T}{\delta})$, with probability $\ge 1 - \delta$, Algorithm~\ref{alg:frf} outputs $\hr$ satisfying
$\frac 1 4 r^{(0.75)} \le \hr \le 4r^{(0.9)}$.
\end{lemma}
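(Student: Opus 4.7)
The plan is to handle the runtime and utility bounds separately. Runtime is immediate: each of the $T = \lceil \log_2(R/r) \rceil$ iterations computes $n$ subsampled counts $N^{(i)}_t$, each requiring $k = O(\log(T/\delta))$ distance queries at cost $O(d)$; this totals $\tO(nd \log(R/r))$, matching the stated bound once $k$ is unfolded.

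For utility, I define the population statistic $Q_t \defeq \E[q_t] = \frac 1 n \sum_i |\{j \in [n] : \|\vx_i - \vx_j\| \le r_t\}|$, and combine (i) concentration inequalities bounding the noise and subsampling errors, with (ii) structural facts relating $Q_t$ to the critical quantile radii. For (i), observe that $kq_t$ is a sum of $nk$ independent Bernoullis, so Hoeffding yields $|q_t - Q_t| = O(\sqrt n)$ uniformly across all $T$ iterations after a union bound, and standard Laplace tails ensure $|\nu_{\thresh}|, \max_t |\nu_t| = O(\log(T/\delta)/\eps)$. The assumption $n \ge \frac{2400}{\eps}\log(\frac{4T}{\delta})$ then makes the combined error $\eta$ in $q_t + \nu_t - \hat\tau$ (as an estimate of $Q_t - \tau$) at most $0.01n$, say. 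For (ii), I will establish two complementary structural facts. First, if $r_t \ge 2r^{(0.9)}$, then $Q_t \ge 0.81n$: for each of the $\ge 0.9n$ points $\vx_i$ in $\ball^d(\vxs, r^{(0.9)})$, any $\vx_j$ in the same ball lies at distance $\le 2r^{(0.9)} \le r_t$, giving $\ge 0.9n$ neighbors per such $\vx_i$. Second, if $Q_t \ge 0.75n$, then $r_t \ge \frac 1 4 r^{(0.75)}$: by pigeonhole, some $\vx_{i^*}$ has $\ge 0.75n$ points in $\ball^d(\vx_{i^*}, r_t)$, and applying Lemma~\ref{lm:property_of_gm} with $S = \{i : \|\vx_i - \vx_{i^*}\| > r_t\}$ (of size $< n/4$) yields $\|\vxs - \vx_{i^*}\| \le 3r_t$, so the triangle inequality gives $\ball^d(\vxs, 4r_t) \supseteq \ball^d(\vx_{i^*}, r_t)$.

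Combining (i) and (ii) closes the proof. The first iteration $t$ with $r_t \ge 2r^{(0.9)}$ satisfies $r_t \in [2r^{(0.9)}, 4r^{(0.9)}]$ (either $r_1 = r$ already exceeds $2r^{(0.9)}$, using $r \le 4r^{(0.9)}$, or the doubling schedule produces this range on the first crossing). At this iteration, $q_t + \nu_t - \hat\tau \ge Q_t - \tau - O(\eta) \ge 0.035n - O(\eta) > 0$, so the algorithm halts no later, giving $\hr \le 4r^{(0.9)}$. Conversely, if the returned $\hr = r_{t^*}$ satisfied $r_{t^*} < \frac 1 4 r^{(0.75)}$, then $Q_{t^*} < 0.75n$ and $q_{t^*} + \nu_{t^*} - \hat\tau < -0.025n + O(\eta) < 0$, contradicting the algorithm halting at $t^*$. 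The main subtlety lies in the second structural fact: the pigeonhole step only produces a dense ball around some data point $\vx_{i^*}$, and transferring density to an $\vxs$-centered ball (as required by the definition of $r^{(0.75)}$) critically uses the robustness property of the geometric median captured by Lemma~\ref{lm:property_of_gm}.
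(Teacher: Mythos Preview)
Your proposal is correct and follows essentially the same approach as the paper's proof: both define the population statistic $Q_t = \E[q_t]$, establish concentration of $q_t$ around $Q_t$ via a Bernoulli-sum tail bound (you use Hoeffding, the paper uses multiplicative Chernoff), bound the Laplace noise (you do this directly, the paper invokes the $\AboTh$ utility guarantee), and then apply exactly the same two structural facts---including the identical use of Lemma~\ref{lm:property_of_gm} to transfer density from $\ball^d(\vx_{i^*}, r_t)$ to $\ball^d(\vxs, 4r_t)$. The only cosmetic difference is the order of the argument: the paper conditions on the halting time $t$ and deduces both $q_{t}^\star \ge 0.75n$ and $q_{t-1}^\star \le 0.8n$ simultaneously, while you argue the upper and lower bounds on $\hr$ separately; the content is identical.
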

\begin{proof}
The first claim is immediate. 
To see the second, for all $t \in [T]$ denote the ``ideal'' query by:
\[q^\star_t \defeq \frac 1 n \sum_{i \in [n]} \sum_{j \in [n]} \ind_{\norm{\vx_i - \vx_j} \le r_t},\]
and recall $\E q_t = q^\star_t$. Our first claim is that with probability $\ge 1 - \frac \delta {2}$, the following guarantees hold for all iterations $t \in [T]$ that Algorithm~\ref{alg:frf} completes:
\begin{equation}\label{eq:count_preserved}
q^\star_t > 0.8n \implies q_t > 0.79n, \text{ and } q^\star_t < 0.75n \implies q_t < 0.76n.
\end{equation}
To see the first part of \eqref{eq:count_preserved}, we can view $nq_t$ as a random sum of Bernoulli variables with mean $nq_t^\star > 0.8n^2 \ge 20000\log(\frac{\delta}{4T})$, so Fact~\ref{fact:chernoff} with $\eps \gets \frac 1 {80}$ yields the claim in iteration $t$ with probability $\ge 1 - \frac \delta {2T}$. Similarly, the second part of \eqref{eq:count_preserved} follows by using Fact~\ref{fact:chernoff} with $\mu < 0.7n^2$ and $(1 + \eps)\mu \gets 0.71n^2$, because
\begin{equation}\label{eq:large_eps}\exp\Par{-\frac{\eps^2 \mu}{2 + \eps}} \le \exp\Par{-\frac{\eps^2 \mu}{151\eps}} = \exp\Par{-\frac{\eps\mu}{151}} \le \exp\Par{-\frac{n^2}{15100}} \le \frac \delta {2T}\end{equation}
for the relevant range of $n$ and $\eps \ge \frac 1 {75}$, $\eps \mu \ge \frac {n^2}{100}$. We thus obtain \eqref{eq:count_preserved} after a union bound over all $t \in [T]$.

Now, suppose that $t \in [T]$ is the first index where $q_t + \nu_t \ge \htau$, so that $\hr = r_t$ and $\half \hr = r_{t - 1}$, where we let $r_0 \defeq \frac r 2$. If no such query passes, then we set $t = T + 1$ by default. Then by the utility guarantees of Lemma~\ref{lem:frf_utility}, we have that with probability $\ge 1 - \frac \delta 2$,
\begin{align*}
q_t \ge 0.76n,\; q_{t - 1} \le 0.79n,
\end{align*}
since $\frac n {100} \ge \alpha = \frac{24}{\eps}\log(\frac{4T}{\delta})$ . By taking the contrapositive of \eqref{eq:count_preserved}, we can conclude that with probability $\ge 1 - \delta$, we have $q^\star_t \ge 0.75n$ and $q^\star_{t - 1} \le 0.8n$. Condition on this event for the rest of the proof.

Because $q_t^\star \ge 0.75n$, there is clearly some $\vx_i \in \calD$ such that $|\calD \cap \ball^d(\vx_i, \hr)| \ge 0.75n$, as this is the average number of dataset elements in a radius-$\hr$ ball centered at a random $\vx \in \calD$. Now applying Lemma~\ref{lm:property_of_gm} with $S$ set to the indices of $\calD \setminus \ball^d(\vx_i, \hr)$, so that $|S| \le 0.25n$, gives
\[\norm{\vxs - \vx_i} \le 3\hr \implies \norm{\vxs - \vx_j} \le 4\hr \text{ for all } \vx_j \in \calD \cap \ball^d(\vx_i, \hr).\]
This implies $4\hr \ge r^{(0.75)}$ as claimed. Further, because $q_{t - 1}^\star \le 0.8n$, we claim $\frac \hr 2 > 2r^{(0.9)}$ cannot hold. Assume for contradiction that this happened, and let $S \defeq \calD \cap \ball^d(\vxs, r^{(0.9)} )$. By the triangle inequality, for all of the $0.9n$ choices of $\vx_i \in S$, we have that 
\[\sum_{j \in [n]} \ind_{\norm{\vx_i - \vx_j} \le \frac{\hr}{2}} \ge 0.9n,\]
which implies that $q^\star_{t - 1} \ge 0.81n$, a contradiction. Thus, we obtain $\hr \le 4r^{(0.9)}$ as well.

We remark that all of this logic handles the case where $2 \le t \le T$ is the iteration where Algorithm~\ref{alg:frf} returns. However, it is straightforward to check that the conclusion holds when $t = 1$ (i.e., $\hr = r$) because we assumed $r \le 4r^{(0.9)}$, and the lower bound logic on $\hr$ is the same as before. Similarly, if $\hr = R$, then the upper bound logic on $\hr$ is the same as before, and $2R \ge r^{(1)} \ge r^{(0.75)} \ge \frac 2 4 r^{(0.75)}$ is clear.
\end{proof}

In summary, Lemmas~\ref{lem:privacy_finder} and~\ref{lem:frf_utility} show that we can privately estimate $\hr$ satisfying $\frac 1 4 r^{(0.75)} \le \hr \le 4r^{(0.9)}$ in nearly-linear time. The upper bound implies (with Lemma~\ref{lem:opt_lb}) that $\hr = O(f_{\calD}(\vxs))$; on the other hand, the lower bound will be critically used in our centerpoint estimation procedure in Section~\ref{ssec:center}.

\subsection{Centerpoint estimation}\label{ssec:center}

In this section, we combine the subsampling strategies used in Section~\ref{ssec:radius} with a simplification of the FriendlyCore framework \cite{tsfadia2022friendlycore} to obtain a private estimate of an approximate centerpoint. 

\begin{algorithm}
\caption{$\FC(\calD, \hr, \eps, \delta)$}
\label{alg:fc}
{\bf Input:} Dataset $\calD \in \ball^d(R)^n$, radius $\hr \in \R_{>0}$, privacy bounds $(\eps, \delta) \in [0, 1]^2$\;
\begin{algorithmic}[1]
\STATE $k \gets 600\log(\frac{18n}{\delta})$\;
\FOR{$i \in [n]$}
\STATE $S_i\gets \Unif([n])^{\otimes k}$\label{line:subsamp_center}
\STATE $f_i \gets \sum_{j \in S^{(i)}} \ind_{\norm{\vx_i - \vx_j} \le 2\hr}$\;
\STATE $p_i \gets \min(\max(0, \frac{f_i - 0.5k}{0.25k}), 1)$\;
\ENDFOR
\STATE $Z \gets \sum_{i \in [n]} p_i$\;
\STATE $\xi \sim \BL(\frac{24}{\eps}, \frac{24}{\eps}\log(\frac{24}{\delta}))$\;
\IF{$Z + \xi - \frac{24}{\eps}\log(\frac{24}{\delta}) \le 0.55n$}{\label{line:Zsmall}
\STATE {\bf Return:} $\0_d$\;\label{line:Zsmall_return}
}
\ENDIF
\STATE $\bvx \gets \frac 1 {Z} \sum_{i \in [n]} p_i \vx_i$\; \label{line:xavg_compute}
\STATE $\vxi \sim \Nor(\0_d, \sigma^2 \id_d)$, for $\sigma \gets \frac{1600\hr}{n\eps} \sqrt{\log(\frac {12} \delta)}$\;
\STATE {\bf Return:} $\bvx + \vxi$\;\label{line:return_center}
\end{algorithmic}
\end{algorithm}

To briefly explain, Algorithm~\ref{alg:fc} outputs a noisy weighted average of the dataset. The weights $\{p_i\}_{i \in [n]}$ linearly interpolate estimated scores $f_i \in [0.5k, 0.75k]$ into the range $[0, 1]$, sending $f_i \ge 0.75k$ to $1$, and $f_i \le 0.5k$ to $0$.
We first make some basic observations about the points that contribute positively to the weighted combination $\bvx$, based on binomial concentration.

\begin{lemma}\label{lem:center_utility}
Assume that $\hr \ge r^{(0.75)}$ in the context of Algorithm~\ref{alg:fc}.
With probability $\ge 1 - \frac \delta {18}$, every $i \in [n]$ that is assigned $p_i > 0$ in Algorithm~\ref{alg:fc} satisfies $\norm{\vx_i - \vxs} \le 3\hr$, and $Z \ge 0.6n$.
\end{lemma}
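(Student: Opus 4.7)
The strategy is to reduce both claims to binomial concentration (Fact~\ref{fact:chernoff}) applied to the sampled counts $f_i$. Writing $f_i^\star \defeq |\{j \in [n] : \norm{\vx_i - \vx_j} \le 2\hr\}|$, the variable $f_i$ is a sum of $k$ i.i.d.\ $\Bern(f_i^\star/n)$ random variables, so $\E[f_i] = k f_i^\star / n$. From the definition of $p_i$ in Algorithm~\ref{alg:fc}, we have $p_i > 0$ iff $f_i > 0.5k$ and $p_i \ge 0.8$ iff $f_i \ge 0.7k$, so both claims reduce to controlling $f_i$ around its mean and then union bounding over $i \in [n]$.

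For the containment claim, I would prove the contrapositive: if $\norm{\vx_i - \vxs} > 3\hr$, then any $j$ with $\norm{\vx_i - \vx_j} \le 2\hr$ must lie outside $\ball^d(\vxs, \hr)$ by the triangle inequality. Because $\hr \ge r^{(0.75)}$, there are at most $0.25n$ such indices $j$, so $\E[f_i] \le 0.25k$. An upper-tail Chernoff bound with $\eps = 1$ then gives $\Pr[f_i > 0.5k] \le \exp(-k/12)$, which by the choice $k = 600\log(18n/\delta)$ is much smaller than $\delta/n$; a union bound over $i \in [n]$ handles this part with negligible failure probability.

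For the mass bound, let $B \defeq \{i \in [n] : \norm{\vx_i - \vxs} \le \hr\}$, so $|B| \ge 0.75n$ by the hypothesis $\hr \ge r^{(0.75)}$. For each $i \in B$, the triangle inequality implies every $j \in B$ satisfies $\norm{\vx_i - \vx_j} \le 2\hr$, so $f_i^\star \ge |B| \ge 0.75n$ and hence $\E[f_i] \ge 0.75k$. A lower-tail Chernoff bound with $\eps = 1/15$ then yields $\Pr[f_i < 0.7k] \le \exp(-k/600) = \delta/(18n)$, where the equality uses the choice $k = 600\log(18n/\delta)$. A union bound over $B$ ensures $p_i \ge (0.7k - 0.5k)/(0.25k) = 0.8$ for every $i \in B$ with probability $\ge 1 - \delta/18$, giving $Z \ge 0.8 \cdot 0.75n = 0.6n$. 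Combining with the previous paragraph completes the proof.

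The only subtle aspect is fitting the Chernoff exponents to the constant in $k$: the lower-tail estimate driving the mass bound is the tighter of the two and essentially pins down the factor $600$, while the upper-tail estimate for containment has substantial slack and is not the bottleneck.
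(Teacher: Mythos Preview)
Your proof is correct and takes essentially the same approach as the paper: both reduce to Chernoff bounds on the subsampled counts $f_i$, use the triangle inequality to show that far points have $\E[f_i]\le 0.25k$ while points in $\ball^d(\vxs,\hr)$ have $\E[f_i]\ge 0.75k$, and then union bound over $i\in[n]$. The only cosmetic difference is that the paper first records both concentration implications in a single displayed statement before applying them, whereas you argue each claim directly; the constants and the identification of the lower-tail bound (yielding $\exp(-k/600)=\delta/(18n)$) as the bottleneck match.
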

\begin{proof}
Our proof is analogous to Lemma~\ref{lem:frf_utility}, where for all $i \in [n]$ we define the ``ideal score''
\[f^\star_i \defeq \frac k n \sum_{j \in [n]} \ind_{\norm{\vx_i - \vx_j} \le 2\hr},\]
such that $\E f_i = f^\star_i$. We first claim that with probability $\ge 1 - \frac \delta {18}$, the following hold for all $i \in [n]$:
\begin{equation}\label{eq:binom_conc_fc}f^\star_i \ge 0.75k \implies f_i \ge 0.7k, \text{ and } f^\star_i \le 0.45k \implies f_i \le 0.5k. \end{equation}
The first claim above is immediate from our choice of $k$ and Fact~\ref{fact:chernoff} (with failure probability $\le \frac \delta {18n}$ for each $i \in [n]$); the second follows (with the same failure probability) similarly to \eqref{eq:large_eps}, i.e.,
\[\exp\Par{-\frac{\eps^2 \mu}{2+\eps}} \le \exp\Par{-\frac{\eps\mu}{20}} \le \exp\Par{-\frac{k}{400}} \le \frac \delta {18n},\]
in our application, with $\eps \ge \frac 1 9$ and $\eps\mu \ge \frac k {20}$. Thus a union bound proves \eqref{eq:binom_conc_fc}. 

To obtain the first claim, observe that any $i \in [n]$ with $\norm{\vx_i - \vxs} > 3\hr$ must have that $\ball(\vx_i, 2\hr)$ does not intersect $\ball(\vxs, \hr)$. However, $\ball(\vxs, \hr)$ contains $0.75n$ points in $\calD$ by assumption, so $f^\star_i \le 0.25k$ and thus as long as the implication \eqref{eq:binom_conc_fc} holds, then $p_i = 0$ as desired. For the second claim, any $\vx_i$ satisfying $\norm{\vx_i - \vxs} \le r^{(0.75)}$ has $|\ball(\vx_i, 2\hr) \cap \calD| \ge 0.75n$, so that $f^\star_i \ge 0.7k$. Thus, every such $\vx_i$ has $p_i \ge 0.8$ as long as \eqref{eq:binom_conc_fc} holds, so the total contribution made by the $\ge 0.75n$ such $\vx_i$ to $Z$ is at least $0.6n$.
\end{proof}

We next observe that whenever the algorithm does not return on Line~\ref{line:Zsmall_return}, all surviving points (i.e., with $p_i > 0$) must lie in a ball of diameter $O(\hr)$, under a high-probability event over our subsampled scores. Importantly, this holds independently of any assumption on $\hr$ (e.g., we do not require $\hr \ge r^{(0.75)}$).

\begin{lemma}\label{lem:survive_ball}
Suppose that it is the case that in the context of Algorithm~\ref{alg:fc}, we have
\begin{equation}\label{eq:fc_conc_imply}
f^\star_i \le 0.45k \implies f_i \le 0.5k, \text{ and } f^\star_i \le 0.55k \implies f_i \le 0.6k,
\end{equation}
for all $i \in [n]$. If $Z > 0.55n$, there exists some $\vx \in \R^d$ such that $\{\vx_i \mid p_i > 0\}_{i \in [n]} \subseteq \ball(\vx, 4\hr)$. Moreover, the event \eqref{eq:fc_conc_imply} occurs with probability $\ge 1 - \frac \delta 9$.
\end{lemma}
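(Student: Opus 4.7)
The plan is to split the lemma into two independent parts: the probabilistic claim that \eqref{eq:fc_conc_imply} holds with probability $\ge 1 - \frac \delta 9$, and the deterministic geometric claim that, under \eqref{eq:fc_conc_imply} and $Z > 0.55n$, all surviving points lie in a common ball of radius $4\hr$.

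For the probabilistic claim, I would proceed analogously to Lemma~\ref{lem:center_utility}. Fix any $i \in [n]$; the score $f_i$ is a sum of $k$ i.i.d.\ Bernoulli random variables with collective mean $f^\star_i$, so both implications in \eqref{eq:fc_conc_imply} follow from the upper-tail Chernoff bound in Fact~\ref{fact:chernoff}, applied with relative errors $\eps = \frac 1 9$ (taking $\mu = 0.45k$, so $(1 + \eps)\mu = 0.5k$) and $\eps = \frac 1 {11}$ (taking $\mu = 0.55k$, so $(1+\eps)\mu = 0.6k$) respectively. With $k = 600\log(\frac {18n} \delta)$, each implication fails for a fixed $i$ with probability at most $\frac \delta {18n}$, and a union bound over the two implications and the $n$ choices of $i$ yields the claimed $1 - \frac \delta 9$ bound.

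For the geometric claim, assume \eqref{eq:fc_conc_imply} and $Z > 0.55n$. The key idea is to stratify the surviving indices into two layers. Let $A \defeq \{i \in [n] : p_i > 0\}$ and $A' \defeq \{i \in [n] : p_i > 0.4\}$. By the definition of $p_i$, any $i \in A$ satisfies $f_i > 0.5k$, so the contrapositive of the first implication gives $f^\star_i > 0.45k$, i.e., $|\ball(\vx_i, 2\hr) \cap \calD| > 0.45n$. Analogously, any $i \in A'$ satisfies $f_i > 0.6k$ (since $p_i = \frac{f_i - 0.5k}{0.25k} > 0.4$), so the contrapositive of the second implication gives $f^\star_i > 0.55k$, i.e., $|\ball(\vx_i, 2\hr) \cap \calD| > 0.55n$. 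Moreover, $A'$ is nonempty: if every $p_i \le 0.4$, then $Z = \sum_i p_i \le 0.4n < 0.55n$, contradicting the hypothesis. Pick any anchor index $a \in A'$ and set $\vx \defeq \vx_a$.

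The main obstacle, and the reason the second implication in \eqref{eq:fc_conc_imply} is needed at all, is that the bound $|\ball(\vx_i, 2\hr) \cap \calD| > 0.45n$ alone cannot force two indices in $A$ to cluster, since two disjoint $2\hr$-balls could each hold up to $0.55n$ dataset points without contradiction. The stronger bound accessible on the $A'$ layer resolves this: for any $i \in A$, the $\calD$-masses of $\ball(\vx_i, 2\hr)$ and $\ball(\vx_a, 2\hr)$ sum to more than $0.45n + 0.55n = n$, so by pigeonhole the two balls share some dataset point $\vx_j$, and the triangle inequality yields $\norm{\vx_i - \vx_a} \le 4\hr$. This places every surviving $\vx_i$ in $\ball(\vx, 4\hr)$, completing the proof.
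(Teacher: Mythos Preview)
Your proof is correct and follows essentially the same approach as the paper's: a Chernoff/union bound for the probabilistic claim, and for the geometric claim, extracting a single anchor index with $f_j > 0.6k$ from the hypothesis $Z > 0.55n$, then using the contrapositives of \eqref{eq:fc_conc_imply} to force every surviving ball to intersect the anchor's ball via the pigeonhole argument on $\calD$-masses. The only cosmetic difference is how the anchor is located: the paper observes that the average $p_i$ exceeds $0.55$ so the maximum does too, whereas you use the contrapositive ``if all $p_i \le 0.4$ then $Z \le 0.4n$''; both routes yield an index with $f_i > 0.6k$ and the rest is identical.
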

\begin{proof}
The first statement in \eqref{eq:fc_conc_imply} was proven in \eqref{eq:binom_conc_fc} to hold with probability $\frac \delta {18}$, and the second statement's proof is identical to the first half up to changing constants, so we omit it. Conditioned on this event, every $\vx_i$ with $p_i > 0$ has $f^\star_i > 0.45k$. Moreover, because $Z > 0.55n$, there exists some $j \in [n]$ (i.e., with the maximum value of $p_j$) such that $p_j > 0.55$, which implies $f_j > 0.6k$ and thus $f^\star_j > 0.55k$.

So, we have shown that $\ball(\vx_j, 2\hr)$ contains more than $0.55n$ points in $\calD$, and every surviving $i \in [n]$ (i.e., with positive $p_i$) contains more than $0.45n$ points in $\calD$. Thus, $\ball(\vx_j, 2\hr)$ and $\ball(\vx_i, 2\hr)$ intersect, and in particular, $\ball(\vx_j, 4\hr)$ contains every surviving point by the triangle inequality.
\end{proof}

We are now ready to prove a privacy bound on Algorithm~\ref{alg:fc}.

\begin{lemma}\label{lem:fc_priv}
If $n \ge 20$, Algorithm~\ref{alg:fc} is $(\eps, \delta)$-DP.
\end{lemma}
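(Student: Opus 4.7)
The plan is to combine a high-probability ``good event'' on the subsampling randomness of Line~\ref{line:subsamp_center} with the coupling tool of Lemma~\ref{lem:dp_close}, mirroring the template used in the proof of Lemma~\ref{lem:privacy_finder}. I fix neighboring $\calD, \calD'$ differing in entry $n$, and let $E$ be the event that index $n$ appears at most $2k$ times in $\bigcup_{i \in [n-1]} S_i$. By Fact~\ref{fact:chernoff} applied exactly as in Lemma~\ref{lem:privacy_finder}, our choice of $k$ ensures $\Pr[E^c] \le \delta/12$. I will design an idealized algorithm $\alg$ that agrees with $\balg \defeq \FC(\calD, \hr, \eps, \delta)$ whenever $E$ holds and is $(\eps, \delta/2)$-DP via composition of two standard mechanisms; Lemma~\ref{lem:dp_close} will then promote this to $(\eps, \delta)$-DP for $\balg$.

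Under $E$, for each $i \ne n$ I have $|f_i - f'_i| \le m_i$, where $m_i$ is the multiplicity of index $n$ in $S_i$. Because $p_i$ is the $\frac{1}{0.25k}$-Lipschitz clipping of $f_i$ into $[0,1]$, it follows that $|p_i - p'_i| \le 4m_i/k$. Summing (with $|p_n - p'_n| \le 1$) gives $|Z - Z'| \le 1 + (4/k)\sum_{i \ne n} m_i \le 9$. Fact~\ref{fact:bl_mech} with this sensitivity then certifies that the threshold check on Line~\ref{line:Zsmall} is $(\eps/2, \delta/8)$-DP for the scale $24/\eps$ and cap $\frac{24}{\eps}\log(\frac{24}{\delta})$ appearing in the algorithm.

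For the Gaussian release on Line~\ref{line:return_center} I must bound $\norm{\bar{\vx} - \bar{\vx}'}$ in the branch where the check passes on both datasets (on either failure branch the algorithm outputs $\0_d$, a data-independent constant). Boundedness of the BL noise forces $Z \ge 0.55n$ when the check passes, and the sensitivity bound $|Z - Z'| \le 9$ together with $n \ge 20$ then gives $Z' \ge 0.5n$. Imposing additionally (by a union bound absorbed into the $\delta$ budget) the high-probability concentration event of Lemma~\ref{lem:survive_ball}, I claim there is a common center $\vc \in \R^d$ such that every surviving point in both $\calD$ and $\calD'$ lies in $\ball(\vc, O(\hr))$. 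Given this, the decomposition
\bas{
\bar{\vx} - \bar{\vx}' = \frac{1}{Z'}\sum_{i \in [n]} \Par{p_i(\vx_i - \vc) - p'_i(\vx'_i - \vc)} + \frac{Z - Z'}{Z'}(\vc - \bar{\vx})
}
reduces matters to two pieces: only the at most $2k+1$ indices with $p_i \ne p'_i$ or $i=n$ contribute to the first sum, each contributing $O(m_i \hr / k)$, so the sum is $O(\hr)$, which when divided by $Z' = \Omega(n)$ is $O(\hr/n)$; the second term is similarly $O(\hr/n)$ using $|Z - Z'| \le 9$, $Z' = \Omega(n)$, and $\norm{\vc - \bar{\vx}} = O(\hr)$. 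Hence $\bar{\vx}$ has sensitivity $O(\hr/n)$, and Fact~\ref{fact:gaussian_mech} shows the Gaussian release is $(\eps/2, \delta/8)$-DP with the choice of $\sigma$ in the algorithm.

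Basic composition then gives $(\eps, \delta/4)$-DP for $\alg$, and the final application of Lemma~\ref{lem:dp_close} upgrades $\balg$ to $(\eps, \delta)$-DP. The main technical obstacle I anticipate is justifying the common center $\vc$ across the two neighbors: Lemma~\ref{lem:survive_ball} as stated requires $Z > 0.55n$, but on $\calD'$ I only know $Z' > 0.5n$. My plan is to sidestep this by taking $\vc = \vx_j$ for the witness index $j$ on $\calD$ (so $p_j > 0.55$ and $f^\star_j > 0.55k$), and noting that on $\calD'$ the same point $\vx_j$ still has $p'_j > 0.55 - 4m_j/k$ while $f^\star_j$ computed on $\calD'$ differs by at most $1/n$, so a slightly relaxed version of the ball-intersection argument in Lemma~\ref{lem:survive_ball} will continue to force all surviving $\vx'_i$ within $O(\hr)$ of $\vx_j$.
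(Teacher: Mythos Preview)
Your high-level plan matches the paper's exactly: couple to an idealized algorithm via Lemma~\ref{lem:dp_close} conditioned on a subsampling good event (plus the concentration event of Lemma~\ref{lem:survive_ball}), show $Z$ has bounded sensitivity so the BL step is $(\eps/2,\cdot)$-DP, show $\bvx$ has bounded sensitivity so the Gaussian step is $(\eps/2,\cdot)$-DP, and compose. Your bound $|Z-Z'|\le 9$ is in fact sharper than the paper's $\le 12$, and your recentered decomposition of $\bvx-\bvx'$ is the same device the paper uses.

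There is one concrete gap. You assert that $Z \ge 0.55n$, $|Z-Z'|\le 9$, and $n\ge 20$ give $Z'\ge 0.5n$, but $0.55n-9 \ge 0.5n$ requires $n\ge 180$; at $n=20$ you only get $Z'\ge 2$. Without $Z' = \Omega(n)$ you cannot control the $1/Z'$ factor, so this is not cosmetic. Your proposed workaround---take the anchor $\vx_j$ from $\calD$ and run a relaxed ball-intersection on $\calD'$---does not close the gap either: the individual multiplicity $m_j$ is not bounded (only $\sum_i m_i \le 2k$ is), the anchor index could be $j=n$ and hence not correspond to a shared point, and even granting that the anchor's ball on $\calD'$ contains $>0.5n$ points, the totals $0.5n + 0.45n$ do not exceed $n$, so the two $2\hr$-balls need not intersect.

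The paper's route to the common center is different and avoids these issues. It does not try to deduce a lower bound on $Z'$ from the sensitivity of $Z$; instead it notes that the bounded BL noise forces $Z>0.55n$ whenever the Gaussian branch is entered on $\calD$, \emph{and likewise} $Z'>0.55n$ whenever it is entered on $\calD'$, so the sensitivity of $\bvx$ only needs to be bounded under both conditions simultaneously. With both $Z,Z'>0.55n$, Lemma~\ref{lem:survive_ball} applies to each dataset separately, and since each has $>0.55n$ surviving indices, the two surviving index sets intersect in $>0.1n\ge 2$ indices (this is the only place $n\ge 20$ is used). At least one common surviving index is $\ne n$, so the corresponding data point $\hvx$ is literally shared between $\calD$ and $\calD'$; a ball $\ball(\hvx,8\hr)$ then contains every surviving point of either dataset, and the sensitivity bound $\norm{\bvx-\bvx'}\le 400\hr/n$ follows by your same recentering calculation.
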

\begin{proof}
Fix neighboring datasets $\calD, \calD' \in \ball^d(R)^n$, and assume without loss that they differ in the $n^{\text{th}}$ entry $\vx_n \neq \vx'_n$. We will define $\alg$, an alternate variant of Algorithm~\ref{alg:fc}, which we denote $\balg$, where we condition on the following two events occurring. First, the index $n$ should occur at most $2k$ times in $\bigcup_{i \in [n - 1]} S_i$. Second, the implications \eqref{eq:fc_conc_imply} must hold. It is clear that the first described event occurs with probability $\ge 1 - \frac \delta {18}$ by using Fact~\ref{fact:chernoff} with our choice of $k$, and we proved in Lemma~\ref{lem:survive_ball} that the second described event occurs with probability $\ge 1 - \frac \delta 9$. Thus, the total variation distance between $\alg$ and $\balg$ is at most $\frac \delta 6$. We will prove that $\alg$ is $(\eps, \frac \delta 3)$-DP, from which Lemma~\ref{lem:dp_close} gives that $\balg$ is $(\eps, \delta)$-DP.

We begin by showing that according to $\alg$, the statistic $Z + \xi - \frac {24} \eps \log(\frac{24}{\delta})$ satisfies $(\frac \eps 2, \frac \delta 6)$-DP. To do so, we will prove that $Z$ has sensitivity $\le 12$, and then apply Fact~\ref{fact:bl_mech}. 
Recall that by assumption, when $\alg$ is run the number of times $n$ appears in $\bigcup_{i \in [n]} S_i$ is at most $2k$. Thus, for coupled values of $Z, Z'$ corresponding to $\calD, \calD'$, where the coupling is over the random indices selected on Line~\ref{line:subsamp_center}, 
\[Z - Z' \le \frac{1}{0.25k}\Par{k - 0} + \frac 1 {0.25k} \Par{\text{number of copies of } n \text{ occurring in } \bigcup_{i \in [n - 1]} S_i} \le 12.\]
In fact, we note that the following stronger unsigned bound holds:
\begin{equation}\label{eq:abs_p_bound}
\begin{aligned}
\sum_{i \in [n]} |p_i - p'_i| &\le 4 + \sum_{i \in [n - 1]} |p_i - p'_i| \\
&\le 4 + \frac 1 {0.25k} \Par{\text{number of copies of } n \text{ occurring in } \bigcup_{i \in [n - 1]} S_i} \le 12,
\end{aligned}
\end{equation}
because the clipping to the interval $[0, 1]$ in the definitions of $p_i$, $p'_i$ can only improve $|p_i - p'_i|$, and the distance between the corresponding $f_i, f'_i$ is at most the number of copies of $n$ occurring in them.

Now, it remains to bound the privacy loss of the rest of $\alg$, depending on whether Line~\ref{line:Zsmall} passes. If the algorithm terminates on Line~\ref{line:Zsmall_return}, then there is no additional privacy loss. 

Otherwise, suppose we enter the branch starting on Line~\ref{line:xavg_compute}. 
Our next step is to bound the sensitivity of $\bvx$. Observe that whenever this branch is entered, we necessarily have $Z > 0.55n$ (and similarly, $Z' > 0.55n$), because $Z + \xi - \frac{24}{\eps}\log(\frac{24}{\delta}) \le Z$ deterministically. Thus, Lemma~\ref{lem:survive_ball} guarantees that in $\alg$, all $\vx_i \in \calD$ with $p_i > 0$ are contained in a ball of radius $4\hr$, and similarly all surviving elements in $\calD'$ are contained in a ball of radius $4\hr$. However, there are at least $0.55n$ surviving elements of both $\calD$ and $\calD'$, and in particular, for the given range of $n$ there are at least two common surviving elements (one of which must be shared). A ball of radius $8\hr$ around this element, which we denote $\hvx$ in the rest of the proof, contains all surviving elements in $\calD$ (according to $\{p_i\}_{i \in [n]}$) and in $\calD'$ (according to $\{p'_i\}_{i \in [n]}$).

Now we wish to bound $\bvx - \bvx'$, where $\bvx' \defeq \frac 1 {Z'} \sum_{i \in [n]} p'_i \vx'_i$. We have shown that in $\alg$, $|Z - Z'| \le 12$ and $\min(Z, Z') \ge 0.55n$. For convenience, define $\vy_i \defeq \vx_i - \hvx$ for all $i \in [n]$, and similarly define $\vy'_i$. Recalling that all surviving elements of $\calD \cup \calD'$ are contained in $\ball(\hvx, 8\hr)$,
\begin{align*}
\norm{\bvx - \bvx'} &= \norm{\frac 1 Z \sum_{i \in [n]} p_i \vy_i - \frac 1 {Z'} \sum_{i \in [n]} p'_i \vy'_i} \\
&\le \Abs{\frac 1 Z - \frac 1 {Z'}} \norm{\sum_{i \in [n - 1]} p_i \vy_i} + \frac 1 {Z'} \norm{\sum_{i \in [n - 1]} (p_i - p'_i) \vy_i} + \frac{p_n}{Z} \norm{\vy_n} + \frac{p'_n}{Z'} \norm{\vy'_n} \\
&\le \Abs{\frac {Z' - Z} {Z'}} \norm{\frac 1 Z\sum_{i \in [n - 1]} p_i \vy_i} + \frac{8\hr}{0.55n} \sum_{i \in [n - 1]} |p_i - p'_i| + \frac{16\hr}{0.55n} \\
&\le \frac{96\hr}{0.55n} + \frac{104\hr}{0.55n} + \frac{16\hr}{0.55n} \le \frac{400\hr}{n}.
\end{align*}
The first line shifted both $\bvx$ and $\bvx'$ by $\hvx$, and the second line applied the triangle inequality. The third line applied the triangle inequality to the middle term, and bounded the contribution of $\vy_n$ by using that $\norm{\vy_n} \le 8\hr$ if $p_n > 0$; a similar bound applies to $\vy'_n$.
In the fourth line, we used the triangle inequality on the first term, as well as that $\sum_{i \in [n - 1]} |p_i - p'_i| \le 1 + 12$ by using \eqref{eq:abs_p_bound} and accounting for the $n^{\text{th}}$ point separately.
Thus, $\bvx$ has sensitivity $\frac{400\hr}{n}$ in $\alg$. Fact~\ref{fact:gaussian_mech} now guarantees that Line~\ref{line:return_center} is also $(\frac \eps 2, \frac \delta 6)$-DP.
\end{proof}

We now combine our developments to give our constant-factor approximation to the geometric median.

\begin{theorem}\label{thm:constant_factor}
Let $\calD = \{\vx_i\}_{i \in [n]} \subset \ball^d(R)$ for $R > 0$, let $0 < r \le 4r^{(0.9)}(\calD)$, and let $(\eps, \delta) \in [0, 1]^2$. Suppose that 
\[n \ge C \cdot \Par{\frac{\sqrt{d}\log(\frac 1 \delta)}{\eps} + \frac{\log\Par{\frac {\log(\frac R r)} {\delta} }}{\eps} },\]
for a sufficiently large constant $C$. There is an $(\eps, \delta)$-DP algorithm (Algorithm~\ref{alg:fc} using Algorithm~\ref{alg:frf} to compute the parameter $\hr$) that returns $(\hvx, \hr)$ such that with probability $\ge 1 - \delta$, following notation \eqref{eq:gm_def}, 
\begin{equation}\label{eq:func_lip}
f_{\calD}(\hvx) \le (40C' + 1) f_{\calD}(\vxs(\calD)),\; \hr \le 4r^{(0.9)},
\end{equation}
for a universal constant $C'$.
Moreover, $\norm{\vxs(\calD) - \hvx} \le C' \hr$. The algorithm runs in time
\[O\Par{nd\log\Par{\frac R r}\log\Par{\frac{n\log(\frac R r)} \delta}}.\]
\end{theorem}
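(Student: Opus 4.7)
The plan is to run the two subroutines of Section~\ref{sec:constant} in sequence. First I would invoke $\FRF(\calD, r, R, \eps/2, \delta/4)$ to obtain a radius estimate $\hr$, which by Lemma~\ref{lem:frf_utility} lies in $[\tfrac{1}{4}r^{(0.75)}, 4r^{(0.9)}]$ with probability $\ge 1 - \delta/4$ (the sample-complexity hypothesis easily absorbs the $(2400/\eps)\log(4T/\delta)$ bound required there). I would then feed the \emph{rescaled} radius $4\hr$ into $\FC(\calD, 4\hr, \eps/2, \delta/4)$, so that its input meets the hypothesis $4\hr \ge r^{(0.75)}$ needed to apply Lemma~\ref{lem:center_utility}. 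The algorithm returns $(\hvx, \hr)$, where $\hvx$ is the output of $\FC$; returning $\hr$ rather than $4\hr$ is what matches the theorem's claim $\hr \le 4r^{(0.9)}$.

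Privacy will follow immediately by basic composition of Lemmas~\ref{lem:privacy_finder} and~\ref{lem:fc_priv}: each call is $(\eps/2, \delta/4)$-DP, yielding $(\eps, \delta/2)$-DP overall, which is only stronger than the required $(\eps, \delta)$-DP. The sample bound easily implies $n \ge 20$ as needed to invoke Lemma~\ref{lem:fc_priv}. For utility I would condition on the conjunction of three good events: (i) $\FRF$ succeeds per Lemma~\ref{lem:frf_utility}, (ii) the conclusions of Lemma~\ref{lem:center_utility} hold for $\FC$ with input $4\hr$, namely $Z \ge 0.6n$ and $\norm{\vx_i - \vxs} \le 12\hr$ for every $\vx_i$ with $p_i > 0$, and (iii) a standard Gaussian tail bound for $\vxi \sim \Nor(\0_d, \sigma^2 \id_d)$. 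Each fails with probability $O(\delta)$, and it is routine to allocate constants so the total utility failure is at most $\delta$.

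On this good event I would first confirm that Line~\ref{line:Zsmall_return} of $\FC$ does not fire: the bounded Laplace offset $\xi - \tfrac{24}{\eps}\log(\tfrac{24}{\delta})$ has absolute value $O(\log(1/\delta)/\eps) = o(n)$ under the sample bound, and $Z \ge 0.6n$, so the noisy check $Z + \xi - \tfrac{24}{\eps}\log(\tfrac{24}{\delta}) > 0.55n$ passes. Hence $\bvx$ is a convex combination of points contained in $\ball(\vxs, 12\hr)$, yielding $\norm{\bvx - \vxs} \le 12\hr$. The choice of $\sigma = \Theta(\hr\sqrt{\log(1/\delta)}/(n\eps))$, combined with $n \gtrsim \sqrt{d\log(1/\delta)}/\eps$, gives $\norm{\vxi} \le \hr$ with high probability, so $\norm{\hvx - \vxs} \le 13\hr$, matching the claimed $C'\hr$ with $C' = 13$. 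To convert this into a multiplicative guarantee I would use the $1$-Lipschitz property of $f_{\calD}$ (as an average of the $1$-Lipschitz maps $\norm{\cdot - \vx_i}$) to bound $f_{\calD}(\hvx) \le f_{\calD}(\vxs) + \norm{\hvx - \vxs}$, together with Lemma~\ref{lem:opt_lb} applied at $\tau = 0.9$ to get $r^{(0.9)} \le 10 f_{\calD}(\vxs)$, and hence $\hr \le 4r^{(0.9)} \le 40 f_{\calD}(\vxs)$. Substituting yields $f_{\calD}(\hvx) \le (40C' + 1) f_{\calD}(\vxs)$ as claimed.

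The main point requiring care is the interface between the two subroutines: the factor-of-$4$ rescaling of $\FRF$'s output before it is passed to $\FC$ is what aligns their guarantees, and it propagates into both the radius of the convex hull supporting $\bvx$ and the scale of the Gaussian noise, so the sample threshold $n \gtrsim \sqrt{d\log(1/\delta)}/\eps$ has to be tight enough to absorb it. Everything else is routine bookkeeping; the runtime adds Lemma~\ref{lem:frf_utility}'s $O(nd\log(R/r)\log(\log(R/r)/\delta))$ bound to $\FC$'s dominant $O(ndk) = O(nd\log(n/\delta))$ subsampling cost, and both fit inside the claimed $O(nd\log(R/r)\log(n\log(R/r)/\delta))$ after absorbing constants.
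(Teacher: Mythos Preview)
Your proposal is correct and follows essentially the same approach as the paper: run $\FRF$ and then $\FC$ on (a constant multiple of) its output, compose for privacy, and use Lemma~\ref{lem:center_utility} plus Gaussian concentration and $1$-Lipschitzness for utility. Your explicit rescaling of the radius to $4\hr$ before invoking $\FC$ is in fact more careful than the paper's own write-up, which calls $\FC$ directly on $\hr$ and appeals to Lemma~\ref{lem:center_utility} even though the latter's hypothesis $\hr \ge r^{(0.75)}$ is only guaranteed after your factor-of-$4$ inflation; your version cleanly closes that interface.
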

\begin{proof}
Regarding the utility bound, we will only establish that $\norm{\vxs(\calD) - \hvx} \le C' \hr$, which also gives \eqref{eq:func_lip} upon observing that $f_{\calD}$ is $1$-Lipschitz, and $f_{\calD}(\vxs(\calD)) \ge 0.1r^{(0.9)}(\calD)$, due to Lemma~\ref{lem:opt_lb}.

We first run Algorithm~\ref{alg:frf} with parameters $(\eps, \delta) \gets (\frac \eps 2, \frac \delta 2)$, which gives for large enough $C$ (via Lemma~\ref{lem:frf_utility})
\begin{equation}\label{eq:phase1_utility}\frac 1 4 r^{(0.75)} \le \hr \le 4r^{(0.9)},\end{equation}
with probability $\ge 1 - \frac \delta 2$. Next, we run Algorithm~\ref{alg:fc} with this value of $\hr$, and parameters $(\eps, \delta) \gets (\frac \eps 2, \frac \delta 2)$. The privacy of composing these two algorithms now follows from Lemmas~\ref{lem:privacy_finder} and~\ref{lem:fc_priv}, and the runtime follows from Lemma~\ref{lem:frf_utility}, because Algorithm~\ref{alg:frf}'s runtime does not dominate upon inspection.

It remains to argue about the utility, i.e., that $\norm{\vxs(\calD) - \hvx} \le C' \hr $. Conditioned on \eqref{eq:phase1_utility} holding, Lemma~\ref{lem:frf_utility} guarantees that with probability $\ge 1 - \frac \delta 4$, we have that $\norm{\bvx - \vxs(\calD)} \le 3\hr$, as a positively-weighted average of points in $\ball^d(\vxs(\calD), 3\hr)$. Finally, for the given value of $\sigma$ in Algorithm~\ref{alg:frf}, standard Gaussian concentration bounds imply that with probability $\ge 1 - \frac \delta 4$,
\[\norm{\bvx - \hvx} = \norm{\vxi} \le 3\sigma \sqrt{d\log\Par{\frac 4 \delta}} = O\Par{\hr \cdot \frac{\sqrt d \log(\frac 1 \delta)}{n\eps}} = O(\hr) .\]
Thus, $\norm{\vxs(\calD) - \hvx} \le C' \hr$ holds for an appropriate $C'$, except with probability $\delta$.
\end{proof}

We remark that Theorem~\ref{thm:constant_factor} actually comes with the slightly stronger guarantee that we obtain the optimal value for the geometric median objective $f_{\calD}$, up to an additive error scaling as $O(r^{(0.9)})$. In general, while $r^{(0.9)} = O(f_{\calD}(\vxs(\calD)))$ is always true by Lemma~\ref{lem:opt_lb}, it is possible that $r^{(0.9)} \ll f_{\calD}(\vxs(\calD))$ if a small fraction of outlier points contributes significantly to the objective. We also note that for datasets where we have more a priori information on the number of outliers we expect to see, we can adjust the quantile $0.9$ in Theorem~\ref{thm:constant_factor} to be any quantile $> 0.5$ by appropriately adjusting constants.

\section{Boosting Approximations via Stable DP-SGD}\label{sec:boosting}

In this section, we give a DP algorithm that efficiently minimizes the geometric median objective \eqref{eq:gm_def}
over a domain $\ball^d(\bvx, \hr)$, given a dataset $\calD \defeq \{\vx_i\}_{i \in [n]}$. In our final application to the geometric median problem, the optimization domain (i.e., the parameters $\bvx \in \R^d$ and $\hr \in \R_{\ge 0}$) will be privately estimated using Theorem~\ref{thm:constant_factor}, such that with high probability $\hr = O(f_{\calD}(\vxs(\calD)))$ and $\norm{\vx - \vxs(\calD)} \le \hr$. In the meantime, we treat the domain $\ball^d(\bvx, \hr)$ as a public input here.

Our strategy is to use a localization framework given by \cite{FeldmanKT20}, which gives a query-efficient reduction from private DP-SGD to stable DP-SGD executed in phases. Specifically, observe that outputting\footnote{By default, if $\vx = \vx_i$, we let \eqref{eq:sample_grad} evaluate to $\0_d$, which is a valid subgradient by first-order optimality.}
\begin{equation}\label{eq:sample_grad}\frac{\vz - \vx_i}{\norm{\vz - \vx_i}} = \nabla\norm{\cdot - \vx_i}\Par{\vz}\end{equation}
for a uniformly random $i \in [n]$ is unbiased for a subgradient of $f_{\calD}(\vz)$. This leads us to define the following Algorithm~\ref{alg:phased_sgd} patterned off the \cite{FeldmanKT20} framework, whose privacy is analyzed in Section~\ref{ssec:boost_priv} using a custom stability argument, and whose utility is analyzed in Section~\ref{ssec:boost_util}.

\begin{algorithm}[h]
\caption{$\StableDPSGD(\calD, \bvx, \hr, \rho, \delta, \eta, T)$}
\label{alg:phased_sgd}
\textbf{Input:} Dataset $\calD = \{\vx_i\}_{i \in [n]} \subset \R^d$, domain parameters $(\bvx, \hr) \in \R^d \times \R_{\ge 0}$, privacy bound $\rho > 0$, failure probability $\delta \in (0, 1)$, step size $\eta > 0$, step count $T = 2^K - 1 \ge n$ for $K \in \N$
\begin{algorithmic}[1]
\STATE $m \gets 3(\frac {T} n + \log(\frac{8}{\delta}))$\;
\FOR{$k \in [K]$}\label{line:phase_start}
    \STATE $(T^{(k)}, \eta^{(k)}, \sigma^{(k)}) \gets (2^{-k}(T + 1), 4^{-k}\eta, 3^{-k} \frac{(2m + 1)\eta}{\sqrt{\rho}})$\;
    \IF{$k = 1$}
        \STATE $\vz^{(k)}_0 \gets \bvx$\;
        \STATE $\set^{(k)} \gets \ball^d(\vz^{(k)}_0, \hr)$\;
    \ELSE
        \STATE $\vz^{(k)}_0 \gets \hvx^{(k - 1)}$
        \STATE $\set^{(k)} \gets \ball^d(\vz_0^{(k)}, 2\sigma^{(k)} \sqrt{d\log(\frac{4K}{\delta})})$
    \ENDIF 
    \FOR{$0 \le t < T^{(k)}$}
        \STATE $\vg^{(k)}_t \gets \frac{\vz_t^{(k)} - \vx_i}{\norms{\vz_t^{(k)} - \vx_i}}$ for $i \sim \Unif([n])$\;\label{line:sample_index}
        \STATE $\vz_{t + 1}^{(k)} \gets \proj_{\set^{(k)}}(\vz_t^{(k)} - \eta^{(k)} \vg_t^{(k)})$\;
    \ENDFOR
    \STATE $\bvx^{(k)} \gets \frac 1 {T^{(k)}} \sum_{0 \le t < T^{(k)}} \vz_t^{(k)}$\;\label{line:avg_start}
    \STATE $\vxi^{(k)} \sim \Nor(\0_d, (\sigma^{(k)})^2 \id_d)$\;
    \STATE $\hvx^{(k)} \gets \bvx^{(k)} + \vxi^{(k)}$\;\label{line:avg_end}
\ENDFOR\label{line:phase_end}
\STATE {\bf Return:} $\hvx^{(K)}$\;\label{line:return_center}
\end{algorithmic}
\end{algorithm}

Algorithm~\ref{alg:phased_sgd} proceeds in $K \approx \log(T)$ phases. In each phase (loop of Lines~\ref{line:phase_start} to~\ref{line:phase_end}) other than $k = 1$, we define a domain $\set^{(k)}$ centered at the output of the previous phase with geometrically shrinking radius $\propto \sigma^{(k)}$; the domain for phase $k = 1$ is simply $\ball^d(\bvx, \hr)$. After this, we take $T^{(k)}$ steps of SGD over $\set^{(k)}$ with step size $\eta^{(k)}$, and output a noised variant of the average iterate in Lines~\ref{line:avg_start} to~\ref{line:avg_end}.

\begin{remark}\label{rem:practical}
Several steps in Algorithm~\ref{alg:phased_sgd} are used only in the worst-case utility proof, and do not affect privacy. Practical optimizations can be made while preserving privacy guarantees, e.g., removing projections onto the changing domains $\set^{(k)}$ rather than $\set^{(1)}$, which is not used in the privacy proof.

One optimization we found useful in our experiments (described in Section~\ref{sec:experiments}) is replacing the random sampling on Line~\ref{line:sample_index} with deterministic passes through the dataset in a fixed order. By doing so, we know the total number of accesses of any single element is $\le m \defeq \lceil \frac T n \rceil$ (rather than the high-probability estimate in Lemma~\ref{lem:m_small} for the randomized variant in Algorithm~\ref{alg:phased_sgd}). This lets us tighten the noise level $\sigma^{(k)}$ by a fairly significant constant factor, resulting in improved empirical performance.
\end{remark}

\subsection{Privacy of Algorithm~\ref{alg:phased_sgd}}\label{ssec:boost_priv}

In this section, we show that Algorithm~\ref{alg:phased_sgd} satisfies $(\eps, \delta)$-DP for an appropriate choice of $\rho$. When the sample functions of interest are smooth (i.e., have bounded second derivative), \cite{FeldmanKT20} gives a proof based on the \emph{contractivity} of iterates. This is based on the observation that gradient descent steps with respect to a smooth function are contractive for an appropriate step size (see e.g., Proposition 2.10, \cite{FeldmanKT20}). Unfortunately, our sample functions are of the form $\norm{\cdot - \vx_i}$, which are not even differentiable, let alone smooth. Nonetheless, we inductively prove approximate contractivity of Algorithm~\ref{alg:phased_sgd}'s iterates by opening up the analysis and using the structure of the geometric median objective.

Throughout, we fix neighboring $\calD, \calD' \in (\R^d)^n$, and assume without loss of generality they differ in the $n^{\text{th}}$ entry. 
To simplify notation, we let $\calI \in [n]^T$ denote the multiset of $T$ indices sampled in Line~\ref{line:sample_index}, across all phases. 
We prove DP of Algorithm~\ref{alg:phased_sgd} via appealing to Lemma~\ref{lem:dp_close}, where we let $\balg$ denote Algorithm~\ref{alg:phased_sgd}, and we let $\alg$ denote a variant of Algorithm~\ref{alg:phased_sgd} conditioned on $\calI$ containing at most $m \defeq 3(\frac T n + \log(\frac 8 \delta))$ copies of $n$. We first bound the total variation distance between $\alg$ and $\balg$ using Fact~\ref{fact:chernoff}.

\begin{lemma}\label{lem:m_small}
With probability $\ge 1 - \frac \delta 8$, Algorithm~\ref{alg:phased_sgd} yields $\calI$ containing $\le m$ copies of $n$.
\end{lemma}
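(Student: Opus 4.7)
The plan is to observe that the count of index $n$ in the multiset $\calI$ is, by construction, the sum of $T$ independent $\Bern(1/n)$ random variables — one for each uniform draw on Line~\ref{line:sample_index}, aggregated across all $K$ phases (whose lengths $T^{(k)}$ sum to $T$). Hence this count is distributed as $\mathrm{Binomial}(T,1/n)$ with mean $\mu = T/n$, and I can invoke the upper tail bound in Fact~\ref{fact:chernoff} directly.

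To calibrate $\eps$, I would set $(1+\eps)\mu = m$ and solve for $\eps\mu = 2\mu + 3\log(8/\delta)$. In particular $\eps \ge 2$, so $\tfrac{\eps}{2+\eps}\ge \tfrac{1}{2}$ and the Chernoff exponent satisfies
\[
\frac{\eps^2 \mu}{2+\eps} \;=\; \eps\mu \cdot \frac{\eps}{2+\eps} \;\ge\; \frac{\eps\mu}{2} \;\ge\; \frac{3\log(8/\delta)}{2} \;\ge\; \log(8/\delta).
\]
Plugging this into Fact~\ref{fact:chernoff} then gives the desired tail bound of $\delta/8$.

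There is no real obstacle; the only thing worth checking is the independence claim, which follows because the algorithm draws each index in Line~\ref{line:sample_index} fresh from $\Unif([n])$ independently of all previous randomness (the projection steps and Gaussian noise additions do not affect the distribution of which index is sampled next). The choice of the constant $3$ in the definition of $m$ is what makes the estimate $\eps \ge 2$ go through cleanly, which is why that particular constant appears in the setting of $m$ on the first line of Algorithm~\ref{alg:phased_sgd}.
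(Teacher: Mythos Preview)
Your proposal is correct and follows exactly the approach of the paper: identify the count as $\mathrm{Binomial}(T,1/n)$ with mean $\mu=T/n\ge 1$ (using the assumption $T\ge n$) and apply Fact~\ref{fact:chernoff}. The paper's own proof is a single line that simply points to Fact~\ref{fact:chernoff} and the choice of $m$; your write-up just makes the exponent arithmetic explicit.
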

\begin{proof}
In expectation, we have $\frac T n \ge 1$ copies, so the result follows from Fact~\ref{fact:chernoff} and our choice of $m$.
\end{proof}

We will show that $\alg$ is $(\eps, \frac \delta 2)$-DP, upon which Lemmas~\ref{lem:dp_close} and~\ref{lem:m_small} imply that $\balg$ (Algorithm~\ref{alg:phased_sgd}) is $(\eps, \delta)$-DP. To do so, we control the sensitivity of each iterate $\vz_t^{(k)}$, using the following two helper facts.

\begin{fact}[\cite{Rockafellar76}]\label{fact:project_contract}
Let $\set \subset \R^d$ be a compact, convex set. Then for any $\vx, \vy \in \R^d$, we have
\[\norm{\proj_\set(\vx) - \proj_\set(\vy)} \le \norm{\vx - \vy}.\]
\end{fact}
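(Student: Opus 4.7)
The plan is to prove the non-expansivity of Euclidean projection via the standard first-order optimality argument for projection onto a closed convex set. Denoting $\vp \defeq \proj_\set(\vx)$ and $\vq \defeq \proj_\set(\vy)$, I would first invoke the variational characterization of projection: since $\vp$ minimizes $\norm{\vx - \cdot}^2$ over $\set$ and $\set$ is convex, we have $\inprod{\vx - \vp}{\vz - \vp} \le 0$ for every $\vz \in \set$, and analogously $\inprod{\vy - \vq}{\vz - \vq} \le 0$. This is the only nontrivial ingredient and follows by considering the derivative of $t \mapsto \norm{\vx - ((1-t)\vp + t\vz)}^2$ at $t = 0$.

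Next I would specialize these two inequalities by taking $\vz = \vq$ in the first and $\vz = \vp$ in the second, and then add them, which yields
\[
\inprod{(\vx - \vy) - (\vp - \vq)}{\vp - \vq} \ge 0,
\]
or equivalently $\inprod{\vx - \vy}{\vp - \vq} \ge \norm{\vp - \vq}^2$. Applying Cauchy--Schwarz to the left-hand side gives $\norm{\vx - \vy}\cdot\norm{\vp - \vq} \ge \norm{\vp - \vq}^2$, and dividing by $\norm{\vp - \vq}$ (or observing the bound is trivial if $\vp = \vq$) yields the claim $\norm{\proj_\set(\vx) - \proj_\set(\vy)} \le \norm{\vx - \vy}$.

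There is essentially no obstacle here: the result is classical and the only subtlety is ensuring that $\proj_\set$ is well-defined as a singleton map, which follows from compactness (guaranteeing existence of a minimizer of $\norm{\vx - \cdot}$ over $\set$) together with strict convexity of $\norm{\cdot}^2$ (guaranteeing uniqueness when $\set$ is convex). Since the statement is cited to \cite{Rockafellar76}, I would likely just remark that the proof is standard and sketch the two-line variational argument above for self-containedness, rather than restating it in full.
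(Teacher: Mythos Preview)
Your proof is correct and is the standard argument. The paper does not supply its own proof of this fact; it is stated with a citation to \cite{Rockafellar76} and used as a black box, so there is nothing further to compare.
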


\begin{lemma}\label{lem:same_sgd_step}
For any unit vectors $\vu, \vv \in \R^d$, and $a, b > 0$, let $\vx = a\vu$ and $\vy = b\vv$. Then, letting $\vx' \gets (a - \eta)\vu$ and $\vy' \gets (b - \eta)\vv$, we have $\norm{\vx' - \vy'} \le \max(\norm{\vx - \vy}, 3\eta)$.
\end{lemma}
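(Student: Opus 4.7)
The plan is to write $\vx' - \vy'$ as $(\vx - \vy) - \eta(\vu - \vv)$ and expand the squared Euclidean norm to get an identity in terms of $\langle \vu, \vv \rangle$, $a+b$, and $\eta$. This expansion will show that the update is essentially contractive whenever $a + b \ge \eta$, and a crude triangle inequality will handle the remaining regime.

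First, I will compute
\begin{align*}
\norm{\vx' - \vy'}^2 &= \norm{\vx - \vy}^2 - 2\eta\inprod{\vx - \vy}{\vu - \vv} + \eta^2\norm{\vu - \vv}^2,
\end{align*}
and expand the inner product using $\vx = a\vu$, $\vy = b\vv$:
\begin{align*}
\inprod{a\vu - b\vv}{\vu - \vv} &= a - a\inprod{\vu}{\vv} - b\inprod{\vu}{\vv} + b = (a + b)(1 - \inprod{\vu}{\vv}),
\end{align*}
together with $\norm{\vu - \vv}^2 = 2(1 - \inprod{\vu}{\vv})$. Setting $s := 1 - \inprod{\vu}{\vv} \in [0, 2]$, this yields the key identity
\begin{align*}
\norm{\vx' - \vy'}^2 = \norm{\vx - \vy}^2 + 2s\eta\bigl(\eta - (a + b)\bigr).
\end{align*}

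Next I will split into two cases based on the sign of $\eta - (a+b)$. In the contractive regime $a + b \ge \eta$, the second term is nonpositive, so $\norm{\vx' - \vy'} \le \norm{\vx - \vy}$ and we are done. In the remaining regime $a + b < \eta$, I will abandon the identity and appeal directly to the triangle inequality: since $a, b > 0$ and $a + b < \eta$, we have $0 < a < \eta$ and $0 < b < \eta$, so $\norm{\vx'} = \eta - a < \eta$ and $\norm{\vy'} = \eta - b < \eta$. Thus $\norm{\vx' - \vy'} \le \norm{\vx'} + \norm{\vy'} < 2\eta \le 3\eta$, which gives the second branch of the $\max$.

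I do not anticipate a serious technical obstacle; the only care required is handling the signs of $a - \eta$ and $b - \eta$ correctly in the non-contractive regime (where the endpoints have been pushed through the origin along their respective directions). The slack of $3\eta$ rather than $2\eta$ in the statement is convenient since it gives a clean bound that is easy to use inductively in the sensitivity analysis of Algorithm~\ref{alg:phased_sgd}, and my argument in fact yields the tighter $2\eta$ in the non-contractive case.
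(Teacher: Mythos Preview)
Your proof is correct and follows essentially the same approach as the paper: both expand $\norm{\vx'-\vy'}^2 - \norm{\vx-\vy}^2$ to obtain the factor $2\eta(a+b-\eta)(1-\inprod{\vu}{\vv})$, and split into cases according to the sign of $a+b-\eta$. The only minor difference is in the non-contractive regime: the paper bounds $\norm{\vx'-\vy'} \le \norm{\vx-\vy} + 2\eta \le (a+b) + 2\eta \le 3\eta$, whereas you bound $\norm{\vx'} + \norm{\vy'} = (\eta-a)+(\eta-b) < 2\eta$ directly, which as you note is slightly tighter.
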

\begin{proof}
We claim that
\begin{equation}\label{eq:contract_condition}\norm{\vx' - \vy'} \le \norm{\vx - \vy} \iff a + b \ge \eta,\end{equation}
from which the proof follows from observing that if $a + b \le \eta$, then we can trivially bound
$\norm{\vx' - \vy'} \le \norm{\vx - \vy} + 2\eta \le a + b + 2\eta \le 3\eta$.
Indeed, \eqref{eq:contract_condition} follows from a direct expansion:
\begin{align*}
\norm{\vx - \vy}^2 - \norm{\vx' - \vy'}^2 &= a^2 + b^2 - 2ab\inprod{\vu}{\vv} - (a - \eta)^2 - (b - \eta)^2 + 2(a - \eta)(b - \eta)\inprod{\vu}{\vv} \\
&= 2\eta(a + b) - 2\eta^2 - 2\eta(a + b)\inprod{\vu}{\vv} + 2\eta^2\inprod{\vu}{\vv} \\
&= 2\eta(a + b - \eta)(1 - \inprod{\vu}{\vv}).
\end{align*}
Thus, for $\eta \ge 0$ and $\inprod{\vu}{\vv} \ge 0$, we conclude that \eqref{eq:contract_condition} holds.
\end{proof}

\begin{corollary}\label{cor:phase_stable}
For any phase $k \in [K]$ in Algorithm~\ref{alg:phased_sgd}, condition on the value of $\vz_0^{(k)}$, and assume that $\calI$ contains at most $m$ copies of $n$. Then for any $0 \le t < T_k$, the sensitivity of $\vz_t^{(k)}$ is $\le (2m + 1)\eta^{(k)}$.
\end{corollary}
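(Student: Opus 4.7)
The plan is to couple two runs of Algorithm~\ref{alg:phased_sgd} on neighboring datasets $\calD,\calD'$ differing only in the $n^{\mathrm{th}}$ entry, by sharing the random indices $\calI$ used on Line~\ref{line:sample_index}. Conditioning on $\vz_0^{(k)}$ forces both runs to start phase $k$ at the same point. Let $\vz_t, \vz_t'$ denote the iterates of phase $k$ in the two runs, write $s_t \defeq \norm{\vz_t - \vz_t'}$, and let $B_t$ count the number of times index $n$ appears among the first $t$ sampled indices of phase $k$. I would prove by induction on $t$ the invariant: $s_t = 0$ when $B_t = 0$, and $s_t \le (2B_t + 1)\eta^{(k)}$ when $B_t \ge 1$. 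Since $B_t \le m$ throughout under the assumed conditioning, this immediately yields the claimed $(2m+1)\eta^{(k)}$ bound.

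The inductive step splits on the sampled index $i$ at step $t$. When $i \ne n$ (a \emph{good step}), both runs use the same $\vx_i$ and compute unit gradients $(\vz_t - \vx_i)/\norms{\vz_t - \vx_i}$ and $(\vz_t' - \vx_i)/\norms{\vz_t' - \vx_i}$. Shifting both iterates by $\vx_i$ places us in the setting of Lemma~\ref{lem:same_sgd_step}, which, combined with non-expansivity of $\proj_{\set^{(k)}}$ (Fact~\ref{fact:project_contract}), yields $s_{t+1} \le \max(s_t, 3\eta^{(k)})$. If $B_t = 0$ then in fact $\vz_t = \vz_t'$, the sampled gradients coincide exactly, and the update preserves equality so $s_{t+1} = 0$. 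If $B_t \ge 1$ then the inductive hypothesis gives $s_t \le (2B_t+1)\eta^{(k)} \ge 3\eta^{(k)}$, so $s_{t+1} \le s_t$, maintaining the invariant.

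When $i = n$ (a \emph{bad step}), the two gradients are unit vectors in possibly different directions and hence differ by at most $2$ in norm; Fact~\ref{fact:project_contract} then gives $s_{t+1} \le s_t + 2\eta^{(k)}$. This maintains the invariant in both sub-cases: $B_t = 0$ gives $s_{t+1} \le 2\eta^{(k)} \le 3\eta^{(k)} = (2B_{t+1}+1)\eta^{(k)}$, and $B_t \ge 1$ gives $s_{t+1} \le (2B_t+3)\eta^{(k)} = (2B_{t+1}+1)\eta^{(k)}$.

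The main subtlety I anticipate is that Lemma~\ref{lem:same_sgd_step} is only genuinely contractive when $s_t \ge 3\eta^{(k)}$, and can otherwise inflate a good step's sensitivity up to $3\eta^{(k)}$. This is resolved by handling the pre-bad-step regime directly via exact coupling ($s_t = 0$) rather than through the lemma's slack, so that the one-time $3\eta^{(k)}$ inflation is absorbed into the ``$+1$'' offset of the invariant introduced by the first bad step; after that, all subsequent good steps are genuinely non-expansive with respect to $s_t$, and only the at most $m$ bad steps contribute further growth.
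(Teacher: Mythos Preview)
Your proposal is correct and follows essentially the same approach as the paper's proof: couple the two runs via shared random indices, induct on the iteration count using Lemma~\ref{lem:same_sgd_step} plus Fact~\ref{fact:project_contract} for steps with $i \ne n$, and the triangle inequality for steps with $i = n$. Your invariant is in fact slightly cleaner than the paper's $\Phi_t \le \max(2m_t+1,3)\eta$, since by explicitly maintaining $s_t = 0$ when $B_t = 0$ (via exact coupling) you avoid any sloppiness in the transition through the first bad step.
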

\begin{proof}
Throughout this proof only, we drop the iteration $k$ from superscripts for notational simplicity, so we let $T \defeq T^{(k)}$ and $\eta \defeq \eta^{(k)}$, referring to the relevant iterates as $\{\vz_t\}_{0 \le t < T} \defeq \{\vz_t^{(k)}\}_{0 \le t < T^{(k)}}$. We also refer to the index $i$ selected on Line~\ref{line:sample_index} in iteration $0 \le t < T$ by $i_t$.

Fix two copies of the $k^{\text{th}}$ phase of Algorithm~\ref{alg:phased_sgd}, both initialized at $\vz_0$, but using neighboring datasets $\calD$, $\calD'$ differing in the $n^{\text{th}}$ entry. Also, fix a realization of $\{i_t\}_{0 \le t < T}$, such that $i_t = n$ at most $m$ choices of $t$ (note that $m$ is actually a bound on how many times $i_t = n$ across all phases, so it certainly bounds the occurrence count in a single phase). Conditioned on this realization, Algorithm~\ref{alg:phased_sgd} is now a deterministic mapping from $\vz_0$ to the iterates $\{\vz_t\}_{0 \le t < T}$, depending on the dataset used.

Denote the iterates given by the dataset $\calD$ by $\{\vz_t\}_{0 \le t < T}$ and the iterates given by $\calD'$ by $\{\vz'_t\}_{0 \le t < T}$, so that $\vz_0 = \vz'_0$ by assumption. Also, let $\Phi_t \defeq \norm{\vz_t - \vz'_t}$ for all $0 \le t < T$. We claim that for all $0 \le t < T$, 
\begin{equation}\label{eq:sensitivity}\Phi_t \le \max(2m_t + 1, 3)\eta, \text{ where } m_t \defeq \sum_{0 \le s < t} \ind_{i_s = n},\end{equation}
i.e., $m_t$ is the number of times the index $n$ was sampled in the first $t$ iterations of the phase. If we can show \eqref{eq:sensitivity} holds, then we are done because $m_t \le m$ by assumption.

We are left with proving \eqref{eq:sensitivity}, which we do by induction. The base case $t = 0$ is clear. Suppose \eqref{eq:sensitivity} holds at iteration $t$. In iteration $t + 1$, if $m_{t + 1} = m_t + 1$ (i.e., $i_t = n$ was sampled), then \eqref{eq:sensitivity} holds by the triangle inequality and the induction hypothesis, because all gradient steps $\eta \vg_t$ have $\norm{\eta \vg_t} \le \eta$, and projection to $\set$ can only decrease distances (Fact~\ref{fact:project_contract}). Otherwise, let $i_t = i \neq n$ be the sampled index, using the common point $\vx_i \in \calD \cap \calD'$. Now, \eqref{eq:sensitivity} follows from applying Lemma~\ref{lem:same_sgd_step} with
\[\vx \gets \vz_t - \vx_i,\; \vy \gets \vz'_t - \vx_i,\; \vu \gets \frac{\vz_t - \vx_i}{\norm{\vz_t - \vx_i}},\; \vv \gets \frac{\vz'_t - \vx_i}{\norm{\vz'_t - \vx_i}}.\]
In particular, we have that $\norm{\vx - \vy} = \Phi_t$, and $\norm{\vx' - \vy'} \ge \Phi_{t + 1}$ (due to Fact~\ref{fact:project_contract}), following notation from Lemma~\ref{lem:same_sgd_step}. We thus have $\Phi_{t + 1} \le \max(\Phi_t, 3\eta)$, which clearly also preserves \eqref{eq:sensitivity} inductively.
\end{proof}

We can now conclude our privacy proof by applying composition to Corollary~\ref{cor:phase_stable}.

\begin{lemma}\label{lem:sgd_private}
Let $\eps \in [0, 1]$. If $\frac 1 \rho \ge \frac{4\log(\frac 2 \delta)}{\eps^2} + \frac 2 \eps$, Algorithm~\ref{alg:phased_sgd} is $(\eps, \delta)$-DP.
\end{lemma}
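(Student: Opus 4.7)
The plan is to reduce $(\eps,\delta)$-DP of Algorithm~\ref{alg:phased_sgd}, which we denote $\balg$, to a CDP analysis of an idealized variant $\alg$ that conditions on the sampled multiset $\calI$ containing at most $m$ copies of the distinguishing index $n$. By Lemma~\ref{lem:m_small}, the total variation distance between $\balg$ and $\alg$ is at most $\delta/8$, so Lemma~\ref{lem:dp_close} reduces the claim to showing $\alg$ is $(\eps,\delta/2)$-DP (since then $\balg$ is $(\eps, \delta/2 + 4 \cdot \delta/8) = (\eps,\delta)$-DP).

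For $\alg$, I analyze phase $k$ conditionally on the initialization $\vz_0^{(k)}$ (and hence on the projection domain $\set^{(k)}$, which is a deterministic function of $\vz_0^{(k)}$). Corollary~\ref{cor:phase_stable} yields a deterministic bound of $(2m+1)\eta^{(k)}$ on the sensitivity of every iterate $\vz_t^{(k)}$ under this conditioning; by the triangle inequality, the phase average $\bvx^{(k)} = \frac{1}{T^{(k)}}\sum_t \vz_t^{(k)}$ inherits the same sensitivity. Applying Fact~\ref{fact:rdp}(3) with noise scale $\sigma^{(k)} = 3^{-k}(2m+1)\eta/\sqrt{\rho}$ then shows releasing $\hvx^{(k)}$ is $\rho_k$-CDP with
\[
\rho_k = \frac{((2m+1)\cdot 4^{-k}\eta)^2}{2(\sigma^{(k)})^2} = \frac{\rho}{2}\Par{\frac{9}{16}}^k.
\]
Adaptive CDP composition (Fact~\ref{fact:rdp}(1)) bounds the total cost by
\[
\sum_{k=1}^K \rho_k \le \frac{\rho}{2}\sum_{k=1}^\infty \Par{\frac{9}{16}}^k = \frac{9\rho}{14} \le \rho,
\]
so $\alg$ is $\rho$-CDP. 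Note the geometric decay $\sigma^{(k)} \propto 3^{-k}$ was chosen precisely so the variance ratio $(4^{-k})^2/(3^{-k})^2 = (9/16)^k$ is summable.

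To finish, I convert $\rho$-CDP to approximate DP via Fact~\ref{fact:rdp}(2) with $\alpha = 1 + \frac{2\log(2/\delta)}{\eps}$, giving $(\eps',\delta/2)$-DP with
\[
\eps' = \alpha\rho + \frac{1}{\alpha-1}\log\Par{\frac{2}{\delta}} = \rho + \frac{2\rho\log(2/\delta)}{\eps} + \frac{\eps}{2}.
\]
Requiring $\eps' \le \eps$ rearranges to $\rho\bigl(1 + \frac{2\log(2/\delta)}{\eps}\bigr) \le \frac{\eps}{2}$, i.e., $\frac{1}{\rho} \ge \frac{4\log(2/\delta)}{\eps^2} + \frac{2}{\eps}$, which is exactly the stated hypothesis.

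The main subtlety I anticipate is justifying adaptive composition given that each phase's domain $\set^{(k)}$ and initialization $\vz_0^{(k)}$ are data-dependent through the prior releases $\hvx^{(1)},\ldots,\hvx^{(k-1)}$. The resolution is that Corollary~\ref{cor:phase_stable} gives its sensitivity bound conditionally on any fixed $\vz_0^{(k)}$, so the per-phase CDP guarantee holds uniformly over all realizations of previous outputs. Thus phase $k$ can legitimately be viewed as a mechanism taking $(\calD, \hvx^{(1)}, \ldots, \hvx^{(k-1)})$ as input with conditional CDP cost $\rho_k$, which is exactly the setup adaptive composition requires.
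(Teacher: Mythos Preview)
Your proposal is correct and follows essentially the same approach as the paper: reduce to the conditioned variant $\alg$ via Lemma~\ref{lem:m_small} and Lemma~\ref{lem:dp_close}, bound the per-phase CDP cost as $\frac{\rho}{2}(9/16)^k$ using Corollary~\ref{cor:phase_stable} and the Gaussian mechanism, sum over phases to get $\rho$-CDP, and convert via Fact~\ref{fact:rdp}(2) with the same choice $\alpha = 1 + \frac{2\log(2/\delta)}{\eps}$. Your write-up is in fact slightly more explicit than the paper's in verifying the final arithmetic $\eps' \le \eps$ and in spelling out why adaptive composition is justified.
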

\begin{proof}
We claim that Algorithm~\ref{alg:phased_sgd} satisfies $\rho$-CDP, conditioned on $\calI$ containing at most $m$ choices of $t$ (we denote this conditional variant by $\alg$). By applying the second part of Fact~\ref{fact:rdp} with $\alpha \gets \frac{2\log(\frac 2 \delta)}{\eps} + 1$, this implies that $\alg$ is $(\eps, \frac \delta 2)$-DP. Because $\alg$ has total variation distance at most $\frac \delta 8$ to Algorithm~\ref{alg:phased_sgd} due to Lemma~\ref{lem:m_small}, we conclude using Lemma~\ref{lem:dp_close} that Algorithm~\ref{alg:phased_sgd} is $(\eps, \delta)$-DP.

We are left to show $\alg$ satisfies $\rho$-CDP. In fact, we will show that for all $k \in [K]$, the output of the $k^{\text{th}}$ phase of $\alg$, i.e., $\hvx^{(k)}$, satisfies $(\frac {16} 9)^{-k} \cdot \frac \rho 2$-CDP (treating the starting iterate $\vz_0^{(k)} = \hvx^{(k - 1)}$ as fixed). Using composition of RDP (the first part of Fact~\ref{fact:rdp}), this implies $\alg$ is $\rho$-CDP as desired.

Finally, we bound the CDP of phase $k \in [K]$. Under $\alg$, we showed in Corollary~\ref{cor:phase_stable} that all iterates of the $k^{\text{th}}$ phase have sensitivity $\le (2m + 1)\eta^{(k)}$. Thus the average iterate $\bvx^{(k)}$ also has sensitivity $\le (2m + 1)\eta^{(k)}$ by the triangle inequality. We can now bound the CDP of the $k^{\text{th}}$ phase using the third part of Fact~\ref{fact:rdp}:
\[\frac{((2m + 1)\eta^{(k)})^2}{2(\sigma^{(k)})^2} =  \frac{((2m + 1)\eta)^2}{((2m + 1)\eta)^2} \cdot 16^{-k} \cdot 9^k  \cdot \frac \rho 2 \le \Par{\frac {16} {9}}^{-k} \cdot \frac \rho 2.\]
\end{proof}

\subsection{Utility of Algorithm~\ref{alg:phased_sgd}}\label{ssec:boost_util}

We now analyze the error guarantees for Algorithm~\ref{alg:phased_sgd} on optimizing the geometric median objective $f_{\calD}$ \eqref{eq:gm_def}. We begin by providing a high-probability bound on the utility guarantees of each single phase.

\begin{lemma}\label{lem:phase_util}
Following notation of Algorithm~\ref{alg:phased_sgd}, we have with probability $\ge 1 - \frac \delta 2$ that 
\[f_{\calD}(\bvx^{(1}) - f_{\calD}(\vxs) \le \frac{\hr^2}{2\eta^{(1)}T^{(1)}} + \frac{\eta^{(1)}}{2} + 4\hr\sqrt{\frac{2\log(\frac{4K}{\delta})}{T^{(1)}}},\]
where $\vxs \defeq \argmin_{\vx \in \ball^d(\bvx, \hr)} f(\vx)$, and
\[f_{\calD}(\bvx^{(k)}) - f_{\calD}(\bvx^{(k - 1)}) \le \frac{2(\sigma^{(k)})^2d\log(\frac{4K}{\delta})}{\eta^{(k)}T^{(k)}} + \frac{\eta^{(k)}}{2} + 8\sigma^{(k)}\log\Par{\frac{4K}{\delta}}\sqrt{\frac{2d}{T^{(k)}}} \text{ for all } 2 \le k \le K.\]
\end{lemma}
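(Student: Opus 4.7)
The plan is to instantiate the standard high-probability analysis of projected SGD for convex $1$-Lipschitz functions in each of the $K$ phases of Algorithm~\ref{alg:phased_sgd}, then union-bound over phases. Three properties are used throughout: (i) the stochastic gradient $\vg_t^{(k)}$ on Line~\ref{line:sample_index} satisfies $\E[\vg_t^{(k)} \mid \vz_t^{(k)}] \in \partial f_{\calD}(\vz_t^{(k)})$ by \eqref{eq:sample_grad}; (ii) $\norms{\vg_t^{(k)}} \le 1$ deterministically, so $f_{\calD}$ is $1$-Lipschitz; and (iii) $f_{\calD}$ is convex.

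Fix a phase $k \in [K]$ and a comparison point $\vy \in \set^{(k)}$. The inequality $\norms{\vz_{t+1}^{(k)} - \vy}^2 \le \norms{\vz_t^{(k)} - \eta^{(k)}\vg_t^{(k)} - \vy}^2$ (from Fact~\ref{fact:project_contract}) and telescoping yield the standard regret bound
\[\sum_{t=0}^{T^{(k)}-1} \inprods{\vg_t^{(k)}}{\vz_t^{(k)} - \vy} \;\le\; \frac{\norms{\vz_0^{(k)} - \vy}^2}{2\eta^{(k)}} + \frac{\eta^{(k)} T^{(k)}}{2}.\]
Applying convexity $f_{\calD}(\vz_t^{(k)}) - f_{\calD}(\vy) \le \inprods{\E[\vg_t^{(k)} \mid \vz_t^{(k)}]}{\vz_t^{(k)} - \vy}$ and defining the martingale difference $\Delta_t^{(k)} \defeq \inprods{\E[\vg_t^{(k)} \mid \vz_t^{(k)}] - \vg_t^{(k)}}{\vz_t^{(k)} - \vy}$ gives
\[\sum_{t=0}^{T^{(k)}-1} [f_{\calD}(\vz_t^{(k)}) - f_{\calD}(\vy)] \;\le\; \frac{\norms{\vz_0^{(k)} - \vy}^2}{2\eta^{(k)}} + \frac{\eta^{(k)} T^{(k)}}{2} + \sum_{t=0}^{T^{(k)}-1} \Delta_t^{(k)}.\]
Since $|\Delta_t^{(k)}| \le 2\,\mathrm{diam}(\set^{(k)})$ (Cauchy-Schwarz with $\norms{\E[\vg_t^{(k)} \mid \vz_t^{(k)}] - \vg_t^{(k)}} \le 2$), Azuma-Hoeffding controls the martingale sum with probability $\ge 1 - \delta/(4K)$ by $2\,\mathrm{diam}(\set^{(k)}) \sqrt{2T^{(k)}\log(4K/\delta)}$. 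Dividing by $T^{(k)}$ and applying Jensen's inequality $f_{\calD}(\bvx^{(k)}) \le \frac{1}{T^{(k)}} \sum_t f_{\calD}(\vz_t^{(k)})$ produces a three-term bound, into which we substitute the appropriate diameter.

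For phase $k = 1$, take $\vy = \vxs$: since $\vz_0^{(1)} = \bvx$ and $\vxs \in \set^{(1)} = \ball^d(\bvx, \hr)$, we have $\norms{\vz_0^{(1)} - \vxs} \le \hr$ and $\mathrm{diam}(\set^{(1)}) = 2\hr$, recovering the first stated bound. For phase $k \ge 2$, take $\vy = \bvx^{(k-1)}$: by construction $\vz_0^{(k)} = \bvx^{(k-1)} + \vxi^{(k-1)}$ with $\vxi^{(k-1)} \sim \Nor(\0_d, (\sigma^{(k-1)})^2 \id_d)$, so $\norms{\vz_0^{(k)} - \bvx^{(k-1)}} = \norms{\vxi^{(k-1)}}$. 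Using $\sigma^{(k-1)} = 3\sigma^{(k)}$ and standard Gaussian tail bounds, with probability $\ge 1 - \delta/(4K)$ we have $\norms{\vxi^{(k-1)}} \le 2\sigma^{(k)}\sqrt{d\log(4K/\delta)}$, placing $\bvx^{(k-1)} \in \set^{(k)}$, and the diameter $4\sigma^{(k)}\sqrt{d\log(4K/\delta)}$ substitutes to produce the second stated bound. A union bound over the $2K$ high-probability events (one Azuma event and one Gaussian concentration event per phase) absorbs the total failure probability into $\delta/2$.

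The main subtlety is ensuring that $\bvx^{(k-1)}$ lies in the current domain $\set^{(k)}$ so that the projected SGD regret bound applies with the intended diameter; this is precisely what the specific radius $2\sigma^{(k)}\sqrt{d\log(4K/\delta)}$ and the geometric shrinkage $\sigma^{(k-1)} = 3\sigma^{(k)}$ are engineered to guarantee via Gaussian concentration on $\vxi^{(k-1)}$.
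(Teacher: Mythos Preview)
Your proposal is essentially the paper's own proof: a per-phase projected SGD regret inequality, Azuma--Hoeffding on the martingale differences $\inprods{\E[\vg_t^{(k)}\mid\vz_t^{(k)}]-\vg_t^{(k)}}{\vz_t^{(k)}-\vy}$ bounded via $2\cdot\mathrm{diam}(\set^{(k)})$, Gaussian concentration on $\vxi^{(k-1)}$ to place the comparison point $\bvx^{(k-1)}$ inside $\set^{(k)}$, and Jensen via convexity to pass to the averaged iterate, followed by a union bound over all $O(K)$ events. One minor slip (which the paper's write-up shares): since $\sigma^{(k-1)}=3\sigma^{(k)}$, the standard Gaussian tail bound yields $\norms{\vxi^{(k-1)}}\le 2\sigma^{(k-1)}\sqrt{d\log(4K/\delta)}=6\sigma^{(k)}\sqrt{d\log(4K/\delta)}$ rather than $2\sigma^{(k)}\sqrt{d\log(4K/\delta)}$, so the constants in the second displayed bound are off by a fixed factor, but the method is unaffected.
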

\begin{proof}
First, with probability $\ge 1 - \frac \delta {4}$, we have
\[\norm{\vxi^{(k)}} \le 2\sigma^{(k)}\sqrt{d\log\Par{\frac{4K}{\delta}}} \text{ for all } k \in [K],\]
by standard Gaussian concentration. Thus, $\bvx^{(k - 1)} \in \set^{(k)}$ for all $2 \le k \le K$, and $\vxs \in \set^{(1)}$, except with probability $\frac \delta 4$.
Next, consider the $k^{\text{th}}$ phase of Algorithm~\ref{alg:phased_sgd}, and for some $0 \le t < T_k$, let us denote 
\[\tvg_t^{(k)} \defeq \frac{\vz_t^{(k)} - \vx_i}{\norm{\vz_t^{(k)} - \vx_i} }\]
where $i \in [n]$ is the random index sampled on Line~\ref{line:sample_index} in the $t^{\text{th}}$ iteration of phase $k$. We also denote 
\[\vg_t^{(k)} \defeq \frac 1 n \sum_{i \in [n]} \frac{\vz_t^{(k)} - \vx_i}{\norm{\vz_t^{(k)} - \vx_i} } .\]
We observe that $\E[\tvg_t^{(k)}] = \vg_t^{(k)}$ for any realization of the randomness in all previous iterations. Now, by the standard Euclidean mirror descent analysis, see e.g., Theorem 3.2 of \cite{Bubeck15}, for any $\vu \in \set^{(k)}$,
\begin{equation}\label{eq:single_iter}
\inprod{\eta^{(k)} \tvg_t^{(k)}}{\vz_t^{(k)} - \vu} \le \frac{\norm{\vz_t^{(k)} - \vu}^2}{2} - \frac{\norm{\vz_{t + 1}^{(k)} - \vu}^2}{2} + \frac{(\eta^{(k)})^2}{2}.
\end{equation}
Here we implicitly used that $\norms{\tvg_t^{(k)}} \le 1$ for all choices of the sampled index $i \in [n]$. Now summing \eqref{eq:single_iter} for all iterations $0 \le t < T^{(k)}$, and normalizing by $\eta^{(k)}T^{(k)}$, we obtain
\begin{align*}
\frac 1 {T^{(k)}} \sum_{0 \le t < T^{(k)}} \inprod{\tvg_t^{(k)}}{\vz_t^{(k)} - \vu} \le \frac{\norm{\vz_0^{(k)} - \vu}^2}{2 \eta^{(k)} T^{(k)}} + \frac{\eta^{(k)}}{2}.
\end{align*}
Next, we claim that with probability $\ge 1 - \frac \delta 4$,
\begin{align*}\frac 1 {T^{(1)}} \sum_{0 \le t < T^{(1)}} \inprod{\vg_t^{(1)} - \tvg_t^{(1)}}{\vz_t^{(1)} - \vu} &\le 4\hr \sqrt{\frac{2\log\Par{\frac{4K}{\delta}}}{T^{(1)}}},\\
\frac 1 {T^{(k)}} \sum_{0 \le t < T^{(k)}} \inprod{\vg_t^{(k)} - \tvg_t^{(k)}}{\vz_t^{(k)} - \vu} &\le 8\sigma^{(k)}\log\Par{\frac{4K}{\delta}}\sqrt{\frac{2d}{T^{(k)}}}
 \text{ for all } 2 \le k \le K.\end{align*}
 In each case, this is because $\inprods{\vg_t^{(k)} - \tvg_t^{(k)}}{\vz_t^{(k)} - \vu}$ is a mean-zero random variable, that is bounded (with probability $1$) by twice the diameter of $\set^{(k)}$. Thus we can bound the sub-Gaussian parameter of their sum, and applying the Azuma-Hoeffding inequality then gives the result. Now, finally by convexity,
\begin{align*}
\frac 1 {T^{(1)}} \sum_{0 \le t < T^{(1)}} \inprod{\vg_t^{(1)}}{\vz_t^{(1)} - \vu} &\ge \frac 1 {T^{(1)}} \sum_{0 \le t < T^{(1)}} f_{\calD} (\vz^{(1)}_t) - f_{\calD}(\vu) \ge f_{\calD} (\bvx^{(1)}) - f_{\calD}(\vu), \\
\frac 1 {T^{(k)}} \sum_{0 \le t < T^{(k)}} \inprod{\vg_t^{(k)}}{\vz_t^{(k)} - \vu} &\ge \frac 1 {T^{(k)}} \sum_{0 \le t < T^{(k)}} f_{\calD} (\vz^{(k)}_t) - f_{\calD}(\vu) \ge f_{\calD} (\bvx^{(k)}) - f_{\calD}(\vu) \text{ for all } 2 \le k \le K.
\end{align*}
 Combining the above three displays, and plugging in $\vu \gets \vxs$ or $\vu \gets \bvx^{(k - 1)}$, now gives the conclusion.
\end{proof}

By summing the conclusion of Lemma~\ref{lem:phase_util} across all phases, we obtain an overall error bound.

\begin{lemma}\label{lem:util_boost}
Following notation of Algorithm~\ref{alg:phased_sgd} and Lemma~\ref{lem:phase_util}, we have with probability $\ge 1 - \delta$ that
\[f_{\calD}(\hvx^{(K)}) - f_{\calD}(\vxs) \le \frac{\hr^2}{16 \eta T} + 19\eta +8\hr\sqrt{\frac{\log(\frac{4K}{\delta})}{T}}  + \frac{1314 T \eta d \log^4(\frac{8K}{\delta})}{\rho n^2}.\]
\end{lemma}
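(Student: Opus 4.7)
The plan is to telescope the per-phase bound of Lemma~\ref{lem:phase_util} across all $K$ phases, then add one final Lipschitz correction to account for the Gaussian noise $\vxi^{(K)}$ injected at the last phase. Since every sample function $\norm{\cdot - \vx_i}$ is $1$-Lipschitz, so is $f_{\calD}$, yielding
\begin{equation*}
f_{\calD}(\hvx^{(K)}) - f_{\calD}(\vxs) \le \norm{\vxi^{(K)}} + \Par{f_{\calD}(\bvx^{(1)}) - f_{\calD}(\vxs)} + \sum_{k=2}^{K}\Par{f_{\calD}(\bvx^{(k)}) - f_{\calD}(\bvx^{(k-1)})}.
\end{equation*}
I would condition on the probability-$\ge 1-\delta/2$ event of Lemma~\ref{lem:phase_util}, which simultaneously controls every phase-wise progress term and the Gaussian concentration $\norm{\vxi^{(k)}} \le 2\sigma^{(k)}\sqrt{d\log(4K/\delta)}$ for all $k \in [K]$; the claimed overall failure $\delta$ then leaves ample slack.

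Next, I would substitute the per-phase bounds from Lemma~\ref{lem:phase_util} and sum the three resulting families of terms. The initialization-error terms form a geometric series: phase $1$ contributes $\hr^2/(2\eta^{(1)}T^{(1)})$, while for $k \ge 2$ the ratio $(\sigma^{(k)})^2/(\eta^{(k)}T^{(k)})$ equals $(8/9)^k \cdot (2m+1)^2\eta/(\rho(T+1))$, a convergent geometric series summing to at most a constant times $(2m+1)^2\eta d\log(4K/\delta)/(\rho T)$. The step-size terms $\sum_k \eta^{(k)}/2 = (\eta/2)\sum_k 4^{-k}$ sum to a constant times $\eta$. Finally, the Azuma-Hoeffding contributions for $k \ge 2$ scale as $(\sqrt{2}/3)^k \cdot (2m+1)\eta\log(4K/\delta)\sqrt{d/(\rho T)}$, again a convergent geometric series; I would then apply $2ab \le a^2+b^2$ to split each such term into one scaling as $\eta$ and one scaling as $(2m+1)^2\eta d\log^2(4K/\delta)/(\rho T)$, which matches the form of the target bound.

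To conclude, I would substitute $m = 3(T/n + \log(8/\delta))$ so that $(2m+1)^2 = O((T/n)^2 + \log^2(1/\delta))$; combined with the Azuma $\log^2(4K/\delta)$ factor, this yields the claimed $\log^4$ power, giving the $O(T\eta d\log^4(8K/\delta)/(\rho n^2))$ term once we use $T \ge n$ to absorb the $\log^2(1/\delta)/T$ piece. The residual Gaussian noise $\norm{\vxi^{(K)}}$ is of much lower order, since $\sigma^{(K)}$ carries a $3^{-K} \le (T+1)^{-\log_2 3}$ factor, so it is dominated by the other contributions. The main obstacle is the bookkeeping of constants (producing the concrete $16$, $19$, and $1314$ in the bound) and ensuring the AM-GM split correctly routes the Azuma cross terms into the step-size and privacy-noise buckets; beyond this, every step is routine summation of bounded geometric series.
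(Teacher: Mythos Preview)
Your proposal is correct and follows essentially the same route as the paper: telescope across phases, add the Lipschitz correction $\norm{\vxi^{(K)}}$, sum the resulting geometric series, substitute $m = 3(T/n + \log(8/\delta))$, and use $2ab \le a^2 + b^2$ to route the cross term into the $\eta$ and $T\eta d/(\rho n^2)$ buckets. The only cosmetic differences are that the paper applies AM-GM \emph{after} substituting $m$ (yielding the split $18\eta + 18T\eta d\log^4/(\rho n^2)$, hence $0.5 + 18 \le 19$ and $1296 + 18 = 1314$), and it union-bounds a fresh $\norm{\vxi^{(K)}}$ event rather than relying on the Gaussian concentration already established inside the proof of Lemma~\ref{lem:phase_util}; your choice to note that $\sigma^{(K)}$ carries the extra $3^{-K}$ factor and is therefore dominated is equally valid.
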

\begin{proof}
Throughout this proof, condition on the conclusion of Lemma~\ref{lem:phase_util} holding, as well as
\[\norm{\vxi^{(K)}} \le 2\sigma^{(K)} \sqrt{d\log\Par{\frac 2 \delta}},\]
both of which hold with probability $\ge 1 - \delta$ by a union bound. Next, by Lemma~\ref{lem:phase_util},
\begin{align*}
f_{\calD}(\hvx_K) - f_{\calD}(\vxs)&= f_{\calD}(\bvx^{(1)}) - f_{\calD}(\vxs) + \sum_{k = 2}^K f_{\calD}(\bvx^{(k)}) - f_{\calD}(\bvx^{(k - 1)}) + f_{\calD}(\hvx_K) - f_{\calD}(\bvx_K) \\
&\le \frac{\hr^2}{2\eta^{(1)}T^{(1)}} + \frac{\eta^{(1)}}{2} + 4\hr\sqrt{\frac{2\log(\frac{4K}{\delta})}{T^{(1)}}} \\
&+ \sum_{k = 2}^K \Par{\frac{2(\sigma^{(k)})^2d\log(\frac{4K}{\delta})}{\eta^{(k)}T^{(k)}} + \frac{\eta^{(k)}}{2} + 8\sigma^{(k)}\log\Par{\frac{4K}{\delta}}\sqrt{\frac{2d}{T^{(k)}}}} + \norm{\vxi^{(K)}} \\
&\le \frac{\hr^2}{16 \eta T} + \frac \eta 2 + 8\hr\sqrt{\frac{\log(\frac{4K}{\delta})}{T}} + \frac{144m^2 \eta d \log(\frac{4K}{\delta})}{\rho T} + \frac{12\sqrt{d}m\eta\log(\frac{4K}{\delta})}{\sqrt{\rho T}}  \\
&\le \frac{\hr^2}{16 \eta T} + \frac \eta 2 + 8\hr\sqrt{\frac{\log(\frac{4K}{\delta})}{T}} + \frac{1296 T \eta d \log^3(\frac{8K}{\delta})}{\rho n^2} + \frac{36\sqrt{dT}\eta\log^2(\frac{8K}{\delta})}{\sqrt{\rho} n}  \\
&\le \frac{\hr^2}{16 \eta T} + 19\eta +8\hr\sqrt{\frac{\log(\frac{4K}{\delta})}{T}}  + \frac{1314 T \eta d \log^4(\frac{8K}{\delta})}{\rho n^2}.
\end{align*}
The third line used Lipschitzness of $f_{\calD}$, the fourth summed parameters using various geometric sequences, the fifth plugged in our value of $m$, and the last split the fifth term using $2ab \le a^2 + b^2$ appropriately.
\end{proof}

By combining Lemmas~\ref{lem:sgd_private} and~\ref{lem:util_boost} with Theorem~\ref{thm:constant_factor}, we obtain our main result on privately approximating the geometric median to an arbitrary multiplicative factor $1 + \alpha$, given enough samples.

\begin{theorem}\label{thm:boost}
Let $\calD = \{\vx_i\}_{i \in [n]} \subset \ball^d(R)$ for $R > 0$, let $0 < r \le 4r^{(0.9)}(\calD)$, and let $(\alpha, \eps, \delta) \in [0, 1]^3$. Suppose that
\[n \ge C \cdot\Par{\frac{\sqrt d}{\alpha\eps}\log^{2.5}\Par{\frac{\log(\frac d {\alpha\delta\eps})}{\delta}}},\]
for a sufficiently large constant $C$. There is an $(\eps, \delta)$-DP algorithm (Algorithm~\ref{alg:phased_sgd} using Theorem~\ref{thm:constant_factor} to compute the parameters $(\bvx, \hr)$) that returns $\hvx$ such that with probability $\ge 1 - \delta$, following notation \eqref{eq:gm_def},
\[f_{\calD}(\hvx) \le (1 + \alpha)f_{\calD}(\vxs(\calD)).\]
The algorithm runs in time
\[O\Par{nd\log\Par{\frac R r}\log\Par{\frac{d\log(\frac R r)}{\alpha\delta\eps}} + \frac{d}{\alpha^2}\log\Par{\frac {\log(\frac d {\alpha\delta\eps})} {\delta}}}.\]
\end{theorem}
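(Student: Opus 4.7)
The plan is to compose the warm start of Theorem~\ref{thm:constant_factor} with the boosting procedure of Algorithm~\ref{alg:phased_sgd}, budgeting $(\eps/2, \delta/2)$ to each stage. First, I will invoke Theorem~\ref{thm:constant_factor} with parameters $(\eps/2, \delta/3)$ to obtain a centerpoint $\hvx_0$ and radius $\hr$ satisfying $\norm{\vxs(\calD) - \hvx_0} \le C'\hr$ and $\hr \le 4 r^{(0.9)}$ with probability at least $1 - \delta/3$. Next, I will feed $(\hvx_0, (C'+1)\hr)$ into Algorithm~\ref{alg:phased_sgd} so that $\vxs(\calD)$ lies in the initial domain $\set^{(1)}$ on that event, and choose $\rho$ just small enough that Lemma~\ref{lem:sgd_private} produces $(\eps/2, \delta/3)$-DP, namely $1/\rho = \Theta(\log(1/\delta)/\eps^{2})$. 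Basic composition then yields the claimed $(\eps, \delta)$-DP.

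For utility, I pick $\eta = c_1 \alpha\hr$ and let $T$ be the smallest integer of the form $2^K - 1$ exceeding $\max(n, c_2\log(K/\delta)/\alpha^{2})$, with $c_1, c_2$ appropriate constants. Plugging into Lemma~\ref{lem:util_boost}, the three ``deterministic'' error terms $\hr^{2}/(16\eta T)$, $19\eta$, and $8\hr\sqrt{\log(4K/\delta)/T}$ each collapse to $O(\alpha\hr)$ by direct verification. The fourth, ``noise'' term $\Theta(T\eta d\log^{4}(K/\delta)/(\rho n^{2}))$ is $O(\alpha\hr)$ precisely when $n^{2} \gtrsim Td\log^{4}(K/\delta)/(\alpha\rho)$; substituting my choices of $T$ and $\rho$, this reduces to $n \gtrsim (\sqrt{d}/(\alpha\eps)) \cdot \mathrm{polylog}(d/(\alpha\delta\eps))$, which matches the hypothesis, since $K = O(\log T)$ contributes only doubly-logarithmic dependence on the problem parameters.

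To convert the additive $O(\alpha\hr)$ bound into a $(1+\alpha)$-multiplicative approximation, I invoke Lemma~\ref{lem:opt_lb} at quantile $\tau = 0.9$, yielding $f_{\calD}(\vxs) \ge 0.1\,r^{(0.9)} \ge \hr/40$, so the output error is $O(\alpha) \cdot f_{\calD}(\vxs)$; running the algorithm with target $\alpha/\Theta(1)$ (absorbing constants) then gives the stated multiplicative guarantee, after a union bound over both stages' failure events. The runtime is the sum of Theorem~\ref{thm:constant_factor}'s cost and the $O(Td) = O((d/\alpha^{2})\log(K/\delta))$ SGD updates, matching the two summands in the theorem. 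The main obstacle is bookkeeping the polylogarithmic factors: the sensitivity count $m = \Theta(T/n + \log(1/\delta))$ enters quadratically in the last term of Lemma~\ref{lem:util_boost}, and one must carefully treat both regimes $T \ge n$ and $T < n$ (respecting that $T$ is rounded up to $2^K-1$) to land on precisely the exponent $2.5$ in the stated sample complexity rather than a larger power.
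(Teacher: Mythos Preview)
Your plan matches the paper's proof almost exactly: compose Theorem~\ref{thm:constant_factor} (with budget $(\eps/2,\delta/2)$) and Algorithm~\ref{alg:phased_sgd} (with $\rho=\eps^2/(32\log(4/\delta))$ so Lemma~\ref{lem:sgd_private} yields $(\eps/2,\delta/2)$-DP), apply Lemma~\ref{lem:util_boost}, and convert additive error $O(\alpha\hr)$ to multiplicative error via Lemma~\ref{lem:opt_lb}. The only substantive difference concerns the obstacle you flag at the end, and your proposed fix (regime analysis on $T$) does not close it.

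With the fixed choice $\eta = c_1\alpha\hr$, $T=\Theta(\log(K/\delta)/\alpha^2)$, and $1/\rho=\Theta(\log(1/\delta)/\eps^2)$, the fourth term of Lemma~\ref{lem:util_boost} is $\Theta(\hr\, d\log^6/(\alpha\eps^2 n^2))$, which forces $n \gtrsim \sqrt d\,\log^3/(\alpha\eps)$ rather than $\log^{2.5}$. The paper instead \emph{optimizes over $\eta$}, balancing $\hr^2/(16\eta T)$ against $19\eta + \Theta(T\eta d\log^5/(\eps^2 n^2))$ to obtain
\[
f_{\calD}(\hvx)-f_{\calD}(\vxs)\;\lesssim\;\hr\sqrt{\frac{\log(K/\delta)}{T}}\;+\;\frac{\hr\sqrt{d\log^5(K/\delta)}}{\eps n},
\]
whose second summand is independent of $T$ because the $T$-factors cancel inside $\sqrt{(\hr^2/T)\cdot(Td\log^5/(\eps^2 n^2))}$. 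That decoupling is what produces exactly $\log^{2.5}$. Two minor slips: Algorithm~\ref{alg:phased_sgd} requires $T\ge n$, so there is no $T<n$ regime; and the $\alpha$ in your condition $n^2\gtrsim Td\log^4(K/\delta)/(\alpha\rho)$ is spurious once $\eta=c_1\alpha\hr$ has been substituted.
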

\begin{proof}
We first apply Theorem~\ref{thm:constant_factor} to compute a $(\bvx, \hr)$ pair satisfying
$
\hr \le C' r^{(0.9)}(\calD)$, $\norm{\bvx - \vxs(\calD)} \le \hr$,
for a universal constant $C'$,
subject to $(\frac \eps 2, \frac \delta 2)$-DP and $\frac \delta 2$ failure probability. We can verify that Theorem~\ref{thm:constant_factor} gives these guarantees within the stated runtime, for a large enough $C$. Next, we call Algorithm~\ref{alg:phased_sgd} with $\rho \gets \frac{\eps^2}{32\log(\frac 4 \delta)}$ and $\delta \gets \frac \delta 2$, which is $(\frac \eps 2, \frac \delta 2)$-DP by Lemma~\ref{lem:sgd_private}, so this composition is $(\eps, \delta)$-DP. 

Denoting $\hvx \defeq \hvx^{(K)}$ to be the output of Algorithm~\ref{alg:phased_sgd}, Lemma~\ref{lem:util_boost} guarantees that with probability $\ge 1 - \frac \delta 2$, 
\[f_{\calD}(\hvx) - f_{\calD}(\vxs(\calD)) \le \frac{\hr^2}{16\eta T} + 19\eta + 8\hr\sqrt{\frac{\log(\frac{16K}{\delta})}{T}} + \frac{5256T\eta d\log^5(\frac{16K}{\delta})}{\eps^2 n^2}, \]
for some choice of $\eta, T$ and our earlier choices of privacy parameters. Optimizing in $\eta$, we have
\begin{align*}
f_{\calD}(\hvx) - f_{\calD}(\vxs(\calD)) \le 12\hr\sqrt{\frac{\log(\frac{16K}{\delta})}{T}} + \frac{37 \hr \sqrt{d\log^5(\frac{16K}{\delta})}}{\eps n}.
\end{align*}
Finally, for a large enough $C$ in the definition of $n$, and $T \ge n + \frac{57600 (C')^2\log(\frac{16K}{\delta})}{\alpha^2}$, we obtain
\[f_{\calD}(\hvx) - f_{\calD}(\vxs(\calD)) \le 12\hr\sqrt{\frac{\log(\frac{16K}{\delta})}{T}} + \frac{37 \hr \sqrt{d\log^5(\frac{16K}{\delta})}}{\eps n} \le \frac{\alpha \hr}{10C'} \le \frac{\alpha r^{(0.9)}}{10} \le \alpha f_{\calD}(\vxs(\calD)).\]
The last inequality used Lemma~\ref{lem:opt_lb}. Now, the runtime follows from combining Theorem~\ref{thm:constant_factor} and the fact that every iteration of Algorithm~\ref{alg:phased_sgd} can clearly be implemented in $O(d)$ time.
\end{proof}

\section{Experiments}\label{sec:experiments}

In this section, we present empirical evidence supporting the efficacy of our techniques. We implement and conduct experiments on Algorithm~\ref{alg:frf} (the radius estimation step of Section~\ref{sec:constant}) and Algorithm~\ref{alg:phased_sgd}, to evaluate how subsampled estimates and DP-SGD respectively improve the performance of our algorithm.\footnote{Our subsampling experiments were performed on a single Google Colab CPU, and our boosting experiments were performed on a personal Apple M4 with 16GB RAM.}

We do not present experiments on Algorithm~\ref{alg:fc}, as our analysis results in loose constants, which in our preliminary experimentation significantly impacted its performance in practice. We leave optimizing the performance of this step as an important step for future work. In our experiments, Algorithm~\ref{alg:phased_sgd} was fairly robust to the choice of initialization, so it is possible that private heuristics may serve as stand-in to this step. Moreover, our Algorithm~\ref{alg:fc} and Section 2.2 of \cite{haghifam2024private} had essentially the same runtime, so we find it in line with our conceptual contribution to focus on evaluating the other two components.

We use two types of synthetic datasets with outliers, described here. To avoid contamination in hyperparameter selection, every experiment is performed with a freshly-generated dataset.

$\gaussiancluster(R, n, d, \sigma, \mathrm{frac}_{\mathrm{in}})$: This dataset is described in Appendix H of \cite{haghifam2024private}. We draw $n_{\mathrm{in}}=\mathrm{frac}_{\mathrm{in}}\,n$ points i.i.d.\ from $\mathcal{N}(\vmu,\sigma^2 \id_d)$ with $\vmu$ uniform on the sphere of radius $\frac R 2$, and $n_{\mathrm{out}}=n-n_{\mathrm{in}}$ outliers uniformly from the Euclidean ball of radius $R$. 

$\heavytailed(\nu, n, d)$: This dataset samples $n$ points in $\mathbb{R}^{d}$ from a zero‐mean multivariate Student’s $t$ distribution with identity scale and degrees of freedom $\nu$.

\subsection{Subsampling}\label{ssec:subsampling}

In this section,\footnote{We provide code for the experiments in this section \href{https://colab.research.google.com/drive/1kPaOIuGEcYJtlBERSxI4pkVr6FYHcM2z?usp=sharing}{here}.} we describe our experiments to show the benefit of subsampling in $\mathsf{FastRadius}$ (Algorithm~\ref{alg:frf}) over $\mathsf{RadiusFinder}$ (Algorithm 1 from \cite{haghifam2024private}) for differentially private estimation of the quantile radius, which is the first step in differentially private estimation of the geometric median. 

In the first experiment (Figure~\ref{fig:exp_gaussian_cluster}), we
set $n=1000$, $d=10$, inlier fraction $\mathrm{frac}_{\mathrm{in}}=0.9$, standard deviation $\sigma=0.1$, and choose an upper bound $R$ from the set $\{0.5,1,2,4,8,10\}$. The dataset is generated as $\gaussiancluster(R, n, d, \sigma,\mathrm{frac}_{\mathrm{in}})$  dataset. We set privacy parameters $\varepsilon=1.0$ and $\delta=10^{-5}$, quantile fraction $\gamma=0.75$,  and for each trial we sample $r_{\min}\sim\Unif([0.005,0.02])$ to randomly initialize our search grid. Since $\gamma < \mathrm{frac}_{\mathrm{in}}$, we estimate the ground‐truth quantile radius $r_{\mathrm{true}} = \sigma\sqrt{d}$, run both algorithms on this dataset, measure the estimated radius $\hat r$ and wall‐clock runtime, and report the mean and standard deviation of the estimation ratio $\hat r/r_{\mathrm{true}}$ and runtime over 100 independent trials.

In the second experiment (Figure~\ref{fig:exp_student_t_dist}), we assess the robustness of $\mathsf{FastRadius}$ and $\mathsf{RadiusFinder}$ to heavy‐tailed data. We set $n=1000$ and $d = 10$ in the $\heavytailed(\nu, n, d)$ dataset with varying degrees of freedom $\nu\in\{2,  4,  6,  8, 10, 12, 14, 16, 18, 20\}$. For each trial,  we sample $r_{\min}\sim\Unif([0.005,0.02])$, set privacy parameters $\varepsilon=1.0$, $\delta=10^{-5}$ and quantile fraction $\gamma=0.75$. We estimate the theoretical quantile radius $r_{\mathrm{true}}=\sqrt{d\,F_{d,\nu}(\gamma)}$ where $F_{d,\nu}$ is the CDF of an $\mathrm{F}(d,\nu)$ distribution, which is the Fisher F-distribution with $d$ and $\nu$ degrees of freedom, execute both algorithms, record $\hat r$ and runtime, and summarize the mean and standard deviation of the ratio $\hat r/r_{\mathrm{true}}$ and runtime across 100 repetitions.

\begin{figure}[htb]
  \centering

  \begin{subfigure}[t]{0.45\textwidth}
    \centering
    \includegraphics[width=\textwidth]{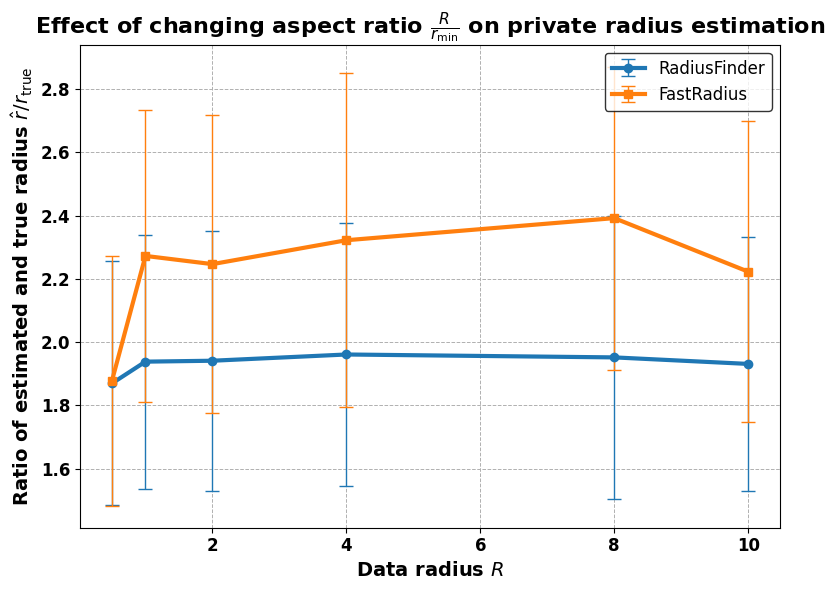}
    \caption{Ratio of the estimated quantile radius to the true radius with varying data radius \(R\).}
    \label{fig:exp_gaussian_cluster}
  \end{subfigure}
  \hfill
  \begin{subfigure}[t]{0.45\textwidth}
    \centering
    \includegraphics[width=\textwidth]{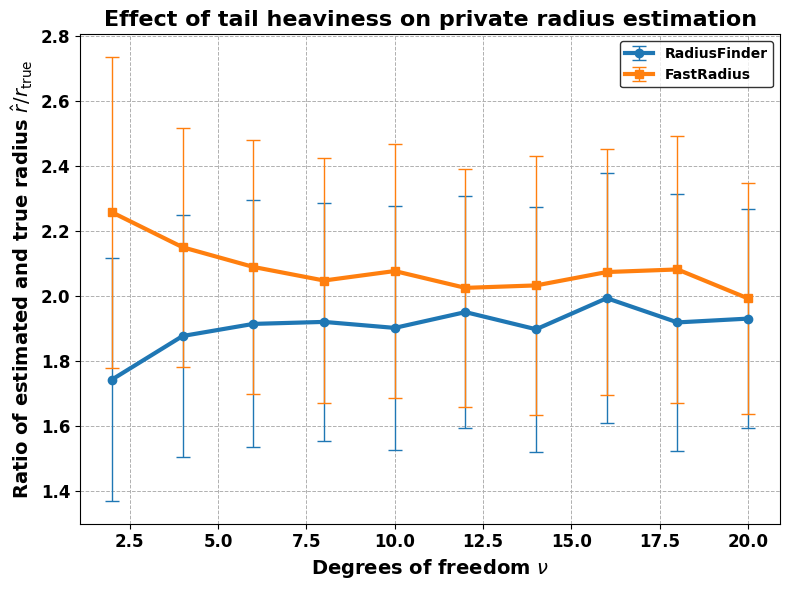}
    \caption{\label{fig:exp_student_t_dist} Ratio of the estimated quantile radius to the true radius with varying degrees of freedom \(\nu\).}
  \end{subfigure}
  \caption{\label{fig:exp_subsampling} Comparison of $\mathsf{RadiusFinder}$ and $\mathsf{FastRadius}$ across different data distributions. Plots averaged across $100$ trials and standard deviations are reported as error bars.}
\end{figure}

We observe that in both cases, across a range of increasingly heavier tails of the distributions, both algorithms achieve reasonable approximation to the true quantile radius, always staying multiplicatively between roughly 1.2 to 3 of the true quantile radius. We further record the average wall-clock time required by both algorithms in Table~\ref{tab:wall_clock_times_subsampling}.  We observe that $\mathsf{FastRadius}$ is significantly faster compared to $\mathsf{RadiusFinder}$, while performing competitively in terms of estimation quality. 

We remark that we also experimented with varying $n \in \{500, 1000, 2000\}$ and $d \in \left\{5, 10, 20\right\}$ and observed qualitatively similar trends for the performance of both algorithms. 

\begin{table}[h]
  \centering
  \caption{\label{tab:wall_clock_times_subsampling}Average wall‐clock time in seconds over 100 trials for each algorithm in each experiment}
  \label{tab:avg-runtime}
  \begin{tabular}{lcc}
    \toprule
    Experiment & $\mathsf{RadiusFinder}$ & $\mathsf{FastRadius}$ \\
    \midrule
    Varying $R$ (Figure~\ref{fig:exp_gaussian_cluster})       & $1.192 \pm 0.047$   & $0.0411 \pm 0.002$   \\
    Varying $\nu$ (Figure~\ref{fig:exp_student_t_dist})     & $1.204 \pm 0.171$ & $0.0413 \pm 0.007$ \\
    \bottomrule
  \end{tabular}
\end{table}

\subsection{Boosting}

In this section,\footnote{We provide code for the experiments in this section \href{https://colab.research.google.com/drive/1bEDNwE5Bge-FABqcbXDLwYcpfzv7tJGX?usp=sharing}{here}.} we evaluate the performance of our boosting algorithm in Section~\ref{sec:boosting} based on a low-pass DP-SGD implementation, compared to the baseline method from \cite{haghifam2024private}. We will in fact evaluate three methods: (1) the baseline method, $\DPGD$ (vanilla DP gradient descent), as described in Algorithms 3 and 6, \cite{haghifam2024private}, but with an optimized step size selected through ablation studies; (2) $\StableDPSGD$, i.e., our Algorithm~\ref{alg:phased_sgd} implemented as written, and (3) $\FixedOrderDPSGD$, a variant of our Algorithm~\ref{alg:phased_sgd} with the last optimization described in Remark~\ref{rem:practical}. We calibrated our noise level in $\FixedOrderDPSGD$ to ensure a fixed level of CDP via a group privacy argument, where we use that each dataset element is deterministically accessed at most $m = \lceil \frac T n \rceil$ times in $\FixedOrderDPSGD$ with $T$ iterations. 

We next describe our hyperparameter optimization for the baseline, $\DPGD$, as implemented in Algorithm 6, \cite{haghifam2024private}. We note that this implementation of $\DPGD$ satisfies $\rho$-CDP for an arbitrary choice of step size $\eta$ (as it scales the noise appropriately), so we are free to tune for the best choice of $\eta$.

Algorithm 3 in \cite{haghifam2024private} recommends a constant step size of $\eta_\mathrm{base}=2\hat{r}\sqrt{\frac{d}{6\rho n^2}}$, where $\hat{r}$ is the estimated radius. However, conventional analyses of projected gradient descent (cf.\ Section 3.1, \cite{Bubeck15}) recommend a step size scaling as a multiple of $\eta_\mathrm{base}=\hat{r} \cdot \frac 1 {\sqrt{T}}$, where $T$ is the iteration count. Moreover, there is theoretical precedent for DP-(S)GD going through a phase transition in step sizes for different regimes of $n$, $T$ (e.g., \cite{FeldmanKT20}, Theorem 4.4). We thus examined multiples of both of these choices of $\eta_{\mathrm{base}}$, i.e., we used step sizes $\eta=\eta_\mathrm{base}\cdot\eta_\mathrm{multiplier}$ with multipliers $\eta_\mathrm{multiplier} \in \{0.25, 0.5, 0.75, 1, 1.25, 1.5, 1.75\}$ and $\eta_\mathrm{base} \in \{2\hat{r}\sqrt{\frac{d}{6\rho n^2}}, \hat{r} \cdot \frac 1 {\sqrt{T}}\}$ in an ablation study, across all datasets appearing in our experiments. 

Our results indicated that the $\eta_\mathrm{multiplier}$ depends significantly on dataset size, in that larger datasets benefit from higher $\eta_\mathrm{multiplier}$. We observed that if we chose $\eta_\mathrm{base}=2\hat{r}\sqrt{\frac{d}{6\rho n^2}}$, as $n$ increases, using larger $\eta_\mathrm{multiplier}$ values consistently reduced optimization error, but with diminishing returns. However, using $\hat{r} \cdot \frac 1 {\sqrt{T}}$ as the base step size yielded significantly more stable performance across multiple scales of $n, T$, compared to the recommendation in \cite{haghifam2024private}, suggesting this is the correct scaling in practice. Our findings were that $\DPGD$ yielded the consistently best performance with $\eta_{\textup{multiplier}} = 1$ and $\eta_{\textup{base}} = \hat{r} \cdot \frac 1 {\sqrt{T}}$.\footnote{Larger step size multipliers yielded better performance on $\gaussiancluster$ data, but led to large amounts of instability on $\heavytailed$ data. We chose the largest multiplier that did not result in significant instability on any dataset.}


We now describe our setup. In all our experiments, we set $d = 50$, $\rho = 0.5$, and vary $n \in \{100, 1000, 10000\}$. For the $\gaussiancluster$ dataset, we set $\sigma = 0.1$ and vary the bounding radius $R \in \{25, 50, 100\}$. We set our estimated initial radius $\hr = 20\sigma \sqrt{d}$ and initialize all algorithms at a uniformly random point on the surface of $\ball^d(0.75\hr)$.\footnote{We chose a relatively pessimistic multiple of $\hr$ to create a larger initial loss and account for estimation error.} For the $\heavytailed$ dataset, we use the same values of $d$, $\rho$, and the same range of $n$. We vary $\nu \in \{2.5, 5.0, 10.0\}$, set our estimated initial radius $\hr = 20\sqrt{dF_{d,\nu}(0.75)}$ to be consistent with Section~\ref{ssec:subsampling}, and again initialize randomly on the surface of $\ball^d(0.75\hr)$.

In our first set of experiments (Figures~\ref{fig:gc_sgd} and~\ref{fig:ht_sgd}), we used the middle ``scale'' parameter, i.e., $R = 50$ for the $\gaussiancluster$ dataset and $\nu = 5.0$ for the $\heavytailed$ dataset, varying $n$ only. We report the performance of the three evaluated methods, plotting the passes over the dataset used by the excess error. Our error metric is $\frac 1 {\hr} \cdot (f_{\calD}(\hvx) - f_{\calD}(\hvx))$, i.e., a multiple of the ``effective radius'' used in the experiment. This is a more reflective performance metric than the corresponding multiple of $f_{\calD}$, as our algorithms achieve this bound (see discussion after Theorem~\ref{thm:constant_factor}), and $f_{\calD} \ge \hr$ for our datasets due to outliers. 

\begin{figure}[htb]
  \centering

  \begin{subfigure}[t]{0.3\textwidth}
    \centering
    \includegraphics[width=\textwidth]{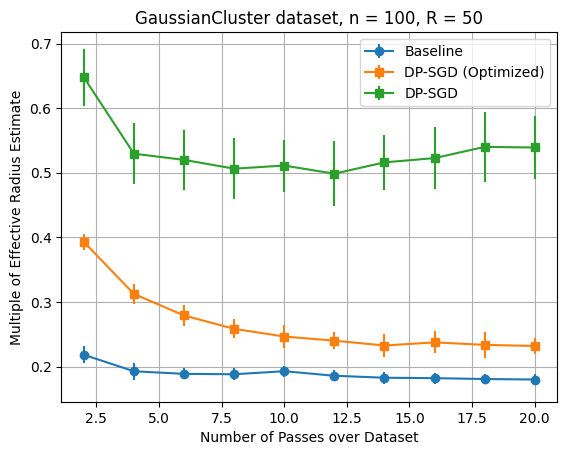}
    \caption{$n = 100$}
  \end{subfigure}
  \hfill
  \begin{subfigure}[t]{0.3\textwidth}
    \centering
    \includegraphics[width=\textwidth]{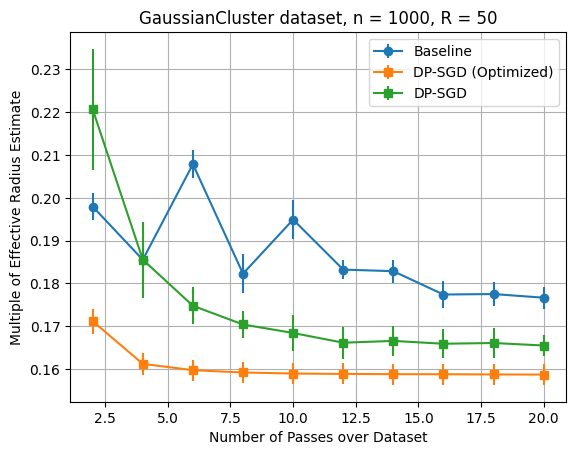}
    \caption{$n = 1000$}
  \end{subfigure}
  \hfill
  \begin{subfigure}[t]{0.3\textwidth}
    \centering
    \includegraphics[width=\textwidth]{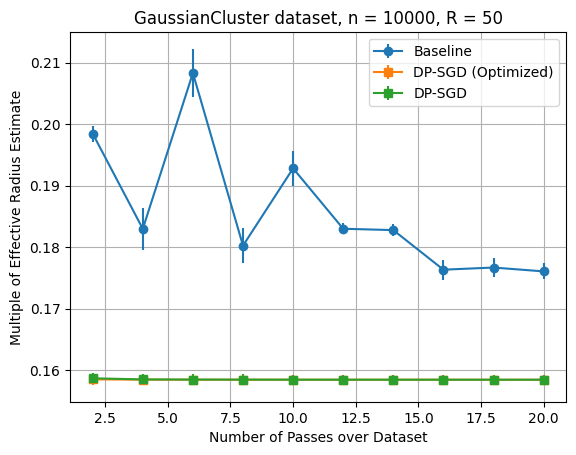}
    \caption{$n = 10000$}
  \end{subfigure}
  \caption{\label{fig:gc_sgd} Comparison of $\DPGD$, $\StableDPSGD$, and $\FixedOrderDPSGD$ across $\gaussiancluster$ data over $\R^{50}$, varying $n$. Plots averaged across $20$ trials and standard deviations are reported as error bars.}
\end{figure}

Across $\gaussiancluster$ datasets of size $n \in \{100, 1000, 10000\}$, we found that $\FixedOrderDPSGD$ consistently outperformed $\StableDPSGD$, and consistently outperformed the baseline by a significant margin once dataset sizes were large enough. As the theory predicts, the gains of stochastic methods in terms of error-to-pass ratios are more stark when dataset sizes are larger, reflecting the superlinear gradient query complexity (each requiring one pass) that $\DPGD$ needs to obtain the optimal utility.

\begin{figure}[htb!]
  \centering
  \begin{subfigure}[t]{0.3\textwidth}
    \centering
    \includegraphics[width=\textwidth]{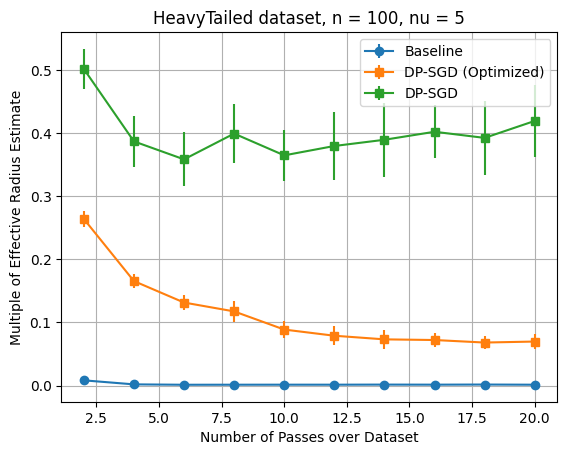}
    \caption{$n = 100$}
  \end{subfigure}
  \hfill
  \begin{subfigure}[t]{0.3\textwidth}
    \centering
    \includegraphics[width=\textwidth]{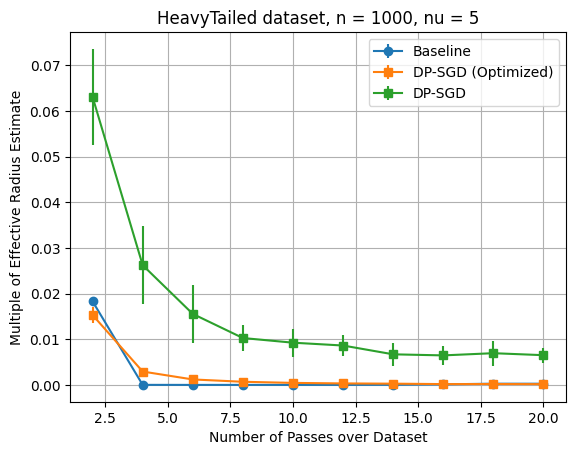}
    \caption{$n = 1000$}
  \end{subfigure}
  \hfill
  \begin{subfigure}[t]{0.3\textwidth}
    \centering
    \includegraphics[width=\textwidth]{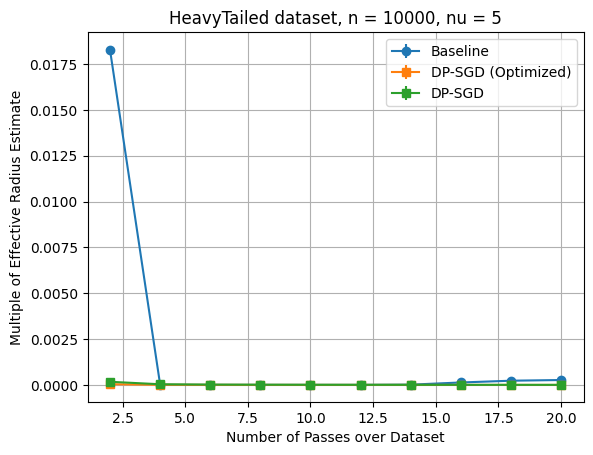}
    \caption{$n = 10000$}
  \end{subfigure}
  \caption{\label{fig:ht_sgd} Comparison of $\DPGD$, $\StableDPSGD$, and $\FixedOrderDPSGD$ across $\heavytailed$ data over $\R^{50}$, varying $n$. Plots averaged across $20$ trials and standard deviations are reported as error bars.}
\end{figure}

We next present our comparisons for the $\heavytailed$ dataset. Again, our (optimized) method led to similar or better performance than the baseline for larger $n$. We suspect that the improved performance of the baseline owes to the relative ``simplicity'' of this dataset, e.g., it is rotationally symmetric around the population geometric median, and this is likely to be reflected in a sample.

In our second set of experiments (Figures~\ref{fig:gc_scale_sgd} and~\ref{fig:ht_scale_sgd}), we fixed the size of the dataset at $n = 1000$, varying the scale parameter ($R$ for $\gaussiancluster$ and $\nu$ for $\heavytailed$). The relative performance of our evaluated algorithms was essentially unchanged across the parameter settings we considered.

\begin{figure}[htb]
  \centering

  \begin{subfigure}[t]{0.3\textwidth}
    \centering
    \includegraphics[width=\textwidth]{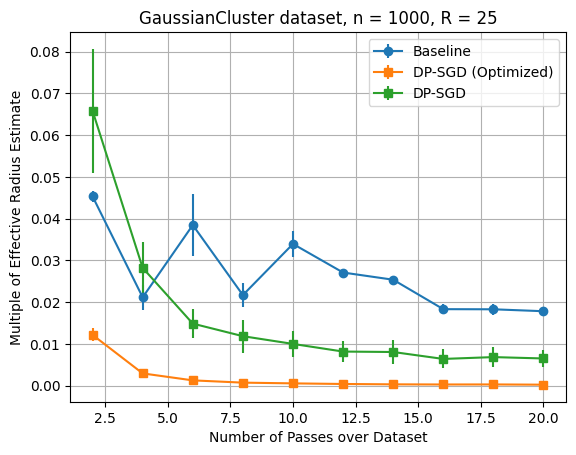}
    \caption{$R = 25$}
  \end{subfigure}
  \hfill
  \begin{subfigure}[t]{0.3\textwidth}
    \centering
    \includegraphics[width=\textwidth]{clusterdpsgd/GaussianCluster_dataset,_n=1000,_R=50.png}
    \caption{$R = 50$}
  \end{subfigure}
  \hfill
  \begin{subfigure}[t]{0.3\textwidth}
    \centering
    \includegraphics[width=\textwidth]{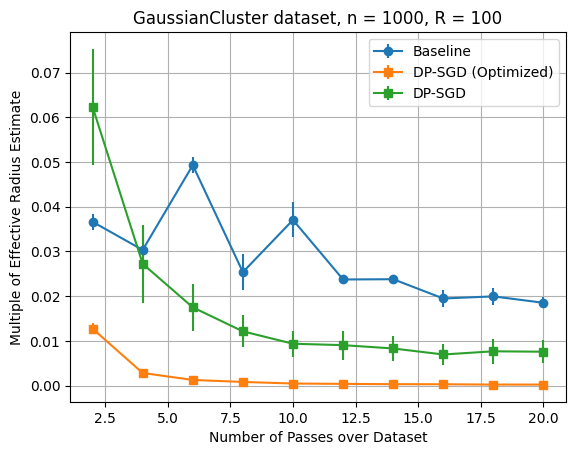}
    \caption{$R = 100$}
  \end{subfigure}
  \caption{\label{fig:gc_scale_sgd} Comparison of $\DPGD$, $\StableDPSGD$, and $\FixedOrderDPSGD$ across $\gaussiancluster$ data over $\R^{50}$, varying $R$. Plots averaged across $20$ trials and standard deviations are reported as error bars.}
\end{figure}

\begin{figure}[htb]
  \centering

  \begin{subfigure}[t]{0.3\textwidth}
    \centering
    \includegraphics[width=\textwidth]{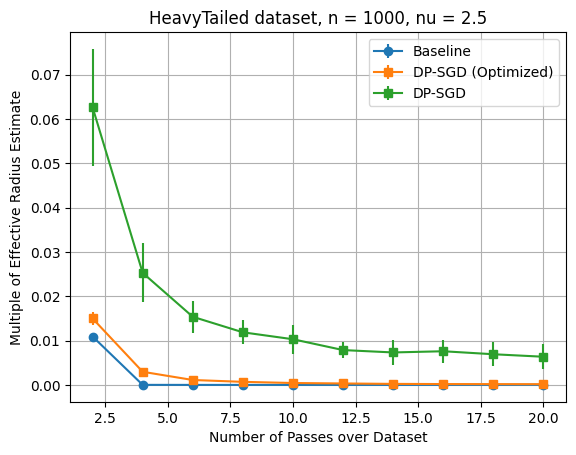}
    \caption{$\nu = 2.5$}
  \end{subfigure}
  \hfill
  \begin{subfigure}[t]{0.3\textwidth}
    \centering
    \includegraphics[width=\textwidth]{HeavyTailedsgd/HeavyTailed_dataset,_n=1000,_nu=5.png}
    \caption{$\nu = 5$}
  \end{subfigure}
  \hfill
  \begin{subfigure}[t]{0.3\textwidth}
    \centering
    \includegraphics[width=\textwidth]{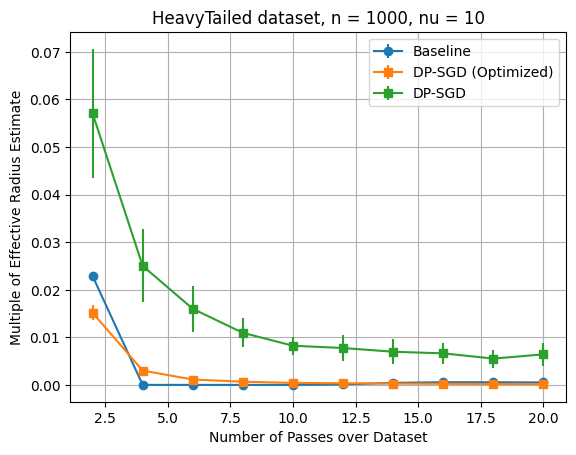}
    \caption{$\nu = 10$}
  \end{subfigure}
  \caption{\label{fig:ht_scale_sgd} Comparison of $\DPGD$, $\StableDPSGD$, and $\FixedOrderDPSGD$ across $\heavytailed$ data over $\R^{50}$, varying $\nu$. Plots averaged across $20$ trials and standard deviations are reported as error bars.}
\end{figure}

Finally, we remark that one major limitation of our evaluation is that full-batch gradient methods such as $\DPGD$ can be implemented with parallelized gradient computations, leading to wall-clock time savings. In our experiments, $\DPGD$ often performed better than $\FixedOrderDPSGD$ in terms of wall-clock time (for the same estimation error), even when it incurred significantly larger pass complexities. On the other hand, we expect the gains of methods based on DP-SGD to be larger as the dataset size and dimension $(n, d)$ grow. There are interesting natural extensions towards realizing the full potential of private optimization algorithms in practice, such as our Algorithm~\ref{alg:phased_sgd},  e.g., the benefits of using adaptive step sizes or minibatches, which we believe are important and exciting future directions.

\section*{Acknowledgments}

We thank the Texas Advanced Computing Center (TACC) for computing resources used in this project.

\newpage

\bibliographystyle{alpha}
\bibliography{refs}

\newpage

\appendix

\section{Discussion of \cite{haghifam2024private} runtime}\label{app:hsu_discuss}

We give a brief discussion of the runtime of the \cite{haghifam2024private} algorithm in this section, as the claimed runtimes in the original paper do not match those described in Section~\ref{sec:introduction}. As the \cite{haghifam2024private} algorithm is split into three parts (the first two of which correspond to the warm start phase and the third of which corresponds to the boosting phase), we discuss the runtime of each part separately.

\paragraph{Radius estimation.} The radius estimation component of \cite{haghifam2024private} corresponds to Algorithm 1 and Section 2.1 of the paper. The authors claim a runtime of $O(n^2 \log(\frac R r))$ for this step due to need to do pairwise distance comparisons on a dataset of size $n$, for $O(\log(\frac R r))$ times in total. However, we believe the runtime of this step should be $O(n^2 d \log(\frac R r))$, accounting for the $O(d)$ cost of each comparison.

\paragraph{Centerpoint estimation.} The centerpoint estimation component of \cite{haghifam2024private} corresponds to Algorithm 2 and Section 2.2 of the paper. We agree with the authors that this algorithm runs in time $O(nd\log(\frac R r))$, and in particular, this step does not dominate any runtime asymptotically.

\paragraph{Boosting.} The boosting component of \cite{haghifam2024private} corresponds to Algorithms 3 and 4 and Section 3 of the paper. The authors provide two different boosting procedures (based on gradient descent and cutting-plane methods) and state their runtimes as $\tO(n^2 d)$ and $\tO(nd^2 + d^{2 + \omega})$, where $\omega < 2.372$ is the current matrix multiplication exponent \cite{AlmanDWXXZ25}. We agree with the runtime analysis of the cutting-plane method; however, we believe there is an additive $\tO(n^3\eps^2)$ term in the runtime of gradient descent. This follows by noting that Algorithm 3 uses $\approx \frac{n^2\eps^2}{d}$ iterations, each of which takes $O(nd)$ time to implement.

\end{document}